\def\eqref#1{equation~\ref{#1}}
\def\1{\bm{1}}
\DeclareMathAlphabet{\mathsfit}{\encodingdefault}{\sfdefault}{m}{sl}
\SetMathAlphabet{\mathsfit}{bold}{\encodingdefault}{\sfdefault}{bx}{n}
\newcommand{\E}{\mathbb{E}}
\DeclareMathOperator*{\argmin}{arg\,min}
\newcommand{\mL}{\mathcal{L}}
\newcommand{\interior}[1]{%
  {\kern0pt#1}^{\mathrm{o}}%
}
\newcommand{\intsim}{\interior{\Sigma}}
\newcommand{\Xspace}{\mathcal{X}}
\newcommand{\prob}{\mathcal{P}(\Xspace)}
\newcommand{\citetinside}[2]{#1 \citep{#2}}
\newcommand{\measurable}{\mathcal{M}(\Xspace)}
\newcommand{\meanZero}{\mathcal{W}(\Xspace)}
\newcommand{\lebesgueX}{\lambda(\Xspace)}
\newtheorem{theorem}{Theorem}
\newtheorem{example}{Example}
\newtheorem{definition}{Definition}
\newtheorem{lemma}{Lemma}
\newtheorem{corollary}{Corollary}
\newtheorem{assumptions}{Assumption}
\newtheorem{remark}{Remark}
\newtheorem{conjecture}{Conjecture}
\newtheorem{proposition}{Proposition}
\newtheorem{customassumption}{Assumption}
\renewcommand{\emph}[1]{\textit{#1}} 
\title{Common Functional Decompositions Can Mis-attribute Differences in Outcomes Between Populations}
\author{%
  Manuel Quintero \\ MIT IDSS \\ mquint@mit.edu
  \And
  William T. Stephenson \\ MIT Lincoln Laboratory\thanks{DISTRIBUTION STATEMENT A. Approved for public release. Distribution is unlimited.
} \\ william.stephenson@ll.mit.edu
  \And
  Advik Shreekumar \\ MIT Economics \\ adviks@mit.edu
  \And
  Tamara Broderick \\ MIT EECS \\ tbroderick@mit.edu
}
\renewcommand{\emph}[1]{\textit{#1}} 
\begin{document}
\maketitle
\begin{abstract}
    In science and social science, we often wish to explain why an outcome is different in two populations. 
    For instance, if a jobs program benefits members of one city more than another, is that due to differences in program participants (particular covariates) or the local labor markets (outcomes given covariates)? The Kitagawa-Oaxaca-Blinder (KOB) decomposition is a standard tool in econometrics that explains the difference in the mean outcome across two populations. 
    However, the KOB decomposition assumes a linear relationship between covariates and outcomes, while the true relationship may be meaningfully nonlinear. 
    Modern machine learning boasts a variety of nonlinear functional decompositions for the relationship between outcomes and covariates in one population. 
    It seems natural to extend the KOB decomposition using these functional decompositions. 
    We observe that a successful extension should not attribute the differences to covariates — or, respectively, to outcomes given covariates — if those are the same in the two populations. 
    Unfortunately, we demonstrate that, even in simple examples, two common decompositions — functional ANOVA and Accumulated Local Effects — can attribute differences to outcomes given covariates, even when they are identical in two populations. 
    We provide a characterization of when functional ANOVA misattributes, as well as a general property that any discrete decomposition must satisfy to avoid misattribution. 
    We show that if the decomposition is independent of its input distribution, it does not misattribute. 
    We further conjecture that misattribution arises in any reasonable additive decomposition that depends on the distribution of the covariates.
\end{abstract}

\section{Introduction}
\label{Sec:intro}
\textbf{Motivating Examples.} 
\begin{enumerate}
    \item A worker at a government health department is reviewing patient mortality rates (\(Y\)) at two hospitals, H and K.
    He notices that the mortality rate is lower in hospital K (\(\mathbb{E}_{K}[Y] < \mathbb{E}_{H}[Y]\)).    
    Is mortality lower because K receives patients who are easier to treat?
    Or is K more effective at providing care?
    If he can determine which explanation is more accurate, he may be able to better allocate training or resources across the two hospitals.
    
    \item The mayor of City K compares the results of a job training program to a similar one in City H.     
    She notices that program graduates in her city have lower post-program income (\(Y\)) than those in City H (\(\mathbb{E}_{K}[Y] < \mathbb{E}_{H}[Y]\)). 
    Is income lower because program graduates in City K are meaningfully different from those in City H?
    Or do jobs in City K tend to pay workers less than jobs in City H?
    If she can figure out why this difference occurs, perhaps she can modify the job training program or its recruitment strategy to make it more effective.    
\end{enumerate}
Many scientific questions reduce to similar comparisons between two populations. 
After observing differences, analysts often want to ask why these differences occur. 
One reason might be that the populations differ on meaningful traits. 
In our first example, perhaps the distribution of \(X\) is unequal: say, both hospitals provide the same standard of care, but hospital K's patient population is healthier and naturally has lower mortality rates. 
In this case, covariates \(X\) drive the difference. 
Or perhaps the patients in both hospitals are similar, but the medical staff in hospital K are particularly skilled at treating serious conditions, such as pneumonia or heart attacks. 
In this case, outcomes given covariates \(Y \mid X\) drive the difference. 
A useful explanation for mean differences between populations would distinguish between these possibilities, as well as describe which aspects of the covariates or outcomes given covariates explain the difference.

The Kitagawa-Oaxaca-Blinder (KOB) decomposition \citep{kitagawaComponentsDifferenceTwo1955, oaxacaMaleFemaleWageDifferentials1973, blinderWageDiscriminationReduced1973} is widely used in the econometrics literature to solve exactly this problem. 
The KOB decomposition separates a difference of means into components that depend on the distribution of covariates, \(X\), and those that depend on the conditional expectation of outcomes given covariates, \(\mathbb{E}[Y | X]\). 
However, it relies on parametric linear models of \(\mathbb{E}[Y | X]\), which are likely inadequate to describe the complex and heterogeneous relationships that may arise in practice \citep{fortinDecompositionMethodsEconomics2011,bach2024heterogeneity}. 
A natural generalization of the KOB decomposition would allow for non-linear or nonparametric models for \(\mathbb{E}[Y | X]\). 
Such an approach could account for shifts in the distribution of \(X\) through a generic step-wise transformation that moves the distribution of \(X\) from population H to population K. 
Importance can be assigned to individual features in the change in conditional expectation \(\mathbb{E}[Y | X]\) through the use of additive functional decomposition methods.

Such a generalization requires a choice of the functional decomposition.
Fortunately, modern machine learning offers multiple options. 
For example, the functional ANOVA (FANOVA) \citep{stoneUsePolynomialSplines1994,huangProjectionEstimationMultiple1998,hooker2004discovering,hookerGeneralizedFunctionalANOVA2007,agrawalSKIMFAKernelHighDimensional2021} and Accumulated Local Effects (ALE) \citep{apleyVisualizingEffectsPredictor2020} decompositions have been widely used in sensitivity analysis \citep{chastaingGeneralizedHoeffdingSobolDecomposition2012, antoniadisRandomForestsGlobal2021}, machine learning interpretability \citep{lengerichPurifyingInteractionEffects, limmerNeuralANOVA2024}, finance \citep{liang2022timesequencing, belhadi2021ensemble}, and environmental and climate sciences \citep{huangBeyondPredictionIntegrated2023, peichlMachinelearningMethodsAssess2021, hillModellingClimateUsing2023}. 

The success of such decompositions makes them seem like natural choices for use in explaining a difference in means. 
However, we demonstrate that common functional decompositions---the FANOVA and ALE---are ill-suited for this task in that they can misattribute differences stemming from a changing \(X\) to differences from changing \(Y \mid X\). 
Throughout, we will use ``misattribution'' as a shorthand for differences in \(X\) attributed to \(Y \mid X\).
We characterize when FANOVA makes this misattribution and provide a characterization of when a general decomposition will misattribute in discrete settings (i.e., when the covariate space is countable).
In particular, we show that misattribution occurs whenever the functional decomposition depends on the input covariate distribution.
We conjecture and partially prove that this result holds in continuous settings as well.

The remainder of this paper is structured as follows. 
In Section~\ref{sec:related}, we review the Kitagawa-Oaxaca-Blinder (KOB) decomposition and discuss functional decomposition methods commonly used in machine learning, such as FANOVA and ALE. 
In Section~\ref{sec:kob_extension}, we define our generalized decomposition framework for difference in means, extending the KOB decomposition to non-linear models. 
In Section~\ref{sec:failure}, we show through simple examples that FANOVA and ALE misattribute effects, and we characterize when FANOVA fails in practical cases of interest. 
In Section~\ref{sec:mainthm}, we provide a general characterization of when a functional decomposition misattributes effects. 
Finally, in Section~\ref{sec:conclusion}, we discuss the implications of our findings.

\section{Related work and notation} 
\label{sec:related}

\subsection{Notation}

Throughout this work, we let \(\Xspace \subseteq \mathbb{R}^d\) denote the feature space of the column random vector \(X = (X_1, \dots, X_d)^T\).
In general, we assume \(\Xspace\) is a subset of \(\mathbb{R}^d\) (continuous case), except for when we explicitly state that \(\Xspace\) is countable (discrete case).
We write \(x = (x_1,\dots,x_d)^T \in \Xspace\) for a realization of \(X\). 
Uppercase letters (\(X, X_i\)) thus denote random variables, while lowercase letters (\(x, x_i\)) denote specific realizations of these.
We denote by \(Y \in \mathbb{R}\) the outcome.
When referring to distributions over the joint \((X,Y)\), we use capital letters such as \(H\) or \(K\). 
Subscripts indicate marginals or conditionals: \(H_X\) is the marginal distribution of \(X\) under \(H\); \(H_i\) is the marginal distribution of the \(i\)-th coordinate \(X_i\); and \(H_{1:i}\) is the joint distribution of \((X_1, \dots, X_i)\).
Probability densities or mass functions are denoted by lowercase letters, such as \(h(x)\) or \(k(x)\).
All probability measures are defined on the Borel \(\sigma\)-algebra of \(\Xspace\) in the continuous case or on the power set of \(\Xspace\) in the discrete case.

\subsection{Kitagawa-Oaxaca-Blinder decomposition}
\label{sec:KOB}
The Kitagawa-Oaxaca-Blinder (KOB) decomposition provides a framework for explaining differences in means between two populations by decomposing them into components attributable to differences in covariates and conditional expectations. 
In its original formulation, KOB assumes the covariate space is \(\Xspace = \mathbb{R}^d\) and that the covariates \(X\) have a linear relationship with the outcome \(Y \in \mathbb{R}\):
\[
    \mathbb{E}_{H_{Y \mid X}}[Y \mid X] = X^T \beta_H \quad \text{and} \quad \mathbb{E}_{K_{Y \mid X}}[Y \mid X] = X^T \beta_K,
\]
where \( H_{Y \mid X} \) is the conditional distribution of \( Y \mid X \) in population \(H\), and similarly for population \(K\).
The vectors \( \beta_H,  \beta_K \in \mathbb{R}^d\) are the regression coefficients defining the linear relationship \( \E[Y \mid X] \) for each population.
The KOB decomposes the difference \(\mathbb{E}_K[Y] - \mathbb{E}_H[Y]\) as
\begin{align}
    \label{eqn:generic decomp}
    & \underbrace{\mathbb{E}_{K_X}[\mathbb{E}_{K_{Y \mid X}}[Y \mid X] - \mathbb{E}_{H_{Y \mid X}}[Y \mid X]]}_{\text{\(Y \mid X\) effect}}  
    + \underbrace{\mathbb{E}_{K_X}[\mathbb{E}_{H_{Y \mid X}}[Y \mid X]] - \mathbb{E}_{H_X}[\mathbb{E}_{H_{Y \mid X}}[Y \mid X]]}_{\text{Covariate effect}} \\
    \label{eqn:KOB}
    &= \sum_{j=1}^d \underbrace{\mathbb{E}_{K_X}[X_j] (\beta^K_j - \beta^H_j)}_{\text{\(Y \mid X\) effect for the \(j\)th covariate}} 
      + \sum_{j=1}^d \underbrace{(\mathbb{E}_{K_X}[X_j] - \mathbb{E}_{H_X}[X_j])\beta^H_j}_{\text{Covariate effect for the \(j\)th covariate}}.
    \end{align}

In the next section, we introduce a natural extension of the KOB decomposition that retains a similar interpretation, but allows for more general forms of $Y \mid X$ by using functional decompositions.
The existing literature provides several such functional decompositions; however, we focus only on additive decompositions that decompose functions into additive components, as they provide a natural analogue of the additive form of $Y \mid X$ in the KOB decomposition. 
Other functional decomposition methods, such as Partial Dependence Plots \citep{friedman2001greedy}, do not offer additive decompositions and thus cannot be immediately incorporated into KOB-like decompositions.
We next review two particularly common additive functional decompositions: FANOVA and ALE.

\subsection{FANOVA}
\label{sec:fanova}
FANOVA measures the importance of features in determining the output of a function and in identifying underlying additive interactions between subsets of variables \citep{hooker2004discovering}. 
It provides a natural representation of a functional in terms of low-order components \citep{hookerGeneralizedFunctionalANOVA2007} by stating that a square-integrable function \(f(x)\) with \(x \in \Xspace = \mathbb{R}^d\) can be written uniquely as \(f(x) = \sum_{S \in 2^{[d]}} \mL(f,K_X,S)(x),\) where \(2^{[d]}\) denotes the power set of \( [d] = \{1, 2, \dots, d\} \) and \(K_X\) is a general measure of the covariates. Then, the components are jointly defined as
\begin{equation}
    \label{eq:fanova}
    \{\mL(f,K_X,S)(x) \mid S \in 2^{[d]} \} = \argmin_{ \{\mL(f, K_X, S) \in L^2(\mathbb{R}^d)\}_{S \in 2^{[d]}}} \int \left( \sum_{s \in 2^{[d]}} \mL(f,K_X,S)(x) - f(x) \right)^2 k(x) dx,
\end{equation}
{subject to \textit{hierarchical orthogonality conditions} among the components:
\begin{equation}
\label{eq:HO}
    \forall S \in 2^{[d]}, \, \forall \, V \subsetneq S : \int \mL(f,K_X,S)(x) \mL(f,K_X,V)(x) k(x) dx = 0,
\end{equation}
where \( \subsetneq \) denotes a proper subset.

Note that the FANOVA component corresponding to a subset \(S\) depends only on the covariates in \(S\), as it is constructed to capture their contribution separately from the rest. 
However, for the sake of generality in defining a functional decomposition, we express it as a function of the full covariate vector. 
The same applies to the ALE decomposition below.

\subsection{Accumulated Local Effects (ALE)}
\label{sec:ale}
ALE is another additive functional decomposition method that is particularly suitable for visualizing the effects of predictors \citep{apleyVisualizingEffectsPredictor2020}. 
Although ALE is defined more generally---allowing for non-differentiable 
\( f \) and extending to dimensions \( d > 2 \)---the case for \( d = 2 \) with a differentiable model, \( f(x_1, x_2) = \E[Y \mid X_1 = x_1, X_2 = x_2] \), suffices for our illustrative purposes. 
The ALE main effect component for the first covariate \(X_1\) is then defined as:
\begin{equation}
    \label{eq:ale}
    \mL(f, K_X, \{1\})(x) = \int_{x_{\text{min},1}}^{x_1} \mathbb{E}_{K_2}\left[ \frac{\partial f(z_1, X_2)}{\partial z_1} \right] dz_1 - \text{constant},
    \end{equation}
where \( \frac{\partial f(x_1, x_2)}{\partial x_1} \) denotes the partial derivative of \(f\) with respect to the first component, \( x_{\text{min},1} \) is a lower bound of the support of \(K_1\), and \textit{constant} is a centering constant such that \(\E_{K_X}[\mL(f, K_X, \{1\})(X)] = 0\). The term \(\mL(f, K_X, \{2\})\) is defined similarly; for the definition of \(\mL(f, K_X, \{1,2\})\) and for the \(d > 2\) case, see \citep{apleyVisualizingEffectsPredictor2020}.

\section{Additive decompositions of population differences}
\label{sec:kob_extension}
As discussed in Section~\ref{Sec:intro}, a desirable extension of KOB would allow for arbitrary flexible regression models by extending it to non-linear functional forms. 
Recall the KOB decomposition in Equation~\ref{eqn:KOB} separates a difference in means into a \(Y \mid X\) effect and a covariate effect. 
To extend the KOB decomposition to more flexible models, we assume a general regression model \( f: \Xspace \to \mathbb{R} \) is fitted such that \( f^K(x) = \mathbb{E}_{K_{Y \mid X}}[Y \mid X = x] \), and similarly for population H.\footnote{ We write an equality \( f^K(x) = \mathbb{E}_{K_{Y \mid X}}[Y \mid X = x] \) for the purpose of exposition. In practice, the fitted \( f^K(x) \) will contain approximation error, and our results apply to decompositions of \( f^K(x) \) rather than the exact expectation \( \mathbb{E}_{K_{Y \mid X}}[Y \mid X = x] \).}
Our goal is to decompose Equation~\ref{eqn:generic decomp} into smaller, interpretable components just as in the KOB decomposition.
To achieve this goal and in the spirit of FANOVA and ALE discussed in Section~\ref{sec:related}, we assume a generic additive functional decomposition, denoted by \( \mL \), which operates on arbitrary functions \( f \) of the covariates, probability measures of the covariates \( H_X \), and subsets of features \( S \). 
We assume this decomposition yields an additive representation that holds for all \( x \in \Xspace \subseteq \mathbb{R}^d \):
\begin{equation}
    \label{eq:additive_rep}
    f(x) = \sum_{S \in 2^{[d]}} \mL(f, H_X, S)(x).
\end{equation}

Given such an additive functional decomposition, it is straightforward to extend the KOB decomposition. 
We define two types of swaps, analogous to the terms in the KOB decomposition. 
First, we can swap out the distribution of each one-dimensional covariate of \( X \) at a time; we call such terms \textit{the difference due to change in \(X\)}. 
Second, we can swap out the functional decomposition terms of \( f^H \) for those of \( f^K \); we call such terms \textit{the difference due to change in \( Y \mid X \)}. 
We define this KOB extension below:
\begin{definition} 
\label{def:decomp}
	Let \(\mathcal{S}\) define an ordering of all subsets \(S \in 2^{[d]}\); we refer to the \(i\)th subset in this ordering as \(\mathcal{S}_i\). We define \emph{the importance decomposition} to be:
	\begin{align}
		\mathbb{E}_K[Y] - & \mathbb{E}_H[Y] = \sum_{i=1}^{|\mathcal{S}|} \delta^{Y \mid X}_{S_i} + \sum_{j=1}^{d} \delta^{X}_j, \label{eq:import_decomp}
	\end{align}
	\begin{align*}
		\textrm{where: } & \delta^{Y \mid X}_{S_i} := \mathbb{E}_{H_X} \left[ \mL(f^K, K_X, S_i)(X)\right] - \mathbb{E}_{H_X} \left[ \mL(f^H, H_X, S_i)(X) \right],\\
		&\delta^{X}_j := \mathbb{E}_{K_{1:j | j+1:d} H_{j+1:d}} \left[ f^K(X) \right] - \mathbb{E}_{K_{1:j-1 | j:d} H_{j:d}} \left[ f^K(X) \right].
	\end{align*}
	
Note that \(\delta^{Y \mid X}_{S_i}\) holds the covariate distribution fixed at \(H_X\), and changes whether the \(S_i\) term of \(\mathbb{E}[Y \mid X]\) comes from \(H\) or \(K\).
We therefore call \(\delta^{Y \mid X}_{S_i}\) \emph{the difference due to the dependence of \(Y\mid X\) on feature subset \(S_i\)}. 
Likewise, the distributions over covariates in \(\delta^{X}_j\) differ in whether \(X_j\) follows a distribution determined by \(H\) or \(K\). 
We therefore call \(\delta^{X}_j\) \emph{the difference due to the change in distribution of covariate \(j\)}.
\end{definition} 

Definition~\ref{def:decomp} is an extension of the KOB decomposition from Section~\ref{sec:related}, which also defines differences from swapping out distributions of covariates, as well as differences in swapping out (a model for) \(Y \mid X\). 
The main difference is that Definition~\ref{def:decomp} uses a generic additive decomposition of \(Y \mid X\), whereas the KOB decomposition assumes a linear model.

This decomposition---like the KOB decomposition---makes a series of specific choices: first swapping \(S_1\), then \(S_2\), ... then finally swapping \(S_{\left|\mathcal{S}\right|}\), and then swapping covariate one, then covariate two, etc.
Why not swap \(S_2\) first? Why not swap covariate three immediately after \(S_1\)?
In general, there is no reason to prefer any one ordering, and different orderings will produce different results.
With no preferred ordering of swaps, one may prefer to average over all possible orderings and report the resulting averages as the definitions of \(\delta^{Y \mid X}_{S_i}\) and \(\delta^{X}_{j}\).\footnote{\citet{shorrocksDecompositionProceduresDistributional2013} describes such averages as applying logic of Shapley values to functional decompositions.}
Our results here apply to any fixed order; we leave the extension to averaging over all orderings as future work.

\section{Failure of existing functional decompositions}
\label{sec:failure}

Once a user has specified the functional forms of \(f^H(x)\) and \(f^K(x)\), the only decision to be made before using Definition~\ref{def:decomp} is the choice of functional decomposition \(\mL\). 
At first glance, options such as ALE or FANOVA from Section~\ref{sec:fanova} and Section~\ref{sec:ale} seem like excellent choices: they provide additive decompositions of generic functions with properties that make them well-suited for understanding functions in other applications. 
However, we show that a broad class of functional decompositions, including FANOVA and ALE, are inappropriate for explaining population differences in the sense of Definition~\ref{def:decomp}, despite their great success in other applications. 
In particular, we characterize when such decompositions incorrectly state that differences stem from changes in \(Y \mid X\).

Recall that Definition~\ref{def:decomp} defines \(\delta_{S_i}^{Y \mid X}\) to be the \emph{difference due to the dependence of \(Y \mid X\) on feature subset \(S_i\)}.
Suppose that the distributions of \(Y \mid X\) are in fact identical across the two populations: \(f^K = f^H = f\).
In this situation, any reasonable decomposition should lead us to believe there is no difference due to differences in \(Y \mid X\); that is, \(\delta_{S_i}^{Y \mid X} = 0\). 
Unfortunately, we present examples where FANOVA and ALE can misattribute differences in \(X\) to differences in \(Y \mid X\). 

\subsection{Examples where FANOVA and ALE misattribute}
To begin with, we formalize what we mean by misattribution. Note that when \( f^K = f^H = f \), \( \delta^{Y \mid X}_S \) reduces to 
\begin{equation}
\label{eq:delta}
    \Delta(\mathcal{L}, f,H_X,K_X,S) := \delta^{Y \mid X}_S = \mathbb{E}_{H_X} \left[ \mL(f,K_X,S)(X) - \mL(f,H_X,S)(X) \right].
\end{equation}
With \cref{eq:delta} in mind, we define misattribution as follows:

\begin{definition}
    \label{def:misassign}
    We say that a functional decomposition \(\mL\) \textit{misattributes effects of \( Y \mid X \)} if \(\Delta(\mathcal{L}, f, H_X, K_X, S) \neq 0\) for any \(f, H_X, K_X, S\).
\end{definition}

In the following examples, we show that FANOVA and ALE misattribute effects of \(Y \mid X\) in simple scenarios; we present here a summary of the examples, and the full details are provided in Appendix~\ref{app:examples}.

\begin{example}[FANOVA]
\label{ex:fanova}
Consider the case where the fitted model is additive and has two covariates: \(f^{K} = f^{H} = f(x) = x_1 + x_2\). 
Suppose that population H has covariates \(X_1, X_2\), with \(\mathbb{E}_{H_X}[X_1] = \mathbb{E}_{H_X}[X_2] = 0\) while in population K, \(\mathbb{E}_{K_X}[X_1] = \mu\) and \(\mathbb{E}_{K_X}[X_2] = 0\) for \(\mu \neq 0\). 
In both populations, \(X_1\) and \(X_2\) are independent and have finite variance.

Then, following the FANOVA decomposition in Equation~\ref{eq:fanova}, the components for each subset satisfy the following for each population:
\[
\mL(f, H_X, \emptyset)(x) = 0, \quad \mL(f, H_X, \{1\})(x) = x_1, \quad \mL(f, H_X, \{2\})(x) = x_2, \quad \mL(f, H_X, \{1,2\})(x) = 0,
\]
\[
\mL(f, K_X, \emptyset)(x) = \mu, \quad \mL(f, K_X, \{1\})(x) = x_1 - \mu, \quad \mL(f, K_X, \{2\})(x) = x_2, \quad \mL(f, K_X, \{1,2\})(x) = 0.
\]

Hence, the difference in means due to differences in \(Y \mid X\) for the component of \(S = \{1\}\) is given by:
\[
\Delta(\mathcal{L}, f, H_X, K_X, \{1\}) = \mathbb{E}_{H_X} \left[ \mL(f, K_X, \{1\})(X) - \mL(f, H_X, \{1\})(X) \right] = \mathbb{E}_{H_X}[X_1 - \mu - X_1] = - \mu \neq 0.
\]

Since this term is not equal to zero, FANOVA misattributes effects to \(Y \mid X\) in this example.
\end{example}

Similarly, the following example demonstrates that ALE misattributes in a simple scenario.

\begin{example}[ALE]
\label{ex:ale}
Let \(f^{K} = f^{H} = f(x) = x_1x_2\), and assume for population H, \(X_1 \sim N(1,1)\), \(X_2 \sim N(0,1)\), and for population K, \(X_1 \sim N(0,1)\), \(X_2 \sim N(\mu,1)\), where \(\mu \neq 0\). 
Assume we observed data \(\{(x_{i,1}^j, x_{i,2}^j)\}_{i=1}^n\), with \(n\) sufficiently large, for \(j = H, K\). 
Following the ALE decomposition in Equation~\ref{eq:ale}, we can compute the centered components for each population:
\[
\mL(f, H_X, \{1\})(x) = 0, \quad
\mL(f, H_X, \{2\})(x) = (x_2 - x_{\min, 2}^H) - (-x_{\min, 2}^H) = x_2,
\]
\[
\mL(f, K_X, \{1\})(x) = \mu (x_1 - x_{\min, 1}^K) - (-\mu x_{\min, 1}^K) = \mu x_1, \quad
\mL(f, K_X, \{2\})(x) = 0.
\]

Thus, the difference in means due to differences in \(Y \mid X\) for \(S = \{1\}\), is given by:
\begin{align*}
    \Delta(\mathcal{L}, f, H_X, K_X, \{1\}) & = \mathbb{E}_{H_X} [\mL(f, K_X, \{1\})(X) - \mL(f, H_X, \{1\})(X)] = \mathbb{E}_{H_X} [\mu X_1 - 0] \\ 
    & = \mu \mathbb{E}_{H_X}[X_1] = \mu \cdot 1 = \mu \neq 0.
\end{align*}

Since this term is not equal to zero, ALE misattributes effects to \(Y \mid X\) in this example.
\end{example}

Examples \ref{ex:fanova} and \ref{ex:ale} show that both ALE and FANOVA can misattribute differences in the covariates to differences in \( Y \mid X \).
With the knowledge that common functional decompositions like FANOVA or ALE \emph{can} misattribute effects, it behooves us to understand how widespread this behavior is.
Does misattribution happen frequently for common functional decompositions? And what properties of functional decompositions will prevent misattribution?
Practitioners need answers to these questions to confidently use methods similar to the one described in Definition~\ref{def:decomp}.
In the next sections, we provide partial answers to these questions.
In Section~\ref{sec:gfan_misattribution}, we argue that FANOVA misattributes in the sense of Definition~\ref{def:decomp} in almost all cases of practical interest, rendering it unsuitable in practice.
And in Section~\ref{sec:mainthm}, we conjecture---and partially prove---that any reasonable functional decomposition that depends on the input covariate distribution will misattribute.

\subsection{When does FANOVA misattribute effects?}
\label{sec:gfan_misattribution}

In the last section, we presented an example in which FANOVA misattributes differences in \(X\) to differences in \(Y \mid X\). 
However, without a precise characterization of when this misattribution occurs, one might think it is specific to the example rather than a general phenomenon. 
We now provide theoretical characterizations of when FANOVA misattributes effects in cases of practical interest. 
Specifically, we first show that FANOVA does not misattribute when the lower-dimensional components computed for population K have a mean of zero under population H. 
We then demonstrate that this condition is highly unrealistic, even in simple cases such as affine functions, and becomes even more restrictive when we allow for more flexibility.



\begin{theorem}[FANOVA attribution]
    \label{thm:expected_0}
    Let \(\mL\) denote the FANOVA decomposition. Then, for any Lebesgue measurable function \( f \), any pair of probability measures \( H_X \) and \( K_X \), and any subset of the covariates \( S \in 2^{[d]} \setminus \{\emptyset\} \), we have 
    \[
        \Delta(\mathcal{L}, f, H_X, K_X, S) = 0
    \] 
    if and only if
    \begin{equation}
    \label{eq:g-favnova-orthogonality}
        \E_{H_X}\Bigl[\mL(f, K_X, S)(X)\Bigr] = 0.
    \end{equation}
    \end{theorem}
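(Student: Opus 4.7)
The plan is to unpack the definition of $\Delta$ and reduce the biconditional to a single identity about FANOVA components evaluated under their ``native'' input distribution. Direct substitution into Equation~\eqref{eq:delta} gives
\[
\Delta(\mathcal{L}, f, H_X, K_X, S) = \E_{H_X}\bigl[\mL(f, K_X, S)(X)\bigr] - \E_{H_X}\bigl[\mL(f, H_X, S)(X)\bigr],
\]
so the theorem follows immediately from the auxiliary identity
\[
\E_{H_X}\bigl[\mL(f, H_X, S)(X)\bigr] = 0 \quad \text{for every non-empty } S.
\]
Once this holds, $\Delta$ collapses to $\E_{H_X}[\mL(f, K_X, S)(X)]$, which vanishes if and only if the right-hand side of the theorem does, giving both directions of the biconditional at once.

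To establish the auxiliary identity, I would invoke the hierarchical orthogonality condition in Equation~\eqref{eq:HO}, specialized to the proper subset $V = \emptyset$. Because $\mL(f, H_X, \emptyset)$ is a function of no coordinates, it is a constant $c_H \in \R$, and the orthogonality condition reduces to the scalar equation
\[
c_H \cdot \E_{H_X}\bigl[\mL(f, H_X, S)(X)\bigr] = 0.
\]
In the generic case $c_H \neq 0$---equivalent to $\E_{H_X}[f(X)] \neq 0$ under the standard convention that the empty component equals the grand mean---this immediately yields the auxiliary identity.

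The main obstacle is the degenerate case $c_H = 0$, in which the constraint with $V = \emptyset$ becomes vacuous. Here I would appeal to the standard normalization of FANOVA components in the literature \citep{stoneUsePolynomialSplines1994,hookerGeneralizedFunctionalANOVA2007}: non-empty components are conventionally centered to have zero mean under the input measure, which pins down the otherwise under-determined minimizer of Equation~\eqref{eq:fanova} and makes the auxiliary identity hold uniformly. One can justify this normalization by a uniqueness argument---any admissible decomposition can be brought to the canonical mean-zero form by transferring constants into the empty-set slot---but in practice the mean-zero convention is baked into the definition. With the auxiliary identity secured in both cases, the theorem is proved.
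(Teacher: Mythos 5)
Your proof is correct and follows the same skeleton as the paper's: both reduce \(\Delta(\mathcal{L}, f, H_X, K_X, S)\) to \(\E_{H_X}[\mL(f, K_X, S)(X)]\) by establishing the auxiliary identity \(\E_{H_X}[\mL(f, H_X, S)(X)] = 0\) for non-empty \(S\), after which the biconditional is immediate. The difference lies in how that identity is justified. The paper invokes Lemma~\ref{lemma:orth_hooker2007} (the equivalence of the hierarchical orthogonality conditions with the weak annihilating conditions of Equation~\ref{eq:lemma_1Hooker}) and then Corollary~\ref{cor:meanzero}: integrating the annihilating condition over the remaining coordinates yields the mean-zero property directly, with no case analysis. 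You instead derive it from the single pairwise constraint in Equation~\ref{eq:HO} with \(V = \emptyset\), which only gives \(c_H \cdot \E_{H_X}[\mL(f,H_X,S)(X)] = 0\) and genuinely breaks down when the grand mean \(c_H\) vanishes. Your diagnosis of that degenerate case is sharp --- the pairwise orthogonality constraints alone really do fail to pin down the decomposition when \(c_H = 0\) (e.g., for \(f(x) = x_1\) with centered independent covariates, taking \(\mL(f,H_X,\{1\})(x) = x_1 + a\) and \(\mL(f,H_X,\{2\})(x) = -a\) satisfies every condition in Equation~\ref{eq:HO} for any \(a\), yet violates mean-zeroness) --- but your resolution by ``convention'' is really an appeal to exactly the normalization that the weak annihilating conditions make precise. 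If you route the argument through Lemma~\ref{lemma:orth_hooker2007} from the start, the mean-zero property holds uniformly and the case split disappears.
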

\begin{proof}
    By definition, we have \(\Delta(\mathcal{L}, f,H_X,K_X,S) = \E_{H_X}[\mL(f, K_X, S)(X) - \mL(f, H_X, S)(X)]\).
    The mean-zero property of the FANOVA components (see Appendix Lemma~\ref{lemma:orth_hooker2007}) implies that \(\E_{H_X}[\mL(f, H_X, S)(X)] = 0.\)
    Thus, \(\Delta(\mathcal{L}, f,H_X,K_X,S) = \E_{H_X}[\mL(f, K_X, S)(X)]\). It follows immediately that \(\Delta(\mathcal{L}, f,H_X,K_X,S) = 0\) if and only if \(\E_{H_X}[\mL(f, K_X, S)(X)] = 0\).
\end{proof}

Theorem~\ref{thm:expected_0} states that the expectation under population \( H \) of the component computed under population \(K\) must have a mean of zero. 
This suggests a close relationship between the two distributions or that the moments of the marginal distribution must satisfy specific conditions for Equation~\ref{eq:g-favnova-orthogonality} to hold.
If these conditions hold in practical scenarios, then FANOVA could indeed be a viable option. 
Our next set of results indicates that Equation~\ref{eq:g-favnova-orthogonality} unfortunately cannot be reasonably expected to hold in practice.

We start by studying a particularly simple case---when \(f\) is a given affine function.
We show that even in this case, the conditions under which FANOVA does not misattribute are very restrictive. 

\begin{theorem}[FANOVA affine class]
    \label{thm:affine_clas_fanova}
    Let \(X_1, \dots, X_M\) be independent random variables, and let \(H_X\) and \(K_X\) be two probability measures. Consider a function \(f:\mathbb{R}^M \to \mathbb{R}\) given by \(f(x) = \sum_{m=1}^{M} a_m\, b_m(x_m),\)
    where each coefficient \(a_m \in \mathbb{R} \setminus \{0\}\) and each basis function \(b_m: \mathbb{R} \to \mathbb{R}\) is measurable for \(m=1,\dots,M\). Then, we have
    \begin{equation}
        \text{for all } S \in 2^{[M]} \setminus \{\emptyset\}, \quad \E_{H_X}\Bigl[\mL(f, K_X, S)(X)\Bigr] = 0,
    \end{equation}
    if and only if
    \begin{equation}
        \label{eq:equal_means}
        \text{for all } m \in \{1,\dots,M\}, \quad \E_{H_X}\Bigl[b_m(X_m)\Bigr] = \E_{K_X}\Bigl[b_m(X_m)\Bigr].
    \end{equation}
\end{theorem}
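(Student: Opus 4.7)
The plan is to exploit independence to obtain a closed form for the FANOVA components of the additive $f$ with respect to $K_X$, substitute them into $\mathbb{E}_{H_X}[\mL(f,K_X,S)(X)]$, and then reduce the universal zero condition to the elementary moment-matching identity in Equation~\ref{eq:equal_means}.

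First, under independence of $X_1,\dots,X_M$ under $K_X$, I would establish that the FANOVA components of $f$ with respect to $K_X$ are given by the classical M\"obius-type formula
\begin{equation*}
    \mL(f, K_X, S)(x) \;=\; \sum_{T \subseteq S} (-1)^{|S|-|T|}\,\mathbb{E}_{K_X}\!\left[f(X)\mid X_T = x_T\right].
\end{equation*}
To justify this, I would check that these candidate components (i) sum to $f$ via the identity $\sum_{T\subseteq S}(-1)^{|S|-|T|} = \mathbf{1}\{S=\emptyset\}$ after reindexing $\sum_{S\subseteq[M]}$, and (ii) satisfy the hierarchical orthogonality condition (Equation~\ref{eq:HO}) as a consequence of the product structure of $k$, since integrating any coordinate indexed by an element of $S \setminus V$ against its marginal annihilates the resulting cross term. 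Uniqueness of the FANOVA decomposition (Appendix Lemma~\ref{lemma:orth_hooker2007}) then identifies these candidates with the true components.

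Next, substituting $f(x)=\sum_m a_m b_m(x_m)$ and writing $\mu^K_m := \mathbb{E}_{K_X}[b_m(X_m)]$ and $\mu^H_m := \mathbb{E}_{H_X}[b_m(X_m)]$, the conditional expectation decouples:
\begin{equation*}
    \mathbb{E}_{K_X}[f(X) \mid X_T = x_T] \;=\; \sum_{m \in T} a_m b_m(x_m) \;+\; \sum_{m \notin T} a_m \mu^K_m.
\end{equation*}
Plugging this into the M\"obius formula and using the same alternating-sum identity one coordinate at a time, a short calculation yields
\begin{equation*}
    \mL(f, K_X, \emptyset) = \sum_m a_m \mu^K_m,\qquad \mL(f, K_X, \{m\})(x) = a_m\bigl(b_m(x_m) - \mu^K_m\bigr),
\end{equation*}
and $\mL(f, K_X, S)(x) \equiv 0$ for every $S$ with $|S| \geq 2$, because each term $a_m b_m(x_m)$ or $a_m \mu^K_m$ contributes a signed sum of $(-1)^{|S|-|T|}$ over a nontrivial sub-collection of $T\subseteq S$ which cancels.

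Finally, taking $\mathbb{E}_{H_X}[\cdot]$ of each component gives $\mathbb{E}_{H_X}[\mL(f,K_X,\{m\})(X)] = a_m(\mu^H_m - \mu^K_m)$ for singletons and $0$ for $|S|\ge 2$; note this step does not require independence under $H_X$. Hence the universal condition $\mathbb{E}_{H_X}[\mL(f,K_X,S)(X)]=0$ over all nonempty $S$ reduces to $a_m(\mu^H_m - \mu^K_m)=0$ for every $m$, and since $a_m\neq 0$ by hypothesis this is equivalent to Equation~\ref{eq:equal_means}. The main obstacle is the first step: verifying that the simple M\"obius formula really produces the FANOVA components. Without independence the hierarchical orthogonality conditions would not be automatic and one could pick up nonzero higher-order terms correcting for the dependence structure; the argument hinges on the product structure of $k$ making marginal integrations annihilate cross terms, combined with the uniqueness statement. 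The remaining steps are essentially algebraic bookkeeping.
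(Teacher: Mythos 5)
Your proposal is correct and follows essentially the same route as the paper: the paper's Appendix Lemma~\ref{lemma:fanova_reduction} establishes exactly the reduction you describe (that under independence the FANOVA components coincide with the Hoeffding--Sobol decomposition, verified via the weak annihilating conditions and uniqueness), and the paper then computes the same components --- $\mL(f,K_X,\emptyset)=\sum_m a_m\E_{K_X}[b_m(X_m)]$, $\mL(f,K_X,\{m\})(x)=a_m(b_m(x_m)-\E_{K_X}[b_m(X_m)])$, and zero for $|S|\ge 2$ --- before taking $\E_{H_X}$ and using $a_m\neq 0$. The only cosmetic difference is that you write the decomposition in its closed M\"obius-inversion form while the paper uses the equivalent recursive form.
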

    

See Appendix~\ref{Appendix:FANOVA} for the proof. 
Equation~\ref{eq:equal_means} is a fairly restrictive condition, as we expect covariate means to vary between populations; e.g., the proportion of the workforce with high school diplomas will likely not be \emph{identical} between two cities H and K.

We can now see why FANOVA misattributes effects in Example~\ref{ex:fanova}.
There, the function \( f(x) = x_1 + x_2 \) corresponds to the affine class in Theorem~\ref{thm:affine_clas_fanova}, with coefficients \( a_1 = a_2 = 1 \) and basis functions \( b_1(x_1) = x_1 \), \( b_2(x_2) = x_2 \). 
Since the expectation condition in Equation~\ref{eq:equal_means} does not hold (\(\E_{H_X}[X_1] \neq \E_{K_X}[X_1]\)), FANOVA assigns a nonzero difference to \( S = \{1\} \), leading to misattribution.
 
A more revealing version of Theorem~\ref{thm:affine_clas_fanova} would extend to richer basis representations by using multiple basis functions per dimension rather than a single one and allowing for correlated covariates. 
However, we conjecture that this would further restrict the relationship between the moments of the distributions \(H_X\) and \(K_X\), imposing increasingly stringent requirements on them.
In other words, adding flexibility to our model of \(\E[Y \mid X]\) comes at the cost of restricting the set of populations where the decomposition remains accurate. 
This culminates in our next result, which shows that placing minimal restrictions on \(\E[Y \mid X]\) imposes maximal restrictions on the distribution of \(X\).

\begin{theorem}[FANOVA unrestricted]
    \label{thm:unrestricted_fanova}
    Let \(\measurable\) denote the set of measurable functions on the covariate space. Suppose that \(H_X\) and \(K_X\) are probability measures such that \(H_X\) is absolutely continuous with respect to \(K_X\) \((H_X \ll K_X)\). Then, we have
    \[
    \text{for all } f \in \measurable \text{ and for all } S \in 2^{[d]} \setminus \{\emptyset\}, \quad \E_{H_X}\Bigl[\mL(f, K_X, S)(X)\Bigr] = 0,
    \]
    if and only if
    \[
    H_X = K_X, \quad \text{\(K_X\)-almost surely.}
    \]
    \end{theorem}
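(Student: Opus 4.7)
The plan is to exploit the FANOVA additive representation together with the observation that the empty-set component \(\mL(f,K_X,\emptyset)\) is the constant \(\E_{K_X}[f(X)]\). For the easy direction (\(\Leftarrow\)), equality \(H_X = K_X\) makes \(\E_{H_X}[\mL(f,K_X,S)(X)] = \E_{K_X}[\mL(f,K_X,S)(X)]\), which Appendix Lemma~\ref{lemma:orth_hooker2007} shows equals zero for every nonempty \(S\).

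For the harder direction (\(\Rightarrow\)), I would first identify \(\mL(f,K_X,\emptyset)\) explicitly: it depends on no coordinates and so is a constant, and by the hierarchical orthogonality in Equation~\ref{eq:HO} together with the mean-zero property of the other components, this constant must equal \(\E_{K_X}[f(X)]\). The decomposition in Equation~\ref{eq:additive_rep} then reads
\begin{equation*}
    f(x) = \E_{K_X}[f(X)] + \sum_{S \in 2^{[d]}\setminus\{\emptyset\}} \mL(f, K_X, S)(x), \quad K_X\text{-a.s.}
\end{equation*}
Because \(H_X \ll K_X\), this identity also holds \(H_X\)-a.s. Integrating both sides against \(H_X\) and applying the hypothesis that \(\E_{H_X}[\mL(f,K_X,S)(X)] = 0\) for every \(S \neq \emptyset\) yields \(\E_{H_X}[f(X)] = \E_{K_X}[f(X)]\). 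Specializing \(f\) to indicator functions \(1_A\) for arbitrary measurable \(A\) gives \(H_X(A) = K_X(A)\), so \(H_X\) and \(K_X\) agree as measures, equivalently \(dH_X/dK_X = 1\) holds \(K_X\)-a.s.

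The principal technical obstacle is justifying the term-by-term integration of the sum against \(H_X\): each FANOVA component is only guaranteed to lie in \(L^2(K_X)\), so its integrability under \(H_X\) requires control over the Radon--Nikodym derivative \(dH_X/dK_X\). I would dispatch this by restricting attention to bounded measurable \(f\) (indicators already suffice to separate the two measures) and invoking Cauchy--Schwarz under the mild regularity \(dH_X/dK_X \in L^2(K_X)\), or else via a monotone-class / truncation argument that builds up to \(1_A\) through components of bounded functions. The absolute-continuity assumption in the hypothesis plays its essential role precisely here: it is what transports the \(K_X\)-a.s.\ additive identity into one that can be meaningfully integrated against \(H_X\).
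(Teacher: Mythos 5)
Your proof is correct, but it takes a genuinely different route from the paper's. The paper proves the forward direction by rewriting the hypothesis as \(\int \mL(f,K_X,S)(x)\bigl(1 - h(x)/k(x)\bigr)\,dK_X = 0\), showing that the components \(\{\mL(f,K_X,S)\}\) taken over all \((f,S)\) span the mean-zero subspace \(W(K_X)\) of \(L^2(K_X)\) (Lemma~\ref{lemma:span_equals_meanzero}), and then identifying \(W(K_X)^\perp\) with the \(K_X\)-a.s.\ constant functions via a Radon--Nikodym argument (Lemma~\ref{lemma:mean_zero_prob_measure}); normalization forces \(1 - h/k = 0\). You instead sum the additive representation, use that the \(\emptyset\)-component equals \(\E_{K_X}[f(X)]\) while every nonempty component integrates to zero under \(H_X\) by hypothesis, conclude \(\E_{H_X}[f(X)] = \E_{K_X}[f(X)]\) for all \(f\), and separate the measures with indicators. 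Your route is shorter and more elementary, avoids both Hilbert-space lemmas, and delivers equality of measures directly; the paper's span characterization is heavier but isolates the exact witness of failure (the direction \(1 - h/k\)), which feeds the later discussion of which \(f\) are problematic when \(H_X \neq K_X\). One remark: the ``principal technical obstacle'' you flag is not really an obstacle. The sum over \(S \in 2^{[d]}\) is finite (there are only \(2^d\) subsets), so term-by-term integration is just linearity of expectation over a finite sum, and the \(H_X\)-integrability of each component is already presupposed by the hypothesis, which asserts that \(\E_{H_X}[\mL(f,K_X,S)(X)]\) equals zero and hence is well defined -- the paper's own proof makes the same implicit assumption. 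Both arguments do share the genuine underlying requirement that the additive decomposition exists for arbitrary measurable (in particular, indicator) functions, which the paper addresses in the remark preceding its proof.
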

    
%
See Appendix~\ref{app:fanova_unrestricted} for a proof.

Theorem~\ref{thm:unrestricted_fanova} says that if we want FANOVA to never misattribute for a given pair of distributions---that is, not misattribute for every measurable function \( f \) and every nonempty subset \( S \) of the covariates---then it is necessary and sufficient that the input covariate distributions are identical (i.e., \( H_X = K_X \), up to a \( K_X \)-null set). 
Equivalently, if \( H_X \neq K_X \), then there exists at least one problematic measurable function \( f \) and nonempty subset \( S \) for which FANOVA misattributes to $Y \mid X$. 
In practice, we generally compare distinct populations (i.e., \( H_X \neq K_X \) ), implying that FANOVA will misattribute in settings where \( f \) is one of the problematic cases.
Theorem~\ref{thm:unrestricted_fanova} does not characterize the problematic \( f \), suggesting that knowledge or assumptions about $f$ could rule out misattribution in some applications.
A more practical result would characterize the set of problematic $f$ for a particular set of input densities; we leave this as a direction for future work.
A more concerning result would instead give conditions under which misattribution can occur for any given $f$; we also leave this as a direction for future work.

\section{When do functional decompositions misattribute effects?}
\label{sec:mainthm}

Given the pessimistic results in Section~\ref{sec:failure}, one may be reasonably concerned that \emph{any} functional decomposition $\mL$ may misattribute, making the decomposition of Definition~\ref{def:decomp} of little value, as practitioners would never know when to trust its outputs. 
To resolve this problem, we attempt to characterize what properties of the functional decomposition \(\mL\) cause misattribution. 
We show that under regularity conditions, a functional decomposition \(\mL\) does not misattribute the effects of \(Y \mid X\) if and only if its derivative with respect to the probability measure is orthogonal to \(K\) in the \(L^2\) sense. 
Furthermore, we prove that Definition~\ref{def:decomp} does not lead to misattribution if \(\mL\) is independent of its input distribution. 
For a reverse direction statement, we conjecture that under reasonable assumptions on \(\mL(f,K_X,S)\), the function \(f\), and the distributions \(H_X\) and \(K_X\), a functional decomposition \(\mL\) does not misattribute the effects of \(Y \mid X\) if it does not depend on its input distribution.


We now aim to characterize conditions on the functional \( \mL \) under which \( \Delta(\mathcal{L}, f,H_X,K_X,S) \) does or does not equal zero for all \( f, H_X, K_X \). We start by studying the discrete case and leave the continuous generalization as a conjecture.

\subsection{The discrete setting}
\label{sec:discrete_main}

First, suppose that \(\Xspace\) is a countable space so that the covariates of the random vector \(X = (X_1, \dots, X_d)^T\) are discrete. 
Let \(k(x)\) and \(h(x)\) be the probability mass functions corresponding to the probability measures \(K_X\) and \(H_X\), respectively, with shared support on \(\Xspace\). 
For example, in healthcare, \(\Xspace\) might represent patient categories based on age or pre-existing conditions, while \(f(x)\) could denote the expected recovery time, and \(k(x)\), \(h(x)\) represent the proportions of patients in different hospitals.

Before stating our main theorem of this section, we impose the following regularity conditions on the class of functional decompositions we consider.

\begin{assumptions}
\label{ass:discrete_differentiability}
    The map \(K_X \mapsto \mL(f, K_X, S)\) is twice continuously differentiable with respect to \(K_X\), and its second derivative is uniformly bounded. 
    For a mathematical formulation see Appendix Assumption~\ref{assump:main_discrete}.
\end{assumptions}

For any fixed measurable function \(f\) and subset \(S \in 2^{[d]}\), we denote by \(\mathcal{J}_{K_X}(f, H_X, S)\) the Jacobian matrix of the mapping \(K_X \mapsto \mL(f, K_X, S),\) with respect to \(K_X\), evaluated at \(K_X = H_X\).

We now state our main result for the discrete case, which characterizes when a functional decomposition \(\mL\) will never misattribute the effects of \(Y \mid X\) to changes in \(X\).

\begin{theorem}[Discrete characterization]
    \label{thm:discrete_characterization}
    Under Assumption~\ref{ass:discrete_differentiability} and for all \( H_X, K_X \in \intsim \), we have
    \[
    \Delta(\mathcal{L}, f,H_X,K_X,S) := \delta^{Y \mid X}_S := \mathbb{E}_{H_X} \Bigl[ \mL(f, K_X, S)(X) - \mL(f, H_X, S)(X) \Bigr] = 0,
    \]
    if and only if
    \[
    \text{for all } K_X \in \intsim, \quad \mathcal{J}_{K_X} \mL(f, K_X, S) = \mathbf{c} \, \mathbf{1}^T, \quad \text{for some } \mathbf{c} \in \mathbb{R}^d.
    \]
    \textbf{Remark:} The condition \(\mathcal{J}_{K_X} \mL(f, K_X, S) = \mathbf{c} \, \mathbf{1}^T\) implies that all columns of the Jacobian are identical, so that its rank is \(1\).
\end{theorem}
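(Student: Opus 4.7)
The plan is to identify each $K_X$ with its probability mass vector $k \in \intsim \subset \mathbb{R}^{|\Xspace|}$ and to view $\mL(f, K_X, S)$ as the vector-valued map $g(k) := \bigl(\mL(f, K_X, S)(x)\bigr)_{x \in \Xspace}$. The quantity of interest then becomes $\Delta(\mathcal{L}, f, H_X, K_X, S) = h^T\bigl(g(k) - g(h)\bigr)$, so the hypothesis ``$\Delta = 0$ for all $H_X, K_X \in \intsim$'' is equivalent to $h^T g(k) = h^T g(h)$ for all $h, k \in \intsim$. Under Assumption~\ref{ass:discrete_differentiability}, I extend $g$ to a neighborhood of $\intsim$ and work with its Jacobian $J(k) := \mathcal{J}_{K_X} \mL(f, K_X, S)$, noting that tangent vectors to the simplex are exactly the $v$ with $\mathbf{1}^T v = 0$.

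For the forward direction, suppose $J(k) = \mathbf{c}(k)\mathbf{1}^T$ at every $k \in \intsim$. Then for any tangent $v$, $J(k)v = \mathbf{c}(k)(\mathbf{1}^T v) = 0$. Given $h, k \in \intsim$, the segment $k(t) := (1-t)h + tk$ lies in the convex set $\intsim$ and has velocity $k - h$ satisfying $\mathbf{1}^T(k-h) = 0$. The fundamental theorem of calculus then gives $g(k) - g(h) = \int_0^1 J(k(t))(k-h)\,dt = 0$, so $\Delta = h^T(g(k) - g(h)) = 0$.

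For the reverse direction, fix any $h \in \intsim$ and observe that the hypothesis forces $\phi_h(k) := h^T g(k)$ to be constant on $\intsim$. Differentiating along any tangent curve yields $h^T J(k) v = 0$ for every tangent $v$, i.e., $h^T J(k) \in \text{span}(\mathbf{1}^T)$. The key observation is that the linear span of $\intsim$ in $\mathbb{R}^{|\Xspace|}$ is the entire ambient space, since $\intsim$ is open in the affine hyperplane $\{\mathbf{1}^T h = 1\}$, which does not pass through the origin. Therefore $a^T J(k) \in \text{span}(\mathbf{1}^T)$ for every $a \in \mathbb{R}^{|\Xspace|}$; taking $a = e_i$ shows that the $i$-th row of $J(k)$ equals $c_i(k)\mathbf{1}^T$ for some scalar, and stacking rows gives $J(k) = \mathbf{c}(k)\mathbf{1}^T$ with $\mathbf{c}(k) = (c_1(k), c_2(k), \ldots)^T$.

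The main obstacle I anticipate is twofold. First, Assumption~\ref{ass:discrete_differentiability} must be wielded carefully so that $J(k)$ is a bona fide derivative even when $\Xspace$ is countably infinite and $k$ lives in a sequence space rather than $\mathbb{R}^n$; the uniform bound on second derivatives is what licenses the line-integral step in the forward direction and the interchange of differentiation and expectation throughout. Second, one must check that ``$J(k) = \mathbf{c}(k)\mathbf{1}^T$'' is invariant under the nonunique choice of extension of $\mL$ off the simplex: any two smooth extensions differ by a function of the form $(\mathbf{1}^T k - 1)\psi(k)$, whose Jacobian at simplex points is $\psi(k)\mathbf{1}^T$, so the set of column-constant Jacobians is closed under this ambiguity and the statement is intrinsic to the simplex.
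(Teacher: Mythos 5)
Your proof is correct and follows essentially the same route as the paper's: both directions reduce to showing $h^\top \mathcal{J}_{K_X}\mL(f,K_X,S)\,v = 0$ for all tangent vectors $v$ (i.e.\ $\mathbf{1}^\top v = 0$), and then using the fact that $\intsim$ linearly spans $\mathbb{R}^d$ to force every column of the Jacobian to coincide, so that $\mathcal{J}_{K_X}\mL(f,K_X,S) = \mathbf{c}\,\mathbf{1}^\top$. Your minor variations are sound and, if anything, cleaner: integrating the Jacobian along the segment via the fundamental theorem of calculus avoids the paper's appeal to a single-intermediate-point mean value theorem for a vector-valued map, your row-extraction argument by linearity replaces the paper's pairwise column comparison (its Lemma on $x^\top w = 0$ for all $x \in \intsim$), and your observation that the condition $\mathcal{J}_{K_X}\mL = \mathbf{c}\mathbf{1}^\top$ is invariant under the choice of smooth extension of $\mL$ off the simplex addresses a well-definedness subtlety the paper leaves implicit.
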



    

See Appendix~\ref{app:discrete_main} for the proof.

Theorem~\ref{thm:discrete_characterization} shows that if we require a functional decomposition \(\mL\) to never to misattribute the effect of \(Y \mid X\) for \emph{any} distribution \(K_X\), its dependence on \(K_X\) becomes severely restricted. 
Concretely, if the \emph{average} change of \(\mL\) is zero for every pair of distributions \(H_X, K_X \in \intsim\), then all the columns of the Jacobian of \(\mL\) with respect to \(K_X\) must be the same.
This structure means that \(\mL\) cannot distinguish between different probability distributions, which implies that $\Delta$ must be zero. 
As the following corollary shows, this characterization implies \(\mL\) must be constant in its second argument across values in \(\intsim\). 


\begin{corollary}
    \label{cor:constant_M}
    Under the same assumptions as Theorem~\ref{thm:discrete_characterization}, \(\mL\) will not misattribute the effects of \(Y \mid X\) if and only if \(\mL(f,K_X,S) = \mL(f, H_X, S)\) for all \(K_X, H_X \in \intsim\), i.e., the functional \(\mL\) is constant with respect to the distribution over covariates.
\end{corollary}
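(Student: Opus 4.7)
The plan is to derive the corollary directly from Theorem~\ref{thm:discrete_characterization} by exploiting the simplex constraint on $K_X$. The trivial direction is immediate: if $\mL(f, K_X, S) = \mL(f, H_X, S)$ for every $K_X, H_X \in \intsim$, then
\[
\Delta(\mathcal{L}, f, H_X, K_X, S) = \mathbb{E}_{H_X}[\mL(f, K_X, S)(X) - \mL(f, H_X, S)(X)] = 0.
\]

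For the non-trivial direction, I first invoke Theorem~\ref{thm:discrete_characterization} to conclude that for every $K_X \in \intsim$ there exists a function $x \mapsto c_{K_X}(x)$ such that the Jacobian satisfies $\mathcal{J}_{K_X} \mL(f, K_X, S) = c_{K_X}(\cdot)\, \mathbf{1}^T$; componentwise, the partial derivative $\partial \mL(f, K_X, S)(x) / \partial k(y)$ is independent of $y$ for every $x$. The key observation is that since $K_X$ must sum to one, every tangent direction $\mathbf{v}$ to $\intsim$ satisfies $\mathbf{1}^T \mathbf{v} = 0$, so the rank-one Jacobian annihilates all admissible directions of variation along the simplex.

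To compare $\mL(f, K_X, S)$ and $\mL(f, H_X, S)$ at two arbitrary points $K_X, H_X \in \intsim$, I then consider the straight-line path $\gamma(t) = (1-t)H_X + t K_X$, which remains in $\intsim$ by convexity. By the chain rule (justified by the continuous differentiability in Assumption~\ref{ass:discrete_differentiability}), I compute pointwise in $x$:
\[
\frac{d}{dt}\mL(f, \gamma(t), S)(x) = c_{\gamma(t)}(x) \cdot \mathbf{1}^T(K_X - H_X) = 0,
\]
since $\mathbf{1}^T(K_X - H_X) = 1 - 1 = 0$. Integrating from $t = 0$ to $t = 1$ via the fundamental theorem of calculus yields $\mL(f, K_X, S)(x) = \mL(f, H_X, S)(x)$ for every $x \in \Xspace$, as desired.

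The only technical hurdle is verifying that the chain rule and fundamental theorem of calculus apply pointwise in $x$ along $\gamma$; this is guaranteed by Assumption~\ref{ass:discrete_differentiability} together with the fact that the path stays in the open simplex and so avoids any boundary degeneracy in the partial derivatives with respect to $k(y)$. Once that is in place, the argument is essentially topological: the rank-one Jacobian structure makes every simplex tangent direction a null direction for the differential of $\mL$, forcing $\mL$ to be constant on the connected open set $\intsim$.
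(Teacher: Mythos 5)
Your proposal is correct and takes essentially the same route as the paper: both directions reduce to Theorem~\ref{thm:discrete_characterization}, and the nontrivial direction compares $\mL$ at $H_X$ and $K_X$ by moving along the connecting segment in $\intsim$ and using the rank-one Jacobian structure together with $\mathbf{1}^T(K_X - H_X) = 0$. The only difference is cosmetic — the paper invokes a vector-valued Mean Value Theorem where you integrate the derivative along $\gamma(t)$ via the fundamental theorem of calculus, which is, if anything, the slightly more careful formulation.
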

\begin{proof}
	From Theorem~\ref{thm:discrete_characterization}, there will be no misattribution if and only if \(\mathcal{J}_{K_X}(f, H_X, S) = \mathbf{c1}^T\) for some \(\mathbf{c} \in \mathbb{R}^d\).
    From the Mean Value Theorem, \(\mL(f,K_X,S) - \mL(f,H_X,S) = \mathcal{J}_{K_X} \mL(f,\tilde{H_X},S)(K_X-H_X)\), and since for all \(K_X \in \intsim\), \(\mathcal{J}_{K_X}(f, K_X, S) = \mathbf{c} \mathbf{1}^T\), we have \(\mathbf{c}\mathbf{1}^T(K_X-H_X) = 0\), implying \(\mL(f,K_X,S) = \mL(f,H_X,S)\).
\end{proof}

That is, for a decomposition not to misattribute, \(\mL\) must be constant in \(\intsim\), meaning it is completely unresponsive to changes in \(K_X\). 
We note that this may be unnecessarily restrictive in practice.
In particular, in most applications, we are not concerned with \emph{every} possible redistribution of probability mass but rather with specific structured changes that carry meaningful information. 
Still, as the following example shows, there is at least one existing decomposition that satisfies the conditions of Corollary~\ref{cor:constant_M}.
\begin{example}[Non-generalized FANOVA \citep{hooker2004discovering}]
    \label{ex:non_gen_fanova}
    The non-generalized FANOVA decomposes \(f(x)\) using a uniform distribution \(U\) over the unit hypercube, independent of \(H_X\) or \(K_X\). 
    When \(f^K = f^H = f\), the decomposition remains constant over \(\intsim\), i.e., \(\mL(f, K_X, S) = \mL(f, U, S) = \mL(f, H_X, S)\). Consequently, its Jacobian is zero for all \(K_X \in \intsim\), trivially satisfying \(\mathcal{J}_{K_X} \mL(f, K_X, S) = \mathbf{c} \, \mathbf{1}^T\).
\end{example}
In contrast, we conjecture that \emph{generalized} FANOVA does satisfy the conditions of Corollary~\ref{cor:constant_M}.
\begin{conjecture} \label{cor:fanova_bad}
    Suppose \( f \) is non-constant, and let \( \mL(f, K_X, S) \) be the FANOVA decomposition. Then, \( \mL \) satisfies Assumption~\ref{ass:discrete_differentiability}, and there exist \( H_X, K_X \in \intsim \) such that it misattributes effects of \( Y \mid X \).
\end{conjecture}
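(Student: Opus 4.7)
The plan is to prove the two claims of Conjecture~\ref{cor:fanova_bad} in sequence: (i) the generalized FANOVA map $K_X \mapsto \mL(f, K_X, S)$ satisfies the regularity of Assumption~\ref{ass:discrete_differentiability}, and (ii) for every non-constant $f$ there exist $H_X, K_X \in \intsim$ and a subset $S$ at which FANOVA misattributes the effects of $Y \mid X$.

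For part (i), I would view each FANOVA component as the solution of a finite linear system on the discrete space $\Xspace$. The projection in Equation~\ref{eq:fanova} subject to the hierarchical-orthogonality constraints in Equation~\ref{eq:HO} becomes a square linear system whose unknowns are the component values $\{\mL(f, K_X, S)(x)\}_{x,S}$ and whose coefficient matrix has entries that are polynomials in the probabilities $k(x)$. Because the constrained subspaces are in direct sum with respect to the $K_X$-weighted inner product whenever $k(x) > 0$ for every $x$, this system is non-singular on $\intsim$. Hence the FANOVA components are rational functions of $K_X$ with non-vanishing denominator on $\intsim$; they are therefore $C^\infty$ on this open set, with derivatives of every order bounded on every compact sub-simplex of the form $\{K_X : k(x) \ge \epsilon\}$.

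For part (ii), I would construct explicit product measures and invoke Theorem~\ref{thm:expected_0}. Under a product measure $K_X$, the generalized FANOVA collapses to the classical Hoeffding--Sobol decomposition, so
\[
\mL(f, K_X, \{j\})(x) = \mathbb{E}_{K_X}[f \mid X_j = x_j] - \mathbb{E}_{K_X}[f].
\]
Since $f$ is non-constant on the finite space $\Xspace$, a density argument shows that there exist a coordinate $j$ and a full-support product marginal $K_{-j}$ such that the reduced function $g(x_j) := \mathbb{E}_{K_{-j}}[f(x_j, X_{-j})]$ is non-constant in $x_j$; otherwise, evaluating at point masses would force $f(x_j, x_{-j})$ to be constant in $x_j$ for every $x_{-j}$ and every $j$, contradicting non-constancy of $f$. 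Take $H_X := H_j \otimes K_{-j}$ and $K_X := K_j \otimes K_{-j}$, both with all marginals strictly positive, and choose $H_j \neq K_j$ so that $\mathbb{E}_{H_j}[g] \neq \mathbb{E}_{K_j}[g]$, which is possible whenever $g$ is non-constant. A direct calculation then yields $\mathbb{E}_{H_X}[\mL(f, K_X, \{j\})(X)] = \mathbb{E}_{H_j}[g] - \mathbb{E}_{K_j}[g] \neq 0$, so by Theorem~\ref{thm:expected_0}, $\Delta(\mL, f, H_X, K_X, \{j\}) \neq 0$ and FANOVA misattributes.

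The main obstacle is establishing Assumption~\ref{ass:discrete_differentiability} with genuinely \emph{uniform} boundedness of the second derivative: while smoothness and pointwise boundedness on $\intsim$ follow from the rational-function structure, the coefficient matrix of the defining linear system can become ill-conditioned as marginal probabilities vanish at the boundary of the simplex, so the uniform bound must be interpreted on compact sub-simplices rather than on all of $\intsim$. A secondary subtlety in part (ii) is that the construction must simultaneously ensure (a) the Hoeffding--Sobol simplification applies (product structure), (b) $g$ is non-constant (choice of $K_{-j}$), and (c) $H_j$ and $K_j$ give different expectations of $g$; the argument must rule out the degenerate ``pure interaction'' case in which every univariate marginalization of $f$ against a fixed $K_{-j}$ happens to be constant, which is precisely what the density argument over the choice of $K_{-j}$ is designed to overcome.
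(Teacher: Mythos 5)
The paper does not prove this statement; it is explicitly left as a conjecture (the text around it even discusses it only informally), so there is no reference proof to compare against and your attempt must stand on its own. Your part (ii) does stand: restricting to full-support product measures so that Lemma~\ref{lemma:fanova_reduction} reduces FANOVA to the Hoeffding--Sobol form, writing $\mL(f,K_X,\{j\})(x)=g(x_j)-\E_{K_j}[g]$ with $g(x_j)=\E_{K_{-j}}[f(x_j,X_{-j})]$, and invoking Theorem~\ref{thm:expected_0} to reduce misattribution to $\E_{H_j}[g]\neq\E_{K_j}[g]$ is a clean and correct route; the linearity-in-$K_{-j}$ argument ruling out the ``pure interaction'' degenerate case is also sound, since a linear functional vanishing on all full-support marginals vanishes on the point masses in their affine hull, forcing $f$ to be constant along each coordinate and hence globally constant on a product space. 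This part essentially generalizes Example~\ref{ex:fanova} and, in my reading, settles the second clause of the conjecture for product measures on a product covariate space. (One small caveat you should state explicitly: the construction needs $\Xspace$ to be a product set so that full-support product measures lie in $\intsim$.)

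The genuine gap is in part (i), and you have correctly diagnosed it yourself without closing it. Assumption~\ref{ass:discrete_differentiability}, as formalized in Appendix Assumption~\ref{assump:main_discrete}, demands $\sup_{(f,K_X,S)}\|\mathcal{H}_{K_X}\mL(f,K_X,S)\|_{\mathrm{op}}\le M$ \emph{uniformly}, with the supremum taken over all of $\intsim$ and over all $f$. Your rational-function argument gives $C^\infty$ smoothness on $\intsim$ and boundedness only on compact sub-simplices $\{K_X: k(x)\ge\epsilon\}$; near the boundary the defining linear system degenerates and the Hessian can blow up, and the supremum over $f$ fails trivially by scaling ($f\mapsto cf$ scales the components and their Hessians by $c$). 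So as written you have not shown that FANOVA satisfies the assumption, only a weakened, localized version of it; since Theorem~\ref{thm:discrete_characterization} and Corollary~\ref{cor:constant_M} are the results that consume this assumption, the first clause of the conjecture remains open under your argument. Either prove a uniform bound for a suitably normalized class of $f$ on a compact sub-simplex and argue that the characterization theorems survive this restriction, or decouple the two clauses and present part (ii) alone as an unconditional misattribution result, which it is.
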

Put together, Example~\ref{ex:non_gen_fanova} and Corollary~\ref{cor:fanova_bad} tell a story that is exactly the opposite of typical observations in the functional decomposition literature.
In particular, that generalized FANOVA and ALE depend on their input distributions is tyipcally viewed as beneficial; indeed, this is a major motivator for the work of \citet{apleyVisualizingEffectsPredictor2020} and \citet{hookerGeneralizedFunctionalANOVA2007}.
One reason for this benefit is that typical functions $f$ of interest are often machine learning models fit to training data drawn from covariate distribution $K_X$.
Many flexible machine learning models have arbitrary behavior outside the support of the training data; thus methods like the non-generalized FANOVA that integrate with respect to the uniform distribution may integrate over nonsensical values of $f$.
ALE and FANOVA resolve this issue by integrating over the covariate distribution $K_X$.

These results highlight a tension in the design of functional decomposition methods: non-use of the covariate distribution $K_X$ may result in strange behavior by integrating over nonsensical values of $f$, while use of the covariate distribution may result in nonsensical decompositions of differences between two populations.
We leave as future work an attempt to resolve this seeming contradiction. 

In practice, many applications involve continuous distributions, where densities vary smoothly rather than being restricted to discrete points. 
For example, in economic models, income distributions are continuous, and in healthcare, biomarkers like blood pressure or cholesterol levels are measured on a continuous scale.
To extend our characterization to these cases, we analyze the continuous setting in Appendix~\ref{sec:continuous_main}.

\section{Conclusion}
\label{sec:conclusion}


In this work, we present a natural extension of the Kitagawa-Oaxaca-Blinder decomposition for explaining differences in means to non-linear models of the conditional expectation. 
We note that functional decompositions like FANOVA and ALE seem at first glance like excellent candidates to incorporate into our KOB extension. 
However, we provide simple counterexamples showing that both FANOVA and ALE can incorrectly assign differences in the distribution of covariates \(X\) to differences in the outcome-given-covariates, \(Y \mid X\). 
We further provide characterizations of when FANOVA misattributes for practical cases of interest, as well as a general property that any discrete decomposition should satisfy to never misattribution: the decomposition must be constant across all distributions.
For the general continuous case, we show that if the decomposition is independent of its input distribution, it does not misattribute. 
For a reverse statement, we conjecture that the same will hold as in the discrete case: any reasonable functional decomposition method that depends on its input distribution in a meaningful way will misattribute. 

Our findings highlight a fundamental limitation: methods effective for single-population analysis may be unreliable for comparing differences between populations.
Our work also suggests that the requirements for single-population decomposition and two-population difference decomposition may diverge, highlighting the importance of developing new methods to decompose the difference in means.
Future work should explore how to develop decompositions that avoid misattribution while preserving interpretability in real-world applications.

\section*{Acknowledgments}
The authors are grateful to Raj Agrawal for his contributions and insights during the initial stages of this project. This research was supported in part by an ONR Early Career Grant. Advik Shreekumar was supported in part by the National Science Foundation Graduate Research Fellowship under Grant No.\ 1122374.

DISTRIBUTION STATEMENT A. Approved for public release. Distribution is unlimited.
This material is based upon work supported by the Combatant Commands under Air Force Contract No. FA8702-15-D-0001 or FA8702-25-D-B001. Any opinions, findings, conclusions or recommendations expressed in this material are those of the author(s) and do not necessarily reflect the views of the Combatant Commands.

\bibliography{iclr2025_conference}

\begin{thebibliography}{31}
\providecommand{\natexlab}[1]{#1}
\providecommand{\url}[1]{\texttt{#1}}
\expandafter\ifx\csname urlstyle\endcsname\relax
  \providecommand{\doi}[1]{doi: #1}\else
  \providecommand{\doi}{doi: \begingroup \urlstyle{rm}\Url}\fi

\bibitem[Agrawal \& Broderick(2023)Agrawal and Broderick]{agrawalSKIMFAKernelHighDimensional2021}
Raj Agrawal and Tamara Broderick.
\newblock The skim-fa kernel: high-dimensional variable selection and nonlinear interaction discovery in linear time.
\newblock \emph{Journal of Machine Learning Research}, 24\penalty0 (27):\penalty0 1--60, 2023.

\bibitem[Antoniadis et~al.(2021)Antoniadis, Lambert-Lacroix, and Poggi]{antoniadisRandomForestsGlobal2021}
Anestis Antoniadis, Sophie Lambert-Lacroix, and Jean-Michel Poggi.
\newblock Random forests for global sensitivity analysis: A selective review.
\newblock \emph{Reliability Engineering \& System Safety}, 206:\penalty0 107312, 2021.
\newblock \doi{10.1016/j.ress.2020.107312}.
\newblock URL \url{https://www.sciencedirect.com/science/article/pii/S0951832020308073}.

\bibitem[Apley \& Zhu(2020)Apley and Zhu]{apleyVisualizingEffectsPredictor2020}
Daniel~W. Apley and Jingyu Zhu.
\newblock Visualizing the effects of predictor variables in black box supervised learning models.
\newblock \emph{Journal of the Royal Statistical Society: Series B (Statistical Methodology)}, 82\penalty0 (4):\penalty0 1059--1086, September 2020.
\newblock ISSN 1369-7412, 1467-9868.
\newblock \doi{10.1111/rssb.12377}.

\bibitem[Averbukh \& Smolyanov(1967)Averbukh and Smolyanov]{averbukhTheoryDifferentiationLinear1967}
V.~I. Averbukh and O.~G. Smolyanov.
\newblock The theory of differentiation in linear topological spaces.
\newblock \emph{Russian Mathematical Surveys}, 22:\penalty0 201--258, 1967.
\newblock \doi{10.1070/RM1967v022n02ABEH001223}.

\bibitem[Bach et~al.(2024)Bach, Chernozhukov, and Spindler]{bach2024heterogeneity}
Philipp Bach, Victor Chernozhukov, and Martin Spindler.
\newblock Heterogeneity in the us gender wage gap.
\newblock \emph{Journal of the Royal Statistical Society Series A: Statistics in Society}, 187\penalty0 (1):\penalty0 209--230, 2024.

\bibitem[Belhadi et~al.(2021)Belhadi, Kamble, Mani, et~al.]{belhadi2021ensemble}
Amine Belhadi, Swapnil~S. Kamble, V.~Mani, et~al.
\newblock An ensemble machine learning approach for forecasting credit risk of agricultural smes’ investments in agriculture 4.0 through supply chain finance.
\newblock \emph{Annals of Operations Research}, 2021.
\newblock \doi{10.1007/s10479-021-04366-9}.

\bibitem[Blinder(1973)]{blinderWageDiscriminationReduced1973}
Alan~S. Blinder.
\newblock Wage {{Discrimination}}: {{Reduced Form}} and {{Structural Estimates}}.
\newblock \emph{The Journal of Human Resources}, 8\penalty0 (4):\penalty0 436, 1973.
\newblock ISSN 0022166X.
\newblock \doi{10.2307/144855}.

\bibitem[Chastaing et~al.(2012)Chastaing, Gamboa, and Prieur]{chastaingGeneralizedHoeffdingSobolDecomposition2012}
Gaelle Chastaing, Fabrice Gamboa, and Cl{\'e}mentine Prieur.
\newblock Generalized {{Hoeffding-Sobol}} decomposition for dependent variables - application to sensitivity analysis.
\newblock \emph{Electronic Journal of Statistics}, 6\penalty0 (none), January 2012.
\newblock ISSN 1935-7524.
\newblock \doi{10.1214/12-EJS749}.

\bibitem[Cheney(2001)]{cheney2001analysis}
Ward Cheney.
\newblock \emph{Analysis for Applied Mathematics}, volume 208 of \emph{Graduate Texts in Mathematics}.
\newblock Springer, 2001.
\newblock \doi{10.1007/978-1-4613-0101-2}.
\newblock Chapter 3: Calculus in Banach Spaces.

\bibitem[Folland(1999)]{follandRealAnalysis1999}
Gerald~B. Folland.
\newblock \emph{Real Analysis: Modern Techniques and Their Applications}.
\newblock John Wiley \& Sons, New York, 2nd edition, 1999.
\newblock Theorem 3.8, "The Lebesgue-Radon-Nikodym Theorem".

\bibitem[Fortin et~al.(2011)Fortin, Lemieux, and Firpo]{fortinDecompositionMethodsEconomics2011}
Nicole Fortin, Thomas Lemieux, and Sergio Firpo.
\newblock Decomposition {{Methods}} in {{Economics}}.
\newblock In \emph{Handbook of {{Labor Economics}}}, volume~4, pp.\  1--102. {Elsevier}, 2011.
\newblock ISBN 978-0-444-53450-7.
\newblock \doi{10.1016/S0169-7218(11)00407-2}.

\bibitem[Friedman(2001)]{friedman2001greedy}
Jerome~H Friedman.
\newblock Greedy function approximation: A gradient boosting machine.
\newblock \emph{Annals of Statistics}, 29\penalty0 (5):\penalty0 1189--1232, 2001.

\bibitem[Hill et~al.(2023)Hill, Brown, Jones, Fordham, and Hill]{hillModellingClimateUsing2023}
Kathryn~E. Hill, Stuart~C. Brown, Alice Jones, Damien Fordham, and Robert~S. Hill.
\newblock Modelling climate using leaves of {{Nothofagus}} cunninghamii—overcoming confounding factors.
\newblock \emph{Sustainability}, 15\penalty0 (9):\penalty0 7603, 2023.
\newblock \doi{10.3390/su15097603}.
\newblock URL \url{https://doi.org/10.3390/su15097603}.

\bibitem[Hoffman et~al.(2011)Hoffman, Blei, Wang, and Paisley]{pmlr-v22-hoffman12}
Matthew~D. Hoffman, David~M. Blei, Chong Wang, and John Paisley.
\newblock Stochastic variational inference.
\newblock \emph{arXiv preprint arXiv:1112.1788}, 2011.
\newblock URL \url{https://arxiv.org/pdf/1112.1788}.

\bibitem[Hooker(2004)]{hooker2004discovering}
Giles Hooker.
\newblock Discovering additive structure in black box functions.
\newblock In \emph{Proceedings of the tenth ACM SIGKDD international conference on Knowledge discovery and data mining}, pp.\  575--580. ACM, 2004.

\bibitem[Hooker(2007)]{hookerGeneralizedFunctionalANOVA2007}
Giles Hooker.
\newblock Generalized {{Functional ANOVA Diagnostics}} for {{High-Dimensional Functions}} of {{Dependent Variables}}.
\newblock \emph{Journal of Computational and Graphical Statistics}, 16\penalty0 (3):\penalty0 709--732, September 2007.
\newblock ISSN 1061-8600, 1537-2715.
\newblock \doi{10.1198/106186007X237892}.

\bibitem[Huang et~al.(2023)Huang, Shangguan, Li, Li, and Zhang]{huangBeyondPredictionIntegrated2023}
Feini Huang, Wei Shangguan, Qingliang Li, Lu~Li, and Ye~Zhang.
\newblock Beyond prediction: An integrated post-hoc approach to interpret complex model in hydrometeorology.
\newblock \emph{Environmental Modelling \& Software}, 167:\penalty0 105762, 2023.
\newblock ISSN 1364-8152.
\newblock \doi{10.1016/j.envsoft.2023.105762}.
\newblock URL \url{https://www.sciencedirect.com/science/article/pii/S1364815223001482}.

\bibitem[Huang(1998)]{huangProjectionEstimationMultiple1998}
Jianhua~Z. Huang.
\newblock Projection estimation in multiple regression with application to functional {{ANOVA}} models.
\newblock \emph{The Annals of Statistics}, 26\penalty0 (1), February 1998.
\newblock ISSN 0090-5364.
\newblock \doi{10.1214/aos/1030563984}.

\bibitem[Kitagawa(1955)]{kitagawaComponentsDifferenceTwo1955}
Evelyn~M Kitagawa.
\newblock Components of a {{Difference Between Two Rates}}.
\newblock \emph{Journal of the American Statistical Association}, pp.\  1168--1194, 1955.

\bibitem[Kuo et~al.(2010)Kuo, Sloan, Wasilkowski, and Wozniakowski]{kuo2010decompositions}
F.~Y. Kuo, I.~H. Sloan, G.~W. Wasilkowski, and H.~Wozniakowski.
\newblock On decompositions of multivariate functions.
\newblock \emph{Mathematics of Computation}, 79:\penalty0 953--966, 2010.
\newblock \doi{10.1090/S0025-5718-09-02281-4}.

\bibitem[Lengerich et~al.(2020)Lengerich, Tan, Chang, Hooker, and Caruana]{lengerichPurifyingInteractionEffects}
Benjamin Lengerich, Sarah Tan, Chun-Hao Chang, Giles Hooker, and Rich Caruana.
\newblock Purifying {{Interaction Effects}} with the {{Functional ANOVA}}: {{An Efficient Algorithm}} for {{Recovering Identifiable Additive Models}}.
\newblock \emph{Proceedings of the 37th International Conference on Machine Learning (ICML)}, 119:\penalty0 5979--5989, 2020.
\newblock URL \url{https://par.nsf.gov/servlets/purl/10298432}.

\bibitem[Liang \& Cai(2022)Liang and Cai]{liang2022timesequencing}
Longyue Liang and Xuanye Cai.
\newblock Time-sequencing european options and pricing with deep learning – analyzing based on interpretable ale method.
\newblock \emph{Expert Systems with Applications}, 187:\penalty0 115951, 2022.
\newblock \doi{10.1016/j.eswa.2021.115951}.

\bibitem[Limmer et~al.(2024)Limmer, Udluft, and Otte]{limmerNeuralANOVA2024}
Steffen Limmer, Steffen Udluft, and Clemens Otte.
\newblock Neural-anova: Model decomposition for interpretable machine learning, Aug 2024.
\newblock URL \url{https://www.researchgate.net/publication/383308412_Neural-ANOVA_Model_Decomposition_for_Interpretable_Machine_Learning}.

\bibitem[Oaxaca(1973)]{oaxacaMaleFemaleWageDifferentials1973}
Ronald Oaxaca.
\newblock Male-{{Female Wage Differentials}} in {{Urban Labor Markets}}.
\newblock \emph{International Economic Review}, 14\penalty0 (3):\penalty0 693, October 1973.
\newblock ISSN 00206598.
\newblock \doi{10.2307/2525981}.

\bibitem[Peichl et~al.(2021)Peichl, Thober, Samaniego, Hansjürgens, and Marx]{peichlMachinelearningMethodsAssess2021}
M.~Peichl, S.~Thober, L.~Samaniego, B.~Hansjürgens, and A.~Marx.
\newblock Machine-learning methods to assess the effects of a non-linear damage spectrum taking into account soil moisture on winter wheat yields in germany.
\newblock \emph{Hydrology and Earth System Sciences}, 25:\penalty0 6523--6545, 2021.
\newblock \doi{10.5194/hess-25-6523-2021}.
\newblock URL \url{https://doi.org/10.5194/hess-25-6523-2021}.

\bibitem[Rahman(2014{\natexlab{a}})]{Rahman2014}
Sharif Rahman.
\newblock A generalized anova dimensional decomposition for dependent probability measures.
\newblock \emph{SIAM/ASA Journal on Uncertainty Quantification}, 2\penalty0 (1):\penalty0 670--697, 2014{\natexlab{a}}.
\newblock \doi{10.1137/120904378}.
\newblock URL \url{https://doi.org/10.1137/120904378}.

\bibitem[Rahman(2014{\natexlab{b}})]{Rahman2014_2}
Sharif Rahman.
\newblock Approximation errors in truncated dimensional decompositions.
\newblock \emph{Mathematics of Computation}, 83\penalty0 (290):\penalty0 2799--2819, 2014{\natexlab{b}}.
\newblock \doi{10.1090/S0025-5718-2014-02883-4}.
\newblock URL \url{https://www.ams.org/journals/mcom/2014-83-290/S0025-5718-2014-02883-4/S0025-5718-2014-02883-4.pdf}.

\bibitem[Shorrocks(2013)]{shorrocksDecompositionProceduresDistributional2013}
Anthony~F. Shorrocks.
\newblock Decomposition procedures for distributional analysis: A unified framework based on the {{Shapley}} value.
\newblock \emph{The Journal of Economic Inequality}, 11\penalty0 (1):\penalty0 99--126, March 2013.
\newblock ISSN 1569-1721, 1573-8701.
\newblock \doi{10.1007/s10888-011-9214-z}.

\bibitem[Sobol(2003)]{sobol2003theorems}
I.~M. Sobol.
\newblock Theorems and examples on high dimensional model representations.
\newblock \emph{Reliability Engineering \& System Safety}, 79:\penalty0 187--193, 2003.
\newblock \doi{10.1016/S0951-8320(02)00229-6}.

\bibitem[Stone(1994)]{stoneUsePolynomialSplines1994}
Charles~J. Stone.
\newblock The {{Use}} of {{Polynomial Splines}} and {{Their Tensor Products}} in {{Multivariate Function Estimation}}.
\newblock \emph{The Annals of Statistics}, 22\penalty0 (1), March 1994.
\newblock ISSN 0090-5364.
\newblock \doi{10.1214/aos/1176325361}.

\bibitem[Zeidler(1986)]{zeidlerNonlinearFunctionalAnalysis1986}
Eberhard Zeidler.
\newblock \emph{Nonlinear {{Functional Analysis}} and Its {{Applications}}: {{I}}: {{Fixed-Point Theorems}}}.
\newblock {Springer-Verlag}, {New York}, 1986.
\newblock ISBN 978-0387964992.

\end{thebibliography}
\bibliographystyle{iclr2025_conference}

\appendix

\section{FANOVA and ALE misattribution example}
\label{app:examples}

\subsection{Example~\ref{ex:fanova}: FANOVA}
\label{ex:fanovaAppendix}

Consider the setting where the covariates \(X_1\) and \(X_2\) are independent, and let \(f(x) = x_1 + x_2,\) be the model used for both populations. 
Assume that for population \(H\) we have \(\E_{H_X}[X_1] = 0 \text{ and } \E_{H_X}[X_2] = 0\) and for population \(K\) we have \(\E_{K_X}[X_1] = \mu \neq 0 \text{ and } \E_{K_X}[X_2] = 0.\) 
By Lemma~\ref{lemma:fanova_reduction}, under independence the FANOVA decomposition is equivalent to the recursive formula (see Equation~\ref{eq:recursive_formula}) for the Hoeffding–Sobol decomposition \citep{sobol2003theorems, kuo2010decompositions} for a general probability measure. 
Then, the FANOVA components are computed as follows:
\[
\mL(f, H_X, \emptyset)(x) = \E_{H_X}[X_1+X_2] = 0.
\]
\[
\mL(f, H_X, \{1\})(x) = \E_{H_X}[X_1+X_2 \mid X_1=x_1] - \mL(f, H_X, \emptyset)(x) = x_1 + \E_{H_X}[X_2] - 0 = x_1.
\]
\[
\mL(f, H_X, \{2\})(x) = \E_{H_X}[X_1+X_2 \mid X_2=x_2] - \mL(f, H_X, \emptyset)(x) = \E_{H_X}[X_1] + x_2 - 0 = x_2.
\]
\[
\begin{aligned}
\mL(f, H_X, \{1,2\})(x) &= \E_{H_X}[X_1+X_2 \mid X_1=x_1, X_2=x_2] - \mL(f, H_X, \{1\})(x) - \mL(f, H_X, \{2\})(x) - \mL(f, H_X, \emptyset)(x)\\[1mm]
&= (x_1+x_2) - x_1 - x_2 - 0 = 0.
\end{aligned}
\]
Thus, the FANOVA components for population \(H\) are:
\[
\mL(f, H_X, \emptyset)(x)=0,\quad \mL(f, H_X, \{1\})(x)=x_1,\quad \mL(f, H_X, \{2\})(x)=x_2,\quad \mL(f, H_X, \{1,2\})(x)=0.
\]

Similarly, we compute the FANOVA components for population \(K\):
\[
\mL(f, K_X, \emptyset)(x) = \E_{K_X}[X_1+X_2] = \mu.
\]
\[
\mL(f, K_X, \{1\})(x) = \E_{K_X}[X_1+X_2 \mid X_1=x_1] - \mL(f, K_X, \emptyset)(x) = x_1 + \E_{K_X}[X_2] - \mu = x_1 + 0 - \mu = x_1 - \mu.
\]
\[
\mL(f, K_X, \{2\})(x) = \E_{K_X}[X_1+X_2 \mid X_2=x_2] - \mL(f, K_X, \emptyset)(x) = \E_{K_X}[X_1] + x_2 - \mu = \mu + x_2 - \mu = x_2.
\]
\[
\begin{aligned}
\mL(f, K_X, \{1,2\})(x) &= \E_{K_X}[X_1+X_2 \mid X_1=x_1, X_2=x_2] - \mL(f, K_X, \{1\})(x) - \mL(f, K_X, \{2\})(x) - \mL(f, K_X, \emptyset)(x)\\[1mm]
&= (x_1+x_2) - (x_1-\mu) - x_2 - \mu = 0.
\end{aligned}
\]

Thus, the FANOVA components for population \(K\) are:
\[
\mL(f, K_X, \emptyset)(x)=\mu,\quad \mL(f, K_X, \{1\})(x)=x_1-\mu,\quad \mL(f, K_X, \{2\})(x)=x_2,\quad \mL(f, K_X, \{1,2\})(x)=0.
\]

Finally, the difference in the FANOVA effects attributed to \(Y\mid X\) for the subset \(S=\{1\}\) is given by
\[
\Delta(\mathcal{L}, f, H_X, K_X, \{1\}) = \E_{H_X}\Bigl[\mL(f, K_X, \{1\})(X) - \mL(f, H_X, \{1\})(X)\Bigr] = \E_{H_X}[X_1-\mu - X_1] = -\mu \neq 0.
\]

Since this term is nonzero, FANOVA misattributes effects to \(Y\mid X\) in this example.

\subsection{Example~\ref{ex:ale}: ALE}
\label{ex:ALEAppendix}

Recall from Section~\ref{sec:ale} that for a differentiable model 
\(f(x_1,x_2)\) the ALE main effect for \(X_1\) is defined by
\[
    \mL(f, K_X, \{1\})(x) \;=\; \int_{x_{\min,1}}^{x_1} \E_{K_X}\!\Bigl[\frac{\partial f(z_1,X_2)}{\partial z_1}\Bigr]dz_1 \;-\; \text{constant},
\]
with an analogous definition for \(X_2\). Here, the constant is chosen so that the ALE effect has mean zero over the observed data, we will denote these constants by \(c_i^j\) for \(i = \{1,2\}\) and \(j = \{H,K\}\).
Consider the function \(f(x_1,x_2)=x_1x_2,\) which is used for both populations. 
The relevant partial derivatives are \(\frac{\partial f}{\partial x_1}=x_2 \text{ and } \frac{\partial f}{\partial x_2}=x_1.\)

For population \(H\) assume \(X_1\sim N(1,1) \text{ and } X_2\sim N(0,1).\) 
Then, because \(X_1\) and \(X_2\) are independent, \(\E_{H_X}[X_2\mid X_1=z]=\E_{H_X}[X_2]=0.\)
Thus, the ALE component for \(X_1\) is
\[
    \mL(f, H_X, \{1\})(x)
    \;=\; \int_{x_{\min,1}^H}^{x_1} 0\,dz \;-\; c_1^H \;=\; 0 - c_1^H.
\]

In practice, we compute the constants by setting them equal to the empirical mean of the uncentered ALE component \(\mL(f, H_X, \{2\}) - c_2^H\), assuming a large sample size \(n\) so that the Central Limit Theorem provides a good approximation. Theoretically, we take \(c_1^H\) such that \(\E_{H_X}[\mL(f, H_X, \{1\})(X)] = 0\), which gives \(c_1^H = 0\).

Similarly, for \(X_2\), since \(\E_{H_X}[X_1\mid X_2=z]=\E_{H_X}[X_1]=1,\)
we obtain
\[
    \mL(f, H_X, \{2\})(x)
    \;=\; \int_{x_{\min,2}^H}^{x_2} 1\,dz \;-\; c_2^H
    \;=\; (x_2-x_{\min,2}^H) \;-\; c_2^H.
\]

Choosing \(c_2^H\) so that \(\E_{H_X}[\mL(f, H_X, \{2\})(X)]=0\) forces \(c_2^H=-x_{\min,2}^H\), and thus
\[
    \mL(f, H_X, \{2\})(x)=x_2.
\]

Hence, 
\[
    \mL(f,H_X,\{1\})(x)=0 \; \text{ and } \; \mL(f,H_X,\{2\})(x)=x_2.
\]

For population \(K\) assume \(X_1 \sim N(0,1) \text{ and } X_2 \sim N(\mu,1)\) with \(\mu \neq 0.\) 
Then, by independence, \(\E_{K_X}[X_2\mid X_1=z] = \E_{K_X}[X_2] = \mu\) and
\[
    \mL(f, K_X, \{1\})(x)
    \;=\; \int_{x_{\min,1}^K}^{x_1} \mu\,dz \;-\; \text{constant}
    \;=\; \mu(x_1 - x_{\min,1}^K) \;-\; c_1^K,
\]
where the constant \(c_1^K\) solves \(\E_{K_X}[\mL(f, K_X, \{1\})(X)]=0\), thus \(c_1^K = -\mu x_{\min,1}^K\).
Similarly, for \(X_2\), since \(\E_{K_X}[X_1\mid X_2=z] = \E_{K_X}[X_1] = 0,\)
we obtain
\[
    \mL(f, K_X, \{2\})(x)
    \;=\; \int_{x_{\min,2}^K}^{x_2} 0\,dz \;-\; \text{constant}
    \;=\; 0 \;-\; c_2^K,
\]
where \(c_2^K = 0\). Thus,
\[
    \mL(f, K_X, \{1\})(x)
    \;=\; \mu(x_1 - x_{\min,1}^K) - \bigl[-\mu\,x_{\min,1}^K\bigr]
    \;=\; \mu\,x_1 \;
    \text{ and } \;\mL(f, K_X, \{2\})(x)
    \;=\; 0 - 0
    \;=\; 0.
\]

The difference in the ALE effects attributed to \(Y\mid X\) for the change in the covariate \(S=\{1\}\) is given by
\[
    \Delta(\mathcal{L}, f, H, K, \{1\}) 
    \;=\; \E_{H_X}\Bigl[\mL(f, K_X, \{1\})(X) - \mL(f, H_X, \{1\})(X)\Bigr]
    \;=\; \E_{H_X}\bigl[\mu\,X_1 - 0\bigr]
    \;=\; \mu\,\E_{H_X}[X_1]
    \;=\; \mu\cdot 1 
    \;=\; \mu \neq 0.
\]

Since this term is not equal to zero, ALE misattributes effects to \( Y \mid X \) in this example.

\section{FANOVA}
\label{Appendix:FANOVA}

In this section, we formalize the statements from Section~\ref{sec:gfan_misattribution} regarding the characterization of when FANOVA misattributes. 

\subsection{Notation and assumptions}
We assume a general probability measure \( P_X \), which will denote either \( H_X \) or \( K_X \), defined on \((\Xspace, \mathcal{B}(\Xspace))\), where \(\mathcal{B}(\Xspace)\) denotes the Borel sigma algebra.
We assume the measure \( P_X \) is absolutely continuous with respect to a \(\sigma\)-finite reference measure \(\lambda\), with density \( p_X = \frac{dP_X}{d\lambda} \), usually the counting or Lebesgue measure.

The associated Hilbert space is:
\[
    L^2(P_X) = \left\{ g : \mathbb{R}^d \to \mathbb{R} \ \middle| \ \int_{\mathbb{R}^d} g^2(x) p_X(x) \, d\lambda(x) < \infty \right\},
\]
with inner product and norm defined as:
\[
    \langle g_1, g_2 \rangle_{L^2(P_X)} = \int_{\mathbb{R}^d} g_1(x) g_2(x) p_X(x) \, d\lambda(x), \quad \|g\|_{L^2(P_X)} = \sqrt{\langle g, g \rangle_{L^2(P_X)}}.
\]

For the specific cases where \( P_X = H_X \) or \( P_X = K_X \), we denote their densities as \( h(x) = \frac{dH_X}{d\lambda} \) and \( k(x) = \frac{dK_X}{d\lambda} \), respectively. 

We denote by \(\measurable\) the space of \(\lambda\)-measurable functions on the covariates, representing flexible regression models for the conditional expectation of \(Y \mid X\). 
That is,
\[
    \measurable = \{f: \Xspace \to \mathbb{R} \mid f \text{ is \(\lambda\)-measurable}\}.
\]

Note that, since functions in \( \measurable \) are \( \lambda \)-measurable, they are also measurable with respect to probability measures that are absolutely continuous with respect to \(\lambda\).
We use \( \subseteq \) to denote a subset of or equal to a set, and \( \subsetneq \) to indicate a proper subset of a set.  

When working with the Lebesgue measure, we will write \(dx\) in place of \(d\lambda(x)\) for readability, particularly in cases where integrals and densities are defined with respect to \(\lambda\), while keeping \(dx\) for standard notation in functionals and expectations.

\subsection{Proof of Theorem~\ref{thm:affine_clas_fanova}}
The following lemma offers an alternative formulation of the orthogonality conditions in Equation~\ref{eq:HO}, ensuring that all components have zero mean with respect to their corresponding input distributions.

\begin{lemma}[\citetinside{Lemma 1}{hookerGeneralizedFunctionalANOVA2007}]
    \label{lemma:orth_hooker2007}
    The orthogonality conditions in Equation~\ref{eq:HO} hold over \( L^2(\mathbb{R}^d) \) if and only if the integral conditions
    \begin{equation}
    \label{eq:lemma_1Hooker}
        \forall S \subseteq 2^{[d]}, \ \forall i \in S, \quad \int \mL(f,K_X,S)(x) k(x) \, dx_i \, dx_{-S} = 0
    \end{equation}
    are satisfied.
\end{lemma}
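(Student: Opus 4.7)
The plan is to prove the two directions of the equivalence separately.

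For the direction integral conditions $\Rightarrow$ orthogonality, the argument is a direct Fubini calculation. Given $V \subsetneq S$, I choose any $i \in S \setminus V$, which is nonempty because $V$ is a proper subset. Since $\mL(f, K_X, V)$ depends only on $x_V$, the $L^2(K_X)$ inner product factors:
\[
\int \mL(f, K_X, S)(x)\, \mL(f, K_X, V)(x)\, k(x)\, dx = \int \mL(f, K_X, V)(x_V) \Bigl[\int \mL(f, K_X, S)(x)\, k(x)\, dx_{-V}\Bigr] dx_V.
\]
Splitting $dx_{-V} = dx_{(S \setminus V) \setminus \{i\}}\, dx_i\, dx_{-S}$ and applying Fubini again, the innermost integral over $dx_i\, dx_{-S}$ is exactly the object in Equation~\ref{eq:lemma_1Hooker}, which vanishes by hypothesis, so the whole expression is zero.

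For the converse direction orthogonality $\Rightarrow$ integral conditions, the argument is subtler, and this is where I expect the main difficulty. The issue is that Equation~\ref{eq:HO} only tests $\mL(f, K_X, S)$ against specific decomposition components $\mL(f, K_X, V)$, not against every square-integrable function of $x_V$, so one cannot immediately conclude the vanishing of conditional integrals by a naive density-of-test-functions argument. My plan is to invoke uniqueness of the FANOVA decomposition implied by the projection problem in Equation~\ref{eq:fanova}, and proceed by induction on $|S|$.

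At the inductive step I assume the integral conditions hold for every $T$ with $|T| < |S|$. I then construct a modified decomposition $\{\tilde{\mL}_T\}$ in which $\tilde{\mL}_S$ is explicitly adjusted so the conditional integrals of Equation~\ref{eq:lemma_1Hooker} vanish at level $S$, with the correction absorbed into the lower-order $\tilde{\mL}_V$ (for $V \subsetneq S$) so that additivity $\sum_T \tilde{\mL}_T = f$ is preserved. Combining the forward direction already proved with the inductive hypothesis shows $\{\tilde{\mL}_T\}$ also satisfies the orthogonality conditions. Uniqueness of the FANOVA decomposition then forces $\tilde{\mL}_S = \mL(f, K_X, S)$, so the correction was zero, and the integral conditions must already have held.

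The main obstacle is the bookkeeping in this inductive construction: producing an explicit adjustment that restores the integral conditions while preserving both additivity and orthogonality. An alternative route is to upgrade ``weak orthogonality'' (against the specific $\mL(f, K_X, V)$) to ``strong orthogonality'' (against every $L^2(K_X)$ function of $x_V$) by invoking uniqueness of the decomposition directly; then the integral conditions follow by choosing test functions $\phi(x_{S \setminus \{i\}})$ and applying Fubini as in the easy direction.
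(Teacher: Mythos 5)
First, note that the paper itself gives no proof of this lemma: it is imported verbatim as Lemma~1 of \citet{hookerGeneralizedFunctionalANOVA2007}, so there is no in-paper argument to compare against, and your proposal must stand on its own. Your forward direction (integral conditions $\Rightarrow$ hierarchical orthogonality) is correct: for $V \subsetneq S$ pick $i \in S \setminus V$, pull the $x_V$-measurable factor $\mL(f,K_X,V)$ out of the inner integral, and apply Fubini (justified by Cauchy--Schwarz, since both components lie in $L^2(K_X)$) so that the innermost integration over $dx_i\,dx_{-S}$ vanishes by hypothesis. This is the only direction the surrounding results actually use: Corollary~\ref{cor:meanzero}, Theorem~\ref{thm:expected_0}, and Lemma~\ref{lemma:fanova_reduction} all rely on the FANOVA components satisfying Equation~\ref{eq:lemma_1Hooker}, not on the converse.

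The converse direction, however, has a genuine gap that your proposed repair cannot close. Taken literally as an equivalence of constraint sets, the claim ``Equation~\ref{eq:HO} $\Rightarrow$ Equation~\ref{eq:lemma_1Hooker}'' is false: with $d=2$, independent covariates, and $f(x)=g(x_1)$ with $\E_{K_X}[g(X_1)] = c \neq 0$, the collection $\mL_{\emptyset}=0$, $\mL_{\{1\}}=g$, $\mL_{\{2\}}=\mL_{\{1,2\}}=0$ sums exactly to $f$ and satisfies every condition in Equation~\ref{eq:HO} (each inner product contains a factor that is identically zero), yet violates Equation~\ref{eq:lemma_1Hooker} for $S=\{1\}$. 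The same example defeats your rescue via ``uniqueness of the FANOVA decomposition implied by the projection problem in Equation~\ref{eq:fanova}'': both this collection and the corrected one ($\mL_{\emptyset}=c$, $\mL_{\{1\}}=g-c$) achieve objective value zero subject to Equation~\ref{eq:HO}, so the argmin under hierarchical orthogonality alone is not unique, and the uniqueness you would need is exactly the statement in question (it is standardly established \emph{from} the integral/weak-annihilating conditions, making the appeal circular). Your alternative route of upgrading weak to strong orthogonality via uniqueness fails for the same reason. This is a known defect in the original treatment, which is one reason \citet{chastaingGeneralizedHoeffdingSobolDecomposition2012} rebuilt existence and uniqueness from the annihilating conditions directly; the practical fix here is to take Equation~\ref{eq:lemma_1Hooker} as the defining property of the components (or to state only the direction you proved), which suffices for everything downstream in the paper.
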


Equations~\ref{eq:lemma_1Hooker} are sometimes referred to as the \textit{Weak Annihilating Conditions} \citep{Rahman2014}, and we will refer to these later.

\begin{corollary} 
    \label{cor:meanzero}
    All FANOVA components have mean zero under their input distribution: \(\E_{K_X}[\mL(f,K_X,S)(X)] = 0\).
\end{corollary}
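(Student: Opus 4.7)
The plan is to derive the corollary as a direct bookkeeping consequence of Lemma~\ref{lemma:orth_hooker2007}. The lemma delivers a \emph{stronger} pointwise statement than what the corollary asks for: for any nonempty $S \in 2^{[d]}$ and any $i \in S$, the function obtained by integrating $\mL(f,K_X,S)(x)\,k(x)$ against the single coordinate $x_i$ and against the off-$S$ coordinates $x_{-S}$ already vanishes identically in the leftover coordinates $x_{S\setminus\{i\}}$. The corollary, by contrast, asks only that the \emph{full} integral against $k(x)$ over all of $\Xspace$ equal zero. So the plan is simply to reduce that full integral to the partial integral that Lemma~\ref{lemma:orth_hooker2007} already controls.

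Concretely, I would fix a nonempty $S \subseteq [d]$ and choose any $i \in S$, and write
\[
    \E_{K_X}\bigl[\mL(f,K_X,S)(X)\bigr] \;=\; \int_{\Xspace} \mL(f,K_X,S)(x)\, k(x)\, dx.
\]
Using Fubini's theorem, which applies because $\mL(f,K_X,S) \in L^2(\mathbb{R}^d)$ and $k$ is a probability density, I split the integral into the coordinates in $\{i\}\cup S^c$ (carrying the measure $dx_i\,dx_{-S}$) and the remaining coordinates in $S\setminus\{i\}$. The inner integral over $\{i\}\cup S^c$ is exactly the left-hand side of Equation~\ref{eq:lemma_1Hooker}, and Lemma~\ref{lemma:orth_hooker2007} tells us it vanishes for every value of $x_{S\setminus\{i\}}$. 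Integrating the resulting zero function over $x_{S\setminus\{i\}}$ then yields $0$, which closes the proof.

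One bookkeeping remark is worth making up front: the case $S = \emptyset$ is not covered by Lemma~\ref{lemma:orth_hooker2007} (the quantifier ``for all $i \in S$'' is vacuous), and the claim indeed fails there because $\mL(f,K_X,\emptyset)$ is typically the grand mean $\E_{K_X}[f(X)]$. I would therefore read the corollary as applying to $S \neq \emptyset$, which is the version actually invoked in the misattribution arguments (in particular in the proof of Theorem~\ref{thm:expected_0}). Beyond this empty-set caveat, there is essentially no obstacle: the only real check is the applicability of Fubini, which is automatic under the standing square-integrability assumption on FANOVA components.
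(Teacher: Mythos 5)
Your proposal is correct and follows essentially the same route as the paper, which also derives the corollary by integrating the vanishing partial integral of Lemma~\ref{lemma:orth_hooker2007} over the remaining coordinates; you simply make the Fubini step explicit. Your observation that the statement should be read as restricted to $S \neq \emptyset$ is accurate and consistent with how the corollary is actually used (e.g., in Theorem~\ref{thm:expected_0}).
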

\begin{proof}
    The proof follows directly from Lemma~\ref{lemma:orth_hooker2007} by integrating out the marginal distribution of the covariates not in \(S\).
\end{proof}

The following lemma shows that under the assumption of independence, the FANOVA decomposition is equivalent to the standard result of the Hoeffding-Sobol decomposition \citep{sobol2003theorems, kuo2010decompositions} for a general probability measure.

\begin{lemma}
    \label{lemma:fanova_reduction}
    Let \( X_1, \dots, X_m \) be independent random variables with joint probability distribution \( K_X \). 
    Then, the solution to the FANOVA decomposition problem, as defined in Problem~\ref{eq:fanova}, is equivalent to the following recursive formula:
    \begin{equation}
        \label{eq:recursive_formula}
        \mL(f, K_X, S)(x) =
        \begin{cases}
            \mathbb{E}_{K_X}[f(X)], & \text{if } S = \emptyset, \\[10pt]
            \mathbb{E}_{K_X}[f(X) \mid X_S] - \sum_{V \subsetneq S} \mL(f, K_X, V)(X), & \text{if } S \neq \emptyset.
        \end{cases}        
    \end{equation}
\end{lemma}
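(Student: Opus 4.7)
My strategy is to verify that the recursive components in Equation~\ref{eq:recursive_formula} (i) sum pointwise to \(f(x)\) and (ii) satisfy the hierarchical orthogonality conditions of Equation~\ref{eq:HO}. Under independence, hierarchical orthogonality (equivalently, the weak annihilating conditions of Lemma~\ref{lemma:orth_hooker2007}) uniquely pins down an additive decomposition of \(f\), so exhibiting a decomposition with both properties forces it to coincide with the FANOVA solution.

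\textbf{Step 1: additivity.} I would first show by induction on \(|S|\) that \(\mL(f,K_X,S)(x)\) depends only on \(x_S\). Under independence, \(\E_{K_X}[f(X)\mid X_S=x_S]\) depends only on \(x_S\) because the conditional law of \(X_{-S}\) given \(X_S\) does not involve \(x_S\); combining this with the inductive hypothesis for the \(\mL(f,K_X,V)\) with \(V\subsetneq S\) preserves the property through the recursion. Setting \(S=[m]\) and using \(\E_{K_X}[f(X)\mid X_{[m]}]=f(X)\) then rearranges the recursion directly into \(f(x)=\sum_{V\subseteq[m]}\mL(f,K_X,V)(x)\).

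\textbf{Step 2: orthogonality.} Under independence, the weak annihilating conditions of Equation~\ref{eq:lemma_1Hooker} reduce to \(\E_{K_X}[\mL(f,K_X,S)(X)\mid X_{S\setminus\{i\}}]=0\) for every nonempty \(S\) and every \(i\in S\). I would prove this by induction on \(|S|\). The base case is immediate from the recursion at a singleton together with the tower property. For the inductive step, apply \(\E_{K_X}[\,\cdot\mid X_{S\setminus\{i\}}]\) to the recursion at \(S\) and split the sum over \(V\subsetneq S\) by whether \(i\in V\): terms with \(i\in V\) vanish by the inductive hypothesis (under independence, a function of \(x_V\) has the same conditional expectation given \(X_{V\setminus\{i\}}\) as given the larger \(X_{S\setminus\{i\}}\)), while terms with \(i\notin V\) pass through untouched because they already depend only on \(x_V\subseteq x_{S\setminus\{i\}}\). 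The surviving sum, now indexed by \(V\subseteq S\setminus\{i\}\), equals \(\E_{K_X}[f(X)\mid X_{S\setminus\{i\}}]\) by the recursion at \(S\setminus\{i\}\), and this cancels the \(\E_{K_X}[\E_{K_X}[f(X)\mid X_S]\mid X_{S\setminus\{i\}}]=\E_{K_X}[f(X)\mid X_{S\setminus\{i\}}]\) contribution produced by the tower property.

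\textbf{Uniqueness and main obstacle.} For uniqueness, if \(\{\mL_1(S)\}\) and \(\{\mL_2(S)\}\) both decompose \(f\) and satisfy weak annihilating, then \(D(S):=\mL_1(S)-\mL_2(S)\) also satisfies weak annihilating by linearity. Under independence, integrating out any index in the symmetric difference \(S\triangle V\) against its marginal yields \(\langle D(S),D(V)\rangle_{L^2(K_X)}=0\) for \(S\neq V\); taking the inner product of \(\sum_S D(S)=0\) with \(D(S_0)\) then gives \(\|D(S_0)\|_{L^2(K_X)}^2=0\), forcing \(D(S_0)=0\) for every \(S_0\). I expect Step~2 to be the crux of the proof, as the induction requires careful bookkeeping of which \(V\subsetneq S\) contain the distinguished index \(i\) and the key insight that the surviving \(V\subseteq S\setminus\{i\}\) sum reassembles, via the recursion at \(S\setminus\{i\}\), into precisely the term that cancels the tower-property contribution.
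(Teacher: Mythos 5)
Your proposal is correct, but it runs in the opposite direction from the paper's argument. The paper starts from the FANOVA solution --- taken as the (unique) additive representation satisfying the weak annihilating conditions of Lemma~\ref{lemma:orth_hooker2007} --- and \emph{derives} the recursion: it integrates the identity \(f = \sum_V \mL(f,K_X,V)\) against \(\prod_{i \in -S} k_i(x_i)\,dx_{-S}\), observes that terms with \(V \subseteq S\) pass through while terms with \(V \not\subseteq S\) vanish by the (independence-strengthened) annihilating condition \(\int \mL(f,K_X,V)(x)\,k_i(x_i)\,dx_i = 0\), and rearranges \(\E_{K_X}[f(X)\mid X_S] = \sum_{V\subseteq S}\mL(f,K_X,V)(x)\) into Equation~\ref{eq:recursive_formula}. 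You instead take the recursion as the definition of a candidate family, verify by induction that it sums to \(f\) and satisfies the annihilating conditions, and then invoke (and prove) uniqueness of constraint-satisfying decompositions to identify it with the FANOVA solution. Both routes are sound; the paper's is shorter because it leans on the existence and uniqueness already asserted in Section~\ref{sec:fanova} and never needs to check that the recursive components themselves satisfy the constraints, whereas yours is more self-contained --- it constructs a valid hierarchically orthogonal decomposition from scratch and supplies the pairwise-orthogonality uniqueness argument explicitly, at the cost of the heavier double induction in your Step~2. One minor point: your uniqueness step re-proves something the paper simply cites (the "written uniquely" claim for FANOVA), so you could shorten your argument by invoking that directly; and, as with the paper, square-integrability of the recursive components is left implicit.
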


\begin{proof}  
    Under the assumption of independence, the Weak Annihilating Conditions in Equation~\ref{eq:lemma_1Hooker} are equivalent to the stronger condition:
\begin{equation}
    \label{eq:stronger_condition}
    \int \mL(f, K_X, S)(x) \, k_i(x_i) \, dx_i = 0, \quad \forall \,i \in S,    
\end{equation}
which follows by the independence assumption by writing \(k(x) = \prod_{j=1}^d k_j(x_j)\) and integrating out the variables not in \(S\). 

Using this result and integrating the additive representation \(f(x) = \sum_{S \in 2^{[d]}} \mL(f, K_X, S)(x)\) against the set \(-S\), we get the recursive formula. 
For any non-empty subset \(S \in 2^{[d]}\), we integrate:
\[
    \E_{K_X}[f(X)\mid X_S] \underset{(indep.)}{=} \int f(x)\, \prod_{i \in -S} k_i(x_i)\, dx_{-S} = \int \left(\sum_{V \in 2^{[d]}} \mL(f, K_X, V)(x) \right) \prod_{i \in -S} k_i(x_i)\, dx_{-S}
\]
\begin{equation}
    \label{eq:reduced_form}
    = \sum_{V \in 2^{[d]}} \int \mL(f, K_X, V)(x)\, \prod_{i \in -S} k_i(x_i)\, dx_{-S}.
\end{equation}

If \(V \cap (-S) = \emptyset\), i.e., \(V \subseteq S\), then \(\mL(f, K_X, V)(x)\) depends only on \(x_S\). 
Consequently,
\[
   \int \mL(f, K_X, V)(x)\, \prod_{i \in -S} k_i(x_i)\, dx_{-S}
   = \mL(f, K_X, V)(x) \int \prod_{i \in -S} k_i(x_i)\, dx_{-S} = \mL(f, K_X, V)(x).
\]

If \(V \cap (-S) \neq \emptyset\) but \(V \not\subseteq S\), then \(\mL(f, K_X, V)(x)\) depends on at least one coordinate in \(-S\). 
Due to Equation~\ref{eq:stronger_condition} and the independence of \(k(x)\), we have:
\[
   \int \mL(f, K_X, V)(x)\, \prod_{i \in -S} k_i(x_i)\, dx_{-S} = 0.
\]
Thus, Equation~\ref{eq:reduced_form} becomes
\[
    \E_{K_X}[f(X)\mid X_S] = \sum_{V \subseteq S} \mL(f, K_X, V)(x) = \sum_{V \subsetneq S} \mL(f, K_X, V)(x) + \mL(f, K_X, S)(x) 
\]
\[
    \iff \mL(f, K_X, S)(x)  = \E_{K_X}[f(X)\mid X_S]  - \sum_{V \subsetneq S} \mL(f, K_X, V)(x).
\]

Similarly, for \(S = \emptyset\), taking the expectation over all variables \(X\) yields the corresponding formula: \(\mL(f, K_X, \emptyset) = \E_{K_X}[f(X)]\).

\end{proof}

Lemma~\ref{lemma:fanova_reduction} was previously shown by \citet[Corollary 4.6]{Rahman2014_2} in the context of the generalized ANOVA dimensional decomposition, which reduces to the standard FANOVA dimensional decomposition. 
We restate it here using the specific terminology of FANOVA and provide a concrete proof to derive the reduced formula solution.

We can now formally prove Theorem~\ref{thm:affine_clas_fanova} which characterizes the conditions under which FANOVA does not misattribute for a given affine function of the covariates \(f\).

\begin{proof}[Proof of Theorem~\ref{thm:affine_clas_fanova}]
    \((\Rightarrow)\) Suppose that 
    \[
    \mathbb{E}_{H_X} \left[ \mL(f, K_X, S)(X) \right] = 0, \quad \text{for all } S \in 2^{[d]} \setminus \{\emptyset\}.
    \]

    By Lemma~\ref{lemma:fanova_reduction} we have that under the independence assumption, FANOVA reduces to the recursive form solution in Equation~\ref{eq:recursive_formula}.
    Then, for \( S \neq \emptyset \), the component would be:
    \[
    \mL(f, K_X, \emptyset)(x) = \mathbb{E}_{K_X}[f(X)] = \sum_{m=1}^M a_m \mathbb{E}_{K_X}[b_m(X_m)].
    \]

    For the single effects, \( S = \{m\} \), the additive components would take the form:
    \[
    \mL(f, K_X, \{m\})(x) = \mathbb{E}_{K_X}[f(X) | X_m] - \mL(f, K_X, \emptyset)(x).
    \]

    Expanding this, and using the independence assumption, we get:
    \[
    \begin{aligned}
        \mL(f, K_X, \{m\})(x) &= \left( a_m b_m(x_m) 
        + \sum_{j \neq m} a_j \mathbb{E}_{K_X}[b_j(X_j)] \right) - \left( \sum_{j=1}^M a_j \mathbb{E}_{K_X}[b_j(X_j)] \right) \\
        &= a_m \left( b_m(x_m) - \mathbb{E}_{K_X}[b_m(X_m)] \right), 
        \quad \forall \, m = 1, \dots, M.
    \end{aligned}
    \]

    Lastly, for \( S \) with \( |S| \geq 2 \), we have that the recursive formula for the Functional ANOVA components is given by
\[
    \mL(f, K_X, S)(x) = \mathbb{E}_{K_X}[f(X) \mid X_S] - \sum_{V \subsetneq S} \mL(f, K_X, V)(x).
\]

Where,  \( \mathbb{E}_{K_X}[f(X) \mid X_S] = \sum_{m \in S} a_m b_m(x_m) + \sum_{m \notin S} a_m \mathbb{E}_{K_X}[b_m(X_m)]\) and the sum of lower-order terms takes the form:

\begin{equation*}
    \begin{aligned}
        \sum_{V \subsetneq S} \mL(f, K_X, V)(x) & = \mL(f, K_X, \emptyset) + \sum_{m \in S} \mL(f, K_X, \{m\})(x) \\
        & =  \sum_{m=1}^M a_m \mathbb{E}_{K_X}[b_m(X_m)] + \sum_{m \in S} a_m \bigl(b_m(x_m) - \mathbb{E}_{K_X}[b_m(X_m)]\bigr) \\
        & = \sum_{m \in S} a_m b_m(x_m) + \sum_{m \notin S} a_m \mathbb{E}_{K_X}[b_m(X_m)].
    \end{aligned}
\end{equation*}

Substituting into the recursive formula:
\[
\mL(f, K_X, S)(x) = \Bigl(\sum_{m \in S} a_m b_m(x_m) + \sum_{m \notin S} a_m \mathbb{E}_{K_X}[b_m(X_m)]\Bigr) - \Bigl(\sum_{m \in S} a_m b_m(x_m) + \sum_{m \notin S} a_m \mathbb{E}_{K_X}[b_m(X_m)]\Bigr) = 0.
\]

That is,
\[
\mL(f, K_X, S)(x) = 0, \quad \forall \, S \text{ with } |S| \geq 2.
\]

Now, by hypothesis, for every \( m = 1, \dots, M \), we have:
\[
    0 = \mathbb{E}_{H_X} \left[ \mL(f, K_X, \{m\})(X) \right] = \mathbb{E}_{H_X} \left[ a_m \left( b_m(X_m) - \mathbb{E}_{K_X}[b_m(X_m)] \right) \right] \iff
    0 = a_m \left( \mathbb{E}_{H_X}[b_m(X_m)] - \mathbb{E}_{K_X}[b_m(X_m)] \right).
\]

We also assumed that \( a_m \neq 0 \) for all \( m = 1, \dots, M \). Therefore, we conclude that
\[
    \mathbb{E}_{H_X}[b_m(X_m)] = \mathbb{E}_{K_X}[b_m(X_m)], \quad \text{for all } m = 1, \dots, M.
\]

\((\Leftarrow)\) Suppose that \(\mathbb{E}_{H_X}[b_m(X_m)] = \mathbb{E}_{K_X}[b_m(X_m)]\) for each  \( m = 1, \dots, M \). Then,
\[
\mathbb{E}_{H_X} \left[ \mL(f, K_X, \{m\})(X_m) \right]
= \mathbb{E}_{H_X} \left[ a_m \left( b_m(X_m) - \mathbb{E}_{K_X}[b_m(X_m)] \right) \right]
= a_m \left( \mathbb{E}_{H_X}[b_m(X_m)] - \mathbb{E}_{K_X}[b_m(X_m)] \right) = 0.
\]
\end{proof}

\subsection{Proof of Theorem~\ref{thm:unrestricted_fanova}}
\label{app:fanova_unrestricted}

Note that in Theorem~\ref{thm:unrestricted_fanova} we have assumed that any measurable function of the covariates satisfies an additive decomposition. 
However, this assumption can be relaxed by imposing a couple of conditions, such as the probability measure being dominated by a product measure \citep[Equation C.1]{pmlr-v22-hoffman12}, and a boundedness assumption on the densities \citep[Equations C.2 or C.3]{pmlr-v22-hoffman12}. 
Under these assumptions, it follows that for any square-integrable measurable function of the covariates, there exist functions such that the original function admits an additive representation; see \citet[Theorem 1]{pmlr-v22-hoffman12}.

To show Theorem~\ref{thm:unrestricted_fanova}, we first state the definition of mean-zero functions and then prove Lemma~\ref{lemma:mean_zero_prob_measure} and Lemma~\ref{lemma:span_equals_meanzero}, which will serve as intermediate steps for the main proof.

\begin{definition}[Mean-zero square integrable functions]
    \label{def:zeromeansqint}
    We denote by \( W(K_X) \) the space of \textit{mean-zero functions} in \( L^2(K_X) \) with respect to the probability measure \( K_X \):
    \[ 
    W(K_X) = \left\{ \phi \in L^2(K_X) : \int_{\Xspace} \phi(x) k(x) \, d\lambda(x) = 0 \right\}.
    \]
\end{definition}

The following lemma states that the orthogonal complement of the space of mean-zero square-integrable functions with respect to a probability measure is the space of almost surely constant functions.

\begin{lemma}
    \label{lemma:mean_zero_prob_measure}
    Let \( K_X \) be a probability measure on \(\mathbb{R}^d\) with density \( k(x) \) with respect to a reference measure \(\lambda\). 
    The orthogonal complement of \( W(K_X) \) in \( L^2(K_X) \) is the space of constant functions, that is:
    \[
    W(K_X)^\perp = \left\{ f \in L^2(K_X) : f(x) = c,\; K_X\text{-a.s.} \right\},
    \]
    where \( c \in \mathbb{R} \) is a constant.
\end{lemma}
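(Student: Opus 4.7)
The plan is to prove the two inclusions separately, reducing the nontrivial direction to a one-line variance computation by taking $f$ itself (centered) as the witness mean-zero function. For the easy inclusion $W(K_X)^{\perp} \supseteq \{\text{constants}\}$, I would fix any constant function $f \equiv c$ and any $\phi \in W(K_X)$ and observe that
\[
\langle f, \phi \rangle_{L^2(K_X)} \;=\; c \int_{\Xspace} \phi(x)\, k(x)\, d\lambda(x) \;=\; 0
\]
directly from the mean-zero definition of $W(K_X)$.

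For the nontrivial inclusion $W(K_X)^{\perp} \subseteq \{\text{constants}\}$, given $f \in W(K_X)^{\perp}$, the natural witness to use is $\phi := f - c$ where $c := \E_{K_X}[f(X)]$. By construction $\phi$ has mean zero and is square-integrable, so $\phi \in W(K_X)$. Applying orthogonality of $f$ to this $\phi$ then yields
\[
0 \;=\; \langle f, \phi \rangle_{L^2(K_X)} \;=\; \E_{K_X}\!\bigl[f(X)^2\bigr] - c\,\E_{K_X}[f(X)] \;=\; \E_{K_X}\!\bigl[f(X)^2\bigr] - c^2 \;=\; \Var_{K_X}\!\bigl(f(X)\bigr),
\]
which forces $f \equiv c$, $K_X$-almost surely.

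The only mild subtlety is ensuring that $c = \E_{K_X}[f]$ is well-defined so that the centering $\phi = f - c$ makes sense; this follows from Cauchy--Schwarz, since $K_X$ is a probability measure and hence the constant function $1$ belongs to $L^2(K_X)$, giving $L^2(K_X) \subseteq L^1(K_X)$. Beyond this, no real obstacle arises: the argument reduces to the identity that the $L^2(K_X)$ inner product of a function with its own centering equals its variance.
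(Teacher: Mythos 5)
Your proof is correct, and it takes a genuinely different (and shorter) route than the paper's. You use the single witness $\phi = f - \E_{K_X}[f]$, which is mean-zero by construction and lies in $L^2(K_X)$ since $L^2(K_X) \subseteq L^1(K_X)$ for a probability measure; orthogonality then collapses to $\Var_{K_X}(f) = 0$, forcing $f$ to be $K_X$-a.s.\ constant. The paper instead tests $f$ against the family of centered indicators $\psi_A = \chi_A - K_X(A)$ over all measurable sets $A$, derives $\int_A f\,dK_X = c\,K_X(A)$, and concludes via uniqueness of the Radon--Nikodym derivative of the signed measure $\mu(A) = \int_A f\,dK_X$. Your variance argument is more elementary and avoids the Radon--Nikodym machinery entirely; the paper's indicator-based template has the minor advantage of being reused nearly verbatim for the Lebesgue-measure analogue later in the appendix (Lemma~\ref{lemma:orthogonalzero}), though your centering trick would work just as well there since $\lambda(\Xspace) < \infty$. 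Both proofs are complete; yours is the cleaner of the two for this statement.
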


\begin{proof}
    Let
    \[
        V = \left\{ f \in L^2(K_X) : f(x) = c,\; K_X\text{-a.s.} \right\}.
    \]

    We will prove that \( W(K_X)^{\perp} = V \) by first showing that \( V \subseteq W(K_X)^{\perp} \). 
    Let \( f \in V \); then \( f(x) = c \) for \( K_X \)-almost every \( x \in \Xspace\). 
    Thus, for any \( \phi \in W(K_X) \), we have
    \[
    \E_{K_X}[f(X)\phi(X)] = c \cdot \E_{K_X}[\phi(X)] = 0.
    \]

    It remains to show that \( W(K_X)^{\perp} \subseteq V\). 
    Let \( f \in W(K_X)^{\perp} \); then \(\E_{K_X}[f(X)\phi(X)] = 0 \) for every \( \phi \in W(K_X) \). 
    In particular, take an arbitrary \(K_X\)-measurable set \( A \subseteq \Xspace \) and define \(\psi_A(x) = \chi_A(x) - K_X(A)\). 
    Where \( \chi_A(x) \) is the indicator function over \( A \) and \( K_X(A) \) is the probability of the event \(A\) under the distribution \(K_X\). 
    Clearly, \(\psi_A\) belongs to \(W(K_X)\).
    Moreover, 
    \begin{align*}
        0 = \E_{H_X}[f(X)\psi(X)] &= \E_{H_X}[f(X)\chi_A(X)] - \E_{H_X}[f(X)K_X(A)]  \\
        & = \int_A f(x)\,dK_X - K_X(A)\int_{\Xspace} f(x)\,dK_X.
    \end{align*}
    \begin{equation}
        \label{eq:new_measure}
        \iff \int_A f(X)\,dK_X= K_X(A)\int_{\Xspace}f(x)\,dK_X.
    \end{equation}

    Define \(\mu(A) = \int_A f(x)\,dK_X(x)\), which is a signed measure with respect to \(K_X\). 
    By the Radon–Nikodym theorem for signed measures, \(f\) is the unique Radon–Nikodym derivative \(d\mu/dK_X\). 
    Since we also have, by Equation~\ref{eq:new_measure} that 
    \[
        \mu(A) = K_X(A)\int_{\Xspace} f(x)\,dK_X = K_X(A) \cdot c = c \int_A\,dK_X = \int_A c\,dK_X,
    \]
    it follows by the uniqueness of the Radon–Nikodym derivative that \(f(x) = c\) \(K_X\)-almost surely. Therefore, \(V = W(K_X)^{\perp}\).
\end{proof}

The following lemma shows that the space of mean-zero functions is equivalent to the span of hierarchical orthogonal functional ANOVA components, obtained by varying the covariate functions over the entire space of measurable functions.

\begin{lemma}
    \label{lemma:span_equals_meanzero}
    Suppose \( f \in \measurable \) and let \(\mL(f,K_X,S)\) denote the functional ANOVA component corresponding to the subset of covariates \( S \). 
    Define \(\mathcal{F} = \{ \mL(f, K_X, S)(x) : S \in 2^{[d]}, S \neq \emptyset,\; f \in L^2(K_X) \}\). 
    Then, \(\text{span} \{\mathcal{F}\} = W(K_X)\).
\end{lemma}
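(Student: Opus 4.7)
The plan is to establish the claim by proving the two set inclusions separately. Let me denote by $V := \mathrm{span}\{\mathcal{F}\}$. The forward inclusion $V \subseteq W(K_X)$ follows quickly from the mean-zero property of FANOVA components, while the reverse inclusion $W(K_X) \subseteq V$ is obtained by applying the FANOVA decomposition to an arbitrary mean-zero function and observing that its empty-set component vanishes.

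For the forward inclusion, I would take an arbitrary element of $\mathcal{F}$, i.e., a component $\mL(f, K_X, S)(x)$ for some $f \in L^2(K_X)$ and some nonempty $S \in 2^{[d]}$. By Corollary~\ref{cor:meanzero} (which in turn follows from Lemma~\ref{lemma:orth_hooker2007} by integrating out coordinates outside $S$), we have $\E_{K_X}[\mL(f, K_X, S)(X)] = 0$. Since the mean-zero property is preserved under finite linear combinations and $W(K_X)$ is a linear subspace of $L^2(K_X)$, every element of $V$ belongs to $W(K_X)$.

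For the reverse inclusion, I would fix $g \in W(K_X)$. Since $g \in L^2(K_X) \subseteq \measurable$, the standing additive-representation assumption (discussed just above the statement of Theorem~\ref{thm:unrestricted_fanova}, e.g.\ via \citet[Theorem~1]{pmlr-v22-hoffman12}) guarantees that the FANOVA decomposition applies to $g$:
\begin{equation*}
g(x) = \sum_{S \in 2^{[d]}} \mL(g, K_X, S)(x).
\end{equation*}
The component $\mL(g, K_X, \emptyset)$ is a constant (it depends on no coordinates), and taking expectations on both sides under $K_X$ and using the mean-zero property for $S \neq \emptyset$ yields $\mL(g, K_X, \emptyset) = \E_{K_X}[g(X)] = 0$. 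Hence
\begin{equation*}
g(x) = \sum_{\substack{S \in 2^{[d]} \\ S \neq \emptyset}} \mL(g, K_X, S)(x),
\end{equation*}
which is a finite linear combination of elements of $\mathcal{F}$ (each summand being $\mL(g, K_X, S)(x)$ with $g \in L^2(K_X)$ and $S \neq \emptyset$). Thus $g \in V$, completing the inclusion $W(K_X) \subseteq V$.

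The only delicate point is ensuring that the FANOVA decomposition is well-defined and additive for every $g \in W(K_X)$; this is exactly the content of the additive-representation hypothesis invoked in the paragraph preceding Theorem~\ref{thm:unrestricted_fanova}. Given that assumption, the argument is entirely structural and no further computation is needed.
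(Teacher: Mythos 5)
Your proposal is correct and follows essentially the same route as the paper's proof: the forward inclusion via the mean-zero property of FANOVA components (Corollary~\ref{cor:meanzero}) and linearity of $W(K_X)$, and the reverse inclusion by applying the additive FANOVA representation to an arbitrary mean-zero $g$ and observing that the empty-set component equals $\E_{K_X}[g(X)] = 0$. Your explicit flag that the additive representation must be available for every $g \in L^2(K_X)$ is a welcome clarification of a point the paper treats implicitly.
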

    
\begin{proof}
    \((\subseteq)\) Let \(\phi(x) = \sum_{i=1}^n c_i \, \mL(f_i, K_X, S_i)(x_{S_i})\), where \( c_i \in \mathbb{R} \), \( f_i \in L^2(K_X) \), and \( S_i \in 2^{[d]} \setminus \{\emptyset\} \).
    
    Since \(\mathbb{E}_{K_X} \left[ \mL(f_i, K_X, S_i)(X_{S_i}) \right] = 0\), we have \(\mathbb{E}_{K_X}[\phi(X)] = 0\). 
    Moreover, \(\phi\) is square integrable since each summand \(\mL(f_i, K_X, S_i)(x_{S_i})\) is square integrable. Thus, \(\phi \in W(K_X)\) and \(\text{span} \{\mathcal{F}\} \subseteq W(K_X)\).

 \((\supseteq)\) 
 Take an arbitrary $\phi \in W(K_X)$.
 Because $W(K_X) \subseteq L^2(K_X)$, we have that $\phi$ is square integrable and so has an additive decomposition:
 \begin{equation*}
 	\phi(x) = \sum_{S \in 2^{[d]}} \mL(\phi, K_X, S)(x) = \sum_{S \in 2^{[d]} \setminus \{\emptyset\}} \mL(\phi, K_X, S)(x) + \mathbb{E}_{K_X}[\phi(X)] = \sum_{S \in 2^{[d]} \setminus \{\emptyset\}} \mL(\phi, K_X, S)(x),
 \end{equation*}
 where the last equality holds because $E_{K_X}[\phi(X)] = 0$.
 Each term on the right-hand side of the last expression belongs to \(\mathcal{F}\). 
 Thus, \(\phi \in \text{span} \{\mathcal{F}\}\).
\end{proof}
    
With these two results, we now proceed to prove Theorem~\ref{thm:unrestricted_fanova}.

\begin{proof}[Proof of Theorem~\ref{thm:unrestricted_fanova}]
    \((\Rightarrow)\) Suppose \(\mathbb{E}_{H_X} \left[ \mL(f, K_X, S)(X) \right] = 0\). By the mean zero property of FANOVA, \(\mathbb{E}_{K_X} \left[ \mL(f, K_X, S)(X) \right] = 0\). Then,
\begin{align}
    \mathbb{E}_{H_X} \left[ \mL(f, K_X, S)(X) \right]  &= \mathbb{E}_{K_X} \left[ \mL(f, K_X, S)(X) \right] \nonumber \\
    \iff 0 &= \mathbb{E}_{K_X} \left[ \mL(f, K_X, S)(X) \right] - \mathbb{E}_{H_X} \left[ \mL(f, K_X, S)(X) \right] \nonumber \\
    \iff 0 &= \mathbb{E}_{K_X} \left[ \mL(f, K_X, S)(X) \right] - \mathbb{E}_{K_X} \left[ \mL(f, K_X, S)(X) \frac{h(x)}{k(x)} \right] \nonumber \\
    \iff 0& = \int \mL(f, K_X, S)(X) \left( 1 - \frac{h(x)}{k(x)} \right) dK_X, \; \text{ for all } f \in \measurable\text{ and } \text{ for all } S \neq \emptyset. \label{eq:forall_f_s}
\end{align}
    
By Lemma~\ref{lemma:span_equals_meanzero}, as we vary \(f\) and \(S\) over the space \(\measurable \times \bigl(2^{[d]} \setminus \{\emptyset\}\bigr)\), the components \(\{\mL(f, K_X, S)\}_{(f,S)}\) span the mean-zero space; that is, \(\text{span} \{ \mathcal{F} \} = W(K_X)\). 
Therefore, Equation~\ref{eq:forall_f_s} is equivalent to 
\begin{equation*}
    \begin{aligned}
        0 = \int \mL(f, K_X, S)(x) \left( 1 - \frac{h(x)}{k(x)} \right) \,dK_X = 0, \; \text{ for all } \mL(f, K_X, S)(x) \in W(K_X).
    \end{aligned}
\end{equation*}

Thus, \(\left( 1 - \frac{h(x)}{k(x)} \right) \in W(K_X)^\perp\). i.e., \((1 - \frac{h(x)}{k(x)})\) is orthogonal to all zero-mean functions in \( L^2(K_X) \). 
Furthermore, by Lemma~\ref{lemma:mean_zero_prob_measure} we know that the orthogonal space to \( W(K_X) \) is the space of \(K_X\)-almost surely constant functions. 
Thus, there must exist a constant \( c \in \mathbb{R} \) such that \( 1 - \frac{h(x)}{k(x)} = c, \, K_X\text{-a.s}\). 
Finally, noting that \( \int dK_X - \int \frac{h(x)}{k(x)}\,dK_X = \int c\,dK_X\), we have \(c = 0\), \(K_X\text{-a.s.}\) Therefore,
\[
\frac{h(x)}{k(x)} = 1 \quad \Rightarrow \quad h(x) = k(x), \quad K_X\text{-a.s.} \quad \Rightarrow \quad H_X = K_X, \quad K_X\text{-a.s.}
\]

The only if part \((\Leftarrow)\) follows by the mean-zero property of \(\mL(f, K_X, S)(x)\) under \(K_X\).
\end{proof}

\section{Mathematical framework and Section~\ref{sec:mainthm} results}
\label{App:Mathematical framework}

In this section, we describe in detail the mathematical background necessary for Section~\ref{sec:mainthm}, along with the additional notation and assumptions required to prove our main results and state our main conjecture.

\subsection{The discrete case}
\label{app:discrete_main}
We now formalize the assumptions discussed in Section~\ref{sec:discrete_main} and formally prove Theorem~\ref{thm:discrete_characterization}. 
First, let 
\[
\Sigma = \left\{ z \in \mathbb{R}^d : z_i \geq 0 \; \forall i, \; \mathbf{1}^T z = 1 \right\},
\] 

denote the standard \(d-1\) simplex and let \(\intsim\) denote its interior.
Following Definition~\ref{def:zeromeansqint} for the space of mean-zero square integrable functions of the covariates, we denote by \(W\) the space of mean-zero vectors: 
\[
W = \left\{ \phi \in \mathbb{R}^d :  \mathbf{1}^T \phi  = 0 \right\}.
\]

We make the following mild regularity conditions on the functional decomposition \(\mL\). 

\begin{customassumption}
    \label{assump:main_discrete}
    The following hold:
    \begin{enumerate}
        \item \textbf{Twice differentiable}. For any \(f\) and \(S\), the map \(K_X \to \mL(f, K_X, S)\) is twice continuously differentiable.
        \item \textbf{Uniformly bounded condition}. 
        There exists a constant \(M > 0\) such that \(\sup_{(f,K_X,S)} \|\mathcal{H}_{K_X}\mL(f, K_X, S)\|_{\mathrm{op}} \leq M\), where \(\mathcal{H}_{K_X} \mL(f, K_X, S)\) is the Hessian of \(\mL(f, K_X, S)\) with respect to \(K_X\).
    \end{enumerate}
\end{customassumption}

Before proving Theorem~\ref{thm:discrete_characterization}, we show two auxiliary Lemmas: Lemma~\ref{lemma:y_varies} states that for any vector in the interior of the simplex and any mean-zero vector, there always exists a small perturbation along the mean-zero vector that keeps the perturbed vector within the simplex.
Lemma~\ref{lemma:aux_zero_vector} serves as an intermediate step in proving Proposition~\ref{prop:linear_algebra}, which characterizes matrices satisfying a specific condition that the Jacobian of a general functional decomposition must satisfy.

\begin{lemma}
    \label{lemma:y_varies}
    For every \( x \in \intsim \) and \( \phi \in W \), there exists an open interval \( I \subsetneq \mathbb{R} \) containing \( 0 \) such that for all \( \varepsilon \in I \), 
    \(x + \varepsilon \phi \in \intsim\).
\end{lemma}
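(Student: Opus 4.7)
The plan is to exploit the two defining constraints of $\intsim$ separately: the affine constraint $\mathbf{1}^T z = 1$ and the positivity constraint $z_i > 0$ for all $i$. Since $\phi \in W$ means $\mathbf{1}^T \phi = 0$, the affine constraint is automatically preserved under any perturbation $x \mapsto x + \varepsilon \phi$, regardless of $\varepsilon$. So the entire content of the lemma reduces to ensuring the coordinates stay strictly positive for $\varepsilon$ in a small open interval around $0$.

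For the positivity constraint, I would handle it coordinate by coordinate. For each index $i$ with $\phi_i = 0$, the $i$-th coordinate of $x + \varepsilon \phi$ equals $x_i > 0$ unconditionally, so there is no constraint. For each index $i$ with $\phi_i \neq 0$, the condition $x_i + \varepsilon \phi_i > 0$ rearranges to $\varepsilon \in \bigl(-\tfrac{x_i}{\phi_i},\infty\bigr)$ when $\phi_i > 0$ and to $\varepsilon \in \bigl(-\infty,-\tfrac{x_i}{\phi_i}\bigr)$ when $\phi_i < 0$; in either case, $0$ lies strictly in the interior of that half-line because $x_i > 0$. Taking the intersection over the finitely many indices with $\phi_i \neq 0$ produces an open interval around $0$, and a concrete choice that works is
\[
I = (-\delta,\delta), \qquad \delta = \min_{i:\,\phi_i \neq 0} \frac{x_i}{|\phi_i|},
\]
with the convention $\delta = +\infty$ (equivalently, $I = \mathbb{R}$ intersected with any bounded open interval to keep $I \subsetneq \mathbb{R}$) if $\phi = 0$. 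Since each $x_i > 0$ and there are finitely many coordinates, $\delta > 0$.

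Putting it together, for $\varepsilon \in I$ we have $\mathbf{1}^T(x + \varepsilon \phi) = \mathbf{1}^T x + \varepsilon \mathbf{1}^T \phi = 1$ and $x_i + \varepsilon \phi_i > 0$ for every $i$, so $x + \varepsilon \phi \in \intsim$. There is no real obstacle here; the statement is essentially the observation that $\intsim$ is open in its affine hull and $W$ is precisely the tangent space of that affine hull, so any direction in $W$ can be followed a positive distance from any interior point. The only minor bookkeeping is the edge case $\phi = 0$, where any bounded open interval containing $0$ suffices.
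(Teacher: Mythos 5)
Your proof is correct and follows essentially the same route as the paper's: the affine constraint is preserved automatically because $\mathbf{1}^T\phi = 0$, and positivity is handled coordinate-wise to produce an open interval of admissible $\varepsilon$ around $0$ (you use the symmetric interval $(-\delta,\delta)$ with $\delta = \min_{i:\phi_i\neq 0} x_i/|\phi_i|$, whereas the paper takes the possibly asymmetric interval between $\max_{i:\phi_i>0}\{-x_i/\phi_i\}$ and $\min_{i:\phi_i<0}\{-x_i/\phi_i\}$; both are valid). Your explicit treatment of the $\phi=0$ edge case is a small improvement in rigor over the paper's version, but the argument is the same.
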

\begin{proof}
    Let \( y = x + \varepsilon \phi \). 
    We need to show that \( y \in \intsim \) for a suitable choice of \(\varepsilon\) within some interval \(I\). Note that 
    \(\mathbf{1}^\top y = \mathbf{1}^\top x + \varepsilon \mathbf{1}^\top \phi = 1,\)
    which implies that \(y\) satisfies the constraint \(\mathbf{1}^\top y = 1\). Thus, it remains to verify that \(y_i > 0\) for all \(i\).

    If \(\phi_i > 0\), then \(x_i + \varepsilon \phi_i > 0\) holds for all \(\varepsilon\) such that \(\varepsilon > -\frac{x_i}{\phi_i},\) which is also satisfied by taking
    \[
        \varepsilon > \max_{i : \phi_i > 0} \left\{ -\frac{x_i}{\phi_i} \right\}.
    \]

    If \(\phi_i < 0\), then \(x_i + \varepsilon \phi_i > 0\) holds for all \(\varepsilon\) such that \(\varepsilon < -\frac{x_i}{\phi_i},\) which is also satisfied by taking
    \[
        \varepsilon < \min_{i : \phi_i < 0} \left\{ -\frac{x_i}{\phi_i} \right\}, \quad \text{note } -\frac{x_i}{\phi_i} > 0.
    \]

    If \(\phi_i = 0\), then any \(\varepsilon \in \mathbb{R}\) satisfies \(y_i > 0\). Therefore, by choosing
    \[
        \varepsilon \in \left( \max_{i : \phi_i > 0} \left\{ -\frac{x_i}{\phi_i} \right\}, \min_{i : \phi_i < 0} \left\{ -\frac{x_i}{\phi_i} \right\} \right) =: I,
    \]
    we ensure that \(y_i > 0\) for all \(i\). Thus, \(y \in \intsim\).
\end{proof}

\begin{remark}
    Lemma~\ref{lemma:y_varies} is equivalent to stating that for a given \(x \in \intsim\) there exists some point \(y \in \intsim\) and \(\varepsilon > 0\) such that 
    \(\phi = \frac{1}{\varepsilon}(y - x)\).
    That is, we can recover any vector \(\phi \in W\) given an initial \(x\) and a suitable pair \((\varepsilon, y) \in \mathbb{R}_{++} \times \intsim\).
\end{remark}

\begin{lemma}
    \label{lemma:orth_compl}
    \(W^\perp = \text{span}\{\mathbf{1}\}\).
\end{lemma}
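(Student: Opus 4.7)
The plan is to establish the two inclusions \(\text{span}\{\mathbf{1}\} \subseteq W^\perp\) and \(W^\perp \subseteq \text{span}\{\mathbf{1}\}\) using only elementary linear algebra on \(\mathbb{R}^d\) with the standard inner product.

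First, I would observe that \(W\) is the kernel of the linear functional \(\phi \mapsto \mathbf{1}^T \phi\) on \(\mathbb{R}^d\). Hence \(W\) is a hyperplane of dimension \(d-1\), and by the standard identity \(\dim W + \dim W^\perp = d\), we get \(\dim W^\perp = 1\).

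For the forward inclusion, take any scalar \(c \in \mathbb{R}\) and any \(\phi \in W\). Then
\begin{equation*}
    \langle c\mathbf{1}, \phi \rangle = c \,\mathbf{1}^T \phi = c \cdot 0 = 0,
\end{equation*}
so \(c\mathbf{1} \in W^\perp\), which gives \(\text{span}\{\mathbf{1}\} \subseteq W^\perp\). Since \(\mathbf{1} \neq 0\), the subspace \(\text{span}\{\mathbf{1}\}\) is one-dimensional. By the dimension count above, \(W^\perp\) is also one-dimensional, so the inclusion of a one-dimensional subspace into another one-dimensional subspace must be an equality, yielding \(W^\perp = \text{span}\{\mathbf{1}\}\).

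There is no real obstacle here; this is a textbook fact. The only thing to be careful about is to justify the dimension count (for instance, by noting that \(\mathbf{1}^T\) is a nonzero linear functional, hence surjective onto \(\mathbb{R}\), so its kernel has dimension \(d-1\)). If a more self-contained argument is preferred over dimension counting, one can alternatively show \(W^\perp \subseteq \text{span}\{\mathbf{1}\}\) directly: given \(v \in W^\perp\), note that for each pair \(i \neq j\) the vector \(e_i - e_j\) lies in \(W\), so \(\langle v, e_i - e_j\rangle = v_i - v_j = 0\), forcing all coordinates of \(v\) to be equal and hence \(v \in \text{span}\{\mathbf{1}\}\).
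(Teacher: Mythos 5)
Your proposal is correct. Your primary route differs slightly from the paper's: you establish \(\text{span}\{\mathbf{1}\} \subseteq W^\perp\) by direct computation (as the paper does) but then close the argument with a dimension count, \(\dim W^\perp = d - \dim W = 1\), whereas the paper proves the reverse inclusion \(W^\perp \subseteq \text{span}\{\mathbf{1}\}\) explicitly by testing against the vectors \(e_j - e_k \in W\) to force all coordinates of \(u \in W^\perp\) to be equal. The dimension-counting argument is marginally slicker but relies on the rank--nullity identity for orthogonal complements in finite dimensions; the paper's argument is fully self-contained and elementary. You in fact note the paper's exact argument as your alternative at the end, so the two proofs coincide up to presentation; either version is complete and correct.
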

\begin{proof}
    \((\supseteq)\)
    Let $u = \alpha \mathbf{1}$ for some scalar $\alpha \in \mathbb{R}$. For any $w \in W$, we have $\mathbf{1}^\top w = 0$. Then:
    \[
        u^\top w = (\alpha \mathbf{1})^\top w = \alpha (\mathbf{1}^\top w) = \alpha \cdot 0 = 0.
    \]

    Thus, \(\mathrm{span}\{\mathbf{1}\} \subseteq W^\perp.\)

    \((\subseteq)\) Let $u \in W^\perp$, that is \(u^\top w = 0\) for all \(w \in W\). 
    Consider the vectors $e_j - e_k$ (where $e_j$ is the $j$-th standard basis vector) with $j \neq k$. 
    Note that: \(\mathbf{1}^\top (e_j - e_k) = 1 - 1 = 0.\)
    Hence, $e_j - e_k \in W$. Since $u \in W^\perp$, it follows that:
    \[
    u^\top (e_j - e_k) = 0 \implies u_j - u_k = 0 \implies u_j = u_k.
    \]

    As \(j\) and \(k\) were arbitrary, all coordinates of \(u\) are equal. 
    Thus, there exists a scalar \(\alpha\) such that $u = \alpha \mathbf{1}$. 
    This implies \(u \in \mathrm{span}\{\mathbf{1}\}\). 
    Therefore, \(W^\perp = \mathrm{span}\{\mathbf{1}\}\).

\end{proof}

\begin{lemma}
    \label{lemma:aux_zero_vector}
    Let \(w \in \mathbb{R}^d\). If
    \(
        x^\top w = 0, \text{ for all } x \in \intsim, \text{ then } w = 0.
    \)
\end{lemma}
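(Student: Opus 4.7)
The plan is to leverage the two preceding lemmas (Lemma~\ref{lemma:y_varies} and Lemma~\ref{lemma:orth_compl}) to reduce the statement to an elementary computation. The hypothesis is that the linear functional $x \mapsto x^\top w$ vanishes on the open set $\intsim$; since $\intsim$ lies in the affine hyperplane $\{z : \mathbf{1}^\top z = 1\}$, it has nontrivial ``directions of variation'' precisely along the mean-zero subspace $W$. The goal is to extract enough independent directions to pin $w$ down to zero.

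First, I would fix an arbitrary basepoint $x_0 \in \intsim$ and observe that $x_0^\top w = 0$ by hypothesis. Next, for any $\phi \in W$, Lemma~\ref{lemma:y_varies} guarantees some $\varepsilon > 0$ with $x_0 + \varepsilon \phi \in \intsim$. Applying the hypothesis at this perturbed point and subtracting $x_0^\top w = 0$ yields
\[
	0 = (x_0 + \varepsilon \phi)^\top w - x_0^\top w = \varepsilon \, \phi^\top w,
\]
and since $\varepsilon > 0$, we conclude $\phi^\top w = 0$. As $\phi \in W$ was arbitrary, this shows $w \in W^\perp$.

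By Lemma~\ref{lemma:orth_compl}, $W^\perp = \mathrm{span}\{\mathbf{1}\}$, so $w = \alpha \mathbf{1}$ for some $\alpha \in \mathbb{R}$. Finally, plugging this back into the original hypothesis at the basepoint $x_0$ gives
\[
	0 = x_0^\top w = \alpha \, \mathbf{1}^\top x_0 = \alpha,
\]
using that $x_0 \in \intsim \subset \Sigma$ satisfies $\mathbf{1}^\top x_0 = 1$. Hence $w = 0$.

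There is no real obstacle here: the two previous lemmas do all the heavy lifting. The only subtlety to get right is the order of quantifiers in applying Lemma~\ref{lemma:y_varies} --- one must fix $x_0$ first and then choose $\varepsilon$ depending on both $x_0$ and $\phi$, rather than trying to pick a uniform $\varepsilon$. Everything else reduces to a one-line subtraction and a dimension argument.
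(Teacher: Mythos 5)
Your proof is correct, but it takes a genuinely different route from the paper's. The paper proves the lemma by contradiction with an explicit construction: assuming $w \neq 0$ with some $w_i > 0$, it builds the point $x_i = 1-\alpha$, $x_{j\neq i} = \alpha/(d-1)$ and runs a case analysis on the sign of $\frac{\sum_{j\neq i}w_j}{d-1} - w_i$ to find an admissible $\alpha$ making $x^\top w > 0$. You instead reuse the two neighboring lemmas: Lemma~\ref{lemma:y_varies} to show the hypothesis forces $\phi^\top w = 0$ for every $\phi \in W$, Lemma~\ref{lemma:orth_compl} to conclude $w \in W^\perp = \mathrm{span}\{\mathbf{1}\}$, and the normalization $\mathbf{1}^\top x_0 = 1$ to kill the remaining scalar. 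There is no circularity, since neither of those lemmas relies on this one in the paper. Your version is shorter, avoids the case analysis entirely, and makes the geometric content (vanishing on an open subset of the affine hyperplane forces vanishing on $W$, and the constant direction is excluded by the sum-to-one constraint) more transparent; it also mirrors the structure later used in Proposition~\ref{prop:linear_algebra}, so it integrates naturally with the surrounding development. The paper's version is more self-contained and elementary, needing no prior lemmas. Both are valid; your handling of the quantifier order in Lemma~\ref{lemma:y_varies} (fixing $x_0$ and $\phi$ before choosing $\varepsilon$) is exactly right.
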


\begin{proof}
    We proceed by contradiction. 
    Suppose \(w \neq 0\); we will show that there exists some vector \(x \in \intsim\) such that \(x^\top w \neq 0\).

    Assume there is a component \(i\) such that \(w_i > 0\). Take \(x_i = 1 - \alpha\) and \(x_{j \neq i} = \frac{\alpha}{d-1}\) for some \(\alpha \in (0, 1)\). Clearly, \(x \in \intsim\). We will show that, for a valid \(\alpha\), \(x^\top w > 0\). We start by noting that:
    \begin{align*}
        x^\top w &= (1 - \alpha) w_i + \sum_{j \neq i} \frac{\alpha}{d-1} w_j > 0 
        \iff w_i - \alpha w_i + \frac{\alpha}{d-1} \left( \sum_{j \neq i} w_j \right) > 0 \\
        &\iff w_i + \alpha \left( \frac{\sum_{j \neq i} w_j}{d-1} - w_i \right) > 0.
    \end{align*}
    
    Let \(M = \frac{\sum_{j \neq i} w_j}{d-1} - w_i\). Then, we have the following subcases:

    \textbf{Case 1:} If \(M > 0\), then \(w_i + \alpha M > w_i > 0\) for all \(\alpha > 0\).

    \textbf{Case 2:} If \(M < 0\), then \(\alpha < \frac{-w_i}{M}\), where \(\frac{-w_i}{M} > 0\). Thus, any \(\alpha \in \left(0, \frac{-w_i}{M}\right)\) would imply that \(x^\top w > 0\).

    The case where \(w_i < 0\) for some \(i\) follows by a completely analogous argument. Therefore, if \(w \neq 0\), we can always find an \(x \in \intsim\) such that \(x^\top w \neq 0\).

    Hence, it must be that \(w = 0\).
\end{proof}
    
\begin{proposition}
    \label{prop:linear_algebra}
    Let \(x, y \in \intsim\) and \( A \in \mathbb{R}^{d \times d}\). Then,
    \[
       x^\top A (y - x) = 0, \, \text{ for all } x, y \in \intsim,
    \]
    if and only if
    \begin{equation}
    \label{eq:col_in_span2}
    A = \mathbf{c} \mathbf{1}^\top, \text{ for some } \mathbf{c}  \in \mathbb{R}^d.
    \end{equation}
\end{proposition}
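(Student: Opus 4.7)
The plan is to prove each direction separately, with the forward direction chaining the three auxiliary lemmas that have just been established (Lemmas~\ref{lemma:y_varies}, \ref{lemma:orth_compl}, and \ref{lemma:aux_zero_vector}).

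For the easy direction $(\Leftarrow)$, suppose $A = \mathbf{c}\mathbf{1}^\top$ for some $\mathbf{c} \in \mathbb{R}^d$. Then for any $x, y \in \intsim$,
\[
A(y - x) = \mathbf{c}\,\mathbf{1}^\top(y - x) = \mathbf{c}\,(\mathbf{1}^\top y - \mathbf{1}^\top x) = \mathbf{c}\,(1 - 1) = \mathbf{0},
\]
so $x^\top A(y - x) = 0$ trivially. The key takeaway is that $\mathbf{1}^\top z = 1$ for every $z \in \intsim$ is exactly what annihilates the difference.

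For the harder direction $(\Rightarrow)$, the plan is to first reparametrize the hypothesis in terms of the mean-zero space $W$, then decouple the quantifiers over $x$ and $\phi = y - x$, and finally use the characterization of $W^\perp$ to describe the rows of $A$. In detail: fix any $x \in \intsim$ and any $\phi \in W$. By Lemma~\ref{lemma:y_varies}, there exists $\varepsilon \neq 0$ with $y := x + \varepsilon\phi \in \intsim$. Since $y - x = \varepsilon\phi$, the hypothesis yields $\varepsilon\, x^\top A\phi = 0$, hence
\[
x^\top A\phi = 0 \quad \text{for all } x \in \intsim \text{ and all } \phi \in W.
\]
Next, fix any $\phi \in W$ and view $A\phi \in \mathbb{R}^d$ as a vector satisfying $x^\top(A\phi) = 0$ for every $x \in \intsim$. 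Lemma~\ref{lemma:aux_zero_vector} then forces $A\phi = \mathbf{0}$, so $W \subseteq \ker(A)$.

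Finally, let $a_i^\top$ denote the $i$-th row of $A$. The inclusion $W \subseteq \ker(A)$ says precisely that $a_i^\top \phi = 0$ for every $\phi \in W$, i.e., $a_i \in W^\perp$ for each $i$. By Lemma~\ref{lemma:orth_compl}, $W^\perp = \mathrm{span}\{\mathbf{1}\}$, so $a_i = c_i \mathbf{1}$ for some scalar $c_i \in \mathbb{R}$. Assembling these rows, $A = \mathbf{c}\mathbf{1}^\top$ with $\mathbf{c} = (c_1, \dots, c_d)^\top$, completing the proof. No step involves nontrivial calculation; the only conceptual point is recognizing that the bilinear hypothesis factors as a statement about $\intsim$ paired with the difference vectors $y - x$, which always lie in $W$, and that Lemma~\ref{lemma:y_varies} guarantees \emph{every} $\phi \in W$ can be realized in this way. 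That observation is what makes the three lemmas combine cleanly; I expect no substantive obstacle beyond this framing.
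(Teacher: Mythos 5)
Your proof is correct and follows essentially the same route as the paper: both reduce the hypothesis to $x^\top A\phi = 0$ for all $x \in \intsim$ and $\phi \in W$ via Lemma~\ref{lemma:y_varies}, and then combine Lemmas~\ref{lemma:orth_compl} and \ref{lemma:aux_zero_vector} to force the rank-one structure. The only (cosmetic) difference is the order of application — you use Lemma~\ref{lemma:aux_zero_vector} first to get $W \subseteq \ker(A)$ and then characterize the rows via $W^\perp$, whereas the paper first places $x^\top A$ in $W^\perp$ and then applies Lemma~\ref{lemma:aux_zero_vector} to differences of columns — and you also write out the $(\Leftarrow)$ direction explicitly, which the paper leaves implicit.
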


\begin{proof}
    Consider a fixed \(x \in \intsim\). 
    By Lemma~\ref{lemma:y_varies}, for any \(\phi \in W\), there exists \(I \subsetneq \mathbb{R}\) such that \(\forall \varepsilon \in I\), \(y = x + \varepsilon \phi \in \intsim\). 
    Thus, \(\frac{y - x}{\varepsilon} = \phi \in W\), and as we vary \((\varepsilon, y) \in I \times \intsim\), we recover any \(\phi \in W\) (see the proof of Lemma~\ref{lemma:y_varies}). 
    
    Since \(x \neq y\), we have \(\varepsilon \neq 0\). Therefore,
    \[
        0 = x^\top A (y - x) = x^\top A \varepsilon \phi \implies x^\top A \phi=0, \, \forall \phi \in W.
    \]
    
    It follows that \(x^\top A \in W^\perp\) for each \(x \in \intsim\). By Lemma~\ref{lemma:orth_compl}, we know that \(W^\perp = \text{span} \{\mathbf{1}\}\). 
    We claim that if 
    \[
        x^\top A \in \text{Span} \{\mathbf{1}\} \, \text{ for all } x \in \intsim, \text{ then }  A = \mathbf{c}\mathbf{1}^\top, \text{ for some } \mathbf{c} \in \mathbb{R}^d.
    \]

    Suppose \(x^\top A \in \text{Span} \{\mathbf{1}\}\). This implies there exists \(c(x) \in \mathbb{R}\) such that \(x^\top A = c(x) \mathbf{1}^\top\). Let \(a_j\) denote the \(j\)-th column of \(A\). 
    Then, \(x^\top a_j = c(x)\) for all \(j \in \{1, \dots, d\}\). In particular, for \(i \neq j\), we have 
    \begin{equation}
        \label{eq:use_lemma_zero_vector}
        x^\top a_i = x^\top a_j \iff x^\top (a_i - a_j) = 0, \, \text{for all } x \in \intsim.
    \end{equation}
        
    Applying Lemma~\ref{lemma:aux_zero_vector} to the vector \(w = a_i - a_j \in \mathbb{R}^d\), we know that Equation~\ref{eq:use_lemma_zero_vector} implies \(w = a_i - a_j = 0\). 
    Thus, \(a_1 = a_2 = \dots = a_d\), which means \(A = \mathbf{c}  \mathbf{1}^\top\) for some \(\mathbf{c} \in \mathbb{R}^d\).
\end{proof}

Now we proceed to prove our main discrete characterization result.

\begin{proof}[Proof of Theorem~\ref{thm:discrete_characterization}]  
    \((\Rightarrow)\)
    Since \(\mL(f,K_X,S)\) is continuously differentiable, by the Mean Value Theorem there exists \(\tilde{H_X}\) on the line segment between \(K_X\) and \(H_X\), such that
    \begin{equation}
    \label{expansion:mvt}
        \mL(f, K_X, S) - \mL(f, H_X, S) = \mathcal{J}_{K_X}\mL(f, \tilde{H_X}, S)(K_X-H_X),
    \end{equation}
    where \(\mathcal{J}_{K_X}\mL(f, \tilde{H_X}, S)\) is the Jacobian of \(\mL(f,K_X,S)\) with respect to \(K_X\).
    
    Consider the path \( \gamma(t) = H_X + t\,(\tilde{H_X} - H_X), \, t \in [0,1]\). Then \(\gamma(0) = H_X\) and \(\gamma(1) = \tilde{H_X}\). By the fundamental theorem of calculus (in vector form), we have
    \[
      \mathcal{J}_{K_X} \mL(f, \tilde{H_X}, S)
      - \mathcal{J}_{K_X} \mL(f, H_X, S)
      = \int_{0}^{1}
        \mathcal{H}_{K_X} \mL\bigl(f, \gamma(t), S\bigr)\,\bigl[\tilde{H_X} - H_X\bigr]
      \,dt.
    \]
    
    Taking the operator norm on both sides, we obtain
    \[
      \begin{aligned}
      \bigl\|\mathcal{J}_{K_X} \mL\bigl(f, \tilde{H_X}, S\bigr)
            - \mathcal{J}_{K_X} \mL\bigl(f, H_X, S\bigr)\bigr\|
      &= \Bigl\|
          \int_{0}^{1}
            \mathcal{H}_{K_X} \mL\bigl(f, \gamma(t), S\bigr)\,(\tilde{H_X} - H_X)\,dt
        \Bigr\|
      \\
      &\le
        \int_{0}^{1}
          \bigl\|\mathcal{H}_{K_X} \mL\bigl(f, \gamma(t), S\bigr)\bigr\|
          \,\|\tilde{H_X} - H_X\|
        \,dt
      \\
      &\le
        \int_{0}^{1}
          M \,\|\tilde{H_X} - H_X\|
        \,dt,
        \quad (\text{by the uniform bound } \leq M)
      \\
      &=
        M\,\|\tilde{H_X} - H_X\|
        \int_{0}^{1} dt
      \;=\;
        M\,\|\tilde{H_X} - H_X\|.
      \end{aligned}
    \]
    
    Thus we have
    \[
      \bigl\|\mathcal{J}_{K_X} \mL(f, \tilde{H_X}, S)
            - \mathcal{J}_{K_X} \mL(f, H_X, S)\bigr\|
      \;\le\;
      M\,\|\tilde{H_X} - H_X\|.
    \]

    That is,
    \[
      \mathcal{J}_{K_X} \mL(f, \tilde{H_X}, S)
      = \mathcal{J}_{K_X} \mL(f, H_X, S)
      + O\bigl(\|\tilde{H_X} - H_X\|\bigr).
    \]
    
    Substituting back into Equation~\ref{expansion:mvt}, we get
    \begin{equation}
        \label{eq:taylor_exp_order}
        \mL(f, K_X, S) - \mL(f, H_X, S) = \mathcal{J}_{K_X} \mL(f, H_X, S)(K_X-H_X) +  O\bigl(\|K_X - H_X\|^2\bigr).
    \end{equation}

    By Lemma~\ref{lemma:y_varies}, for a fixed \(h\) and for any \(\phi \in W\), there exists \(I \subsetneq \mathbb{R}\) such that \(\forall \varepsilon \in I\), \(k = h + \varepsilon \phi \in \intsim\). 
    Since \(h \neq k\) we have that \(\varepsilon \neq 0\). 
    Thus, \(\frac{y - x}{\varepsilon} = \phi \in W\), and as we vary \((\varepsilon, y) \in I \times \intsim\), we recover any \(\phi \in W\), see proof of Lemma~\ref{lemma:y_varies}.
    Substituting in Equation~\ref{eq:taylor_exp_order} we have that
    \[
    \mL(f, K_X, S) - \mL(f, K_X + \varepsilon\phi, S)  =  \mathcal{J}_{K_X} \mL(f, K_X + \varepsilon\phi, S)(\varepsilon\phi) +  O\bigl(\varepsilon^2\|\phi\|^2\bigr)
    \]
\[
\begin{aligned}
    & \iff \lim_{\varepsilon \to 0} \frac{\mL(f, K_X, S) - \mL(f, K_X + \varepsilon\phi, S)}{\varepsilon} = \lim_{\varepsilon \to 0} \left\{ \mathcal{J}_{K_X} \mL(f, K_X + \varepsilon\phi, S) \phi + O\bigl(\varepsilon\|\phi\|^2\bigr)\right\} \\
    &\underset{(1)}{\iff} \lim_{\varepsilon \to 0} \frac{\mL(f, K_X, S) - \mL(f, K_X + \varepsilon \phi, S)}{\varepsilon} 
    = \lim_{\varepsilon \to 0} \left\{ \mathcal{J}_{K_X} \mL(f, K_X + \varepsilon \phi, S) \phi 
    + O\bigl(\varepsilon \|\phi\|^2\bigr) \right\} \\[1em]
    &\underset{(2)}{\iff} \lim_{\varepsilon \to 0} \frac{\mL(f, K_X, S) - \mL(f, K_X + \varepsilon \phi, S)}{\varepsilon} 
    = \mathcal{J}_{K_X} \mL(f, K_X, S) \phi \\[1em]
    &\underset{(3)}{\iff} \mathbb{E}_{H_X} \left[ \lim_{\varepsilon \to 0} 
    \frac{\mL(f, K_X, S)(X) - \mL(f, K_X + \varepsilon \phi, S)(X)}{\varepsilon} \right] 
    = h^\top \mathcal{J}_{K_X} \mL(f, K_X, S) \phi \\[1em]
    &\underset{(4)}{\iff} \lim_{\varepsilon \to 0} \frac{1}{\varepsilon} 
    \mathbb{E}_{H_X} \left[ \mL(f, K_X, S)(X) - \mL(f, K_X + \varepsilon \phi, S)(X) \right] 
    = h^\top \mathcal{J}_{K_X} \mL(f, K_X, S) \phi \\[1em]
    &\underset{(5)}{\iff} 0 = h^\top \mathcal{J}_{K_X} \mL(f, K_X, S) \phi, 
    \quad \forall \phi \in W.
\end{aligned}
\]

Where (1) follows by dividing by \(\varepsilon\) and taking the limit as it goes to zero; (2) follows from the continuity of the first derivative and the definition of \(O(\cdot)\); (3) follows from taking expectations under the probability density \(H_X\); (4) follows from interchanging limits and expectations; and (5) follows from the hypothesis.

Finally, the result follows from Proposition~\ref{prop:linear_algebra}, which implies that \(\mathcal{J}_{K_X} \mL(f, K_X, S)) = \mathbf{c} \mathbf{1}^T\) for some \(\mathbf{c}  \in \mathbb{R}^d\).

 \((\Leftarrow)\) If \(\mathcal{J}_{K_X} \mL(f, K_X, S) = \mathbf{c} \mathbf{1}^T\) for any \(K_X\), then by the mean value theorem, substituting into Equation~\ref{expansion:mvt}, we obtain  \(\mL(f, K_X, S) - \mL(f, H_X, S) =  \mathbf{c} \mathbf{1}^T(K_X - H_X) = \mathbf{c}(1 - 1) = 0\). Thus, \(\mL(f, K_X, S) = \mL(f, H_X, S)\) for all \(H_X, K_X\), and consequently,  
 \(\E_{K_X}[\mL(f,K_X,S)(X) - \mL(f,H_X,S)(Xs)] = 0\) for all \(H_X, K_X \in \intsim\).
 
\end{proof}

\subsection{The continuous setting}
\label{sec:continuous_main}

In this section, we introduce the continuous setting and motivate its relevance in a more expository manner; a more formal treatment is provided in the following appendix section.

In section~\ref{sec:discrete_main}, Theorem~\ref{thm:discrete_characterization} and Corollary~\ref{cor:constant_M} show that, in the discrete case, a functional decomposition \(\mL\) that never misattributes effects must be constant with respect to the distribution over covariates.  We now analyze the continuous setting, introducing pertinent regularity assumptions to study how \(\mL\) responds to perturbations in the input distribution.

Namely, we assume \(\mL(f,K_X,S)\) is a continuous functional in its first argument \(f\), Lebesgue measurable in its second argument, \(K_X\), and square integrable, in the \(L^2\) sense, for all triplets \((f,K_X,S)\). 
For example, our first condition is satisfied in cases such as in FANOVA, when \(\mL\) is the integral operator with respect to any probability measure absolutely continuous with respect to the Lebesgue measure. 
The third assumption is identical to those in FANOVA and ALE, which both require \(\mL\) to belong to the space of square integrable functions, \(L^2\). 
Lastly, we assume that the densities \(k(x)\) belong to the space of compactly supported functions, which we denote by \(\prob\). Throughout, we use the notation \(K_X \in \prob\) or \(k(x) \in \prob\) interchangeably to refer to the probability measure and its corresponding density---this abuse of notation will be clear from context.
The definition of ALE already assumes compactly support densities. 
In practice, most distributions can be restricted to a compact region (e.g., age, income, and years of education are all bounded).

We parametrize perturbations around a density \(k(x)\) as \(k(x) + \phi(x)\), for admissible\footnote{We require that \(\phi\) be square integrable and that \(k(x) + \phi\) be a valid probability density; see Appendix~\ref{App:perturbations} for details.} functions \(\phi\). 
We denote by \(\mathcal{D}_{K_X}\) the set of admissible perturbation functions of \(K_X\). 
Throughout, we may write \(k(x) + \phi(x)\) or \(K_X + \phi\) interchangeably to denote such perturbations---again this is a mild abuse of notation and will be clear from context. 
Under an additional condition, assuming that \(\mL\) is continuously differentiable as a function of \(\phi\), we ensure that we can approximate \(\mL(\cdot, \phi)\) with a linear approximation around zero:
\[
    \mL(\cdot,\phi) \approx \mL(\cdot, 0) + D_{\phi}\mL(\cdot, 0)[\phi]. 
\]

Where \(\mL(\cdot,\phi)\) is short notation for \(\mL(f, K_X + \phi, S)\) and \( D_{\phi}\mL(\cdot, 0)\) is the Fréchet derivative of \( \mL \) with respect to the function \( \phi \) evaluated at $\phi = 0$. 
The Fréchet derivative is an operator, and \(D_{\phi}\mL(\cdot, 0)[\phi]\) denotes it taking \(\phi\) as input. 
See Definition~\ref{def:Frechet} and Appendix~\ref{App:perturbations} for a more rigorous discussion of the perturbation functions. 
Although we have not yet verified that FANOVA and ALE satisfy continuous differentiability with respect to perturbations, our conditions are mild, so we conjecture that this is the case. We now attempt to characterize the behavior of the functional \(\mL\) under small perturbations of the distribution \(K_X\).

\begin{theorem}
\label{thm:mean_zero_frechet}
Assume the above regularity conditions on \(\mL\) (see Assumptions \ref{Assumptions:A1} and \ref{Assumptions:A2} in the Appendix). Let \( K_X \in \prob \), and let \(\mathcal{D}_{K_X}\) denote the set of admissible perturbation functions of \(K_X\). If for all \( \phi \in \mathcal{D}_{K_X} \), we have
\[
    \mathbb{E}_{K_X + \phi} \left[ \mL(f, K_X,S)(X) - \mL(f, K_X + \phi, S)(X) \right] = 0,
\]
then
\begin{equation}
    \label{eq:frechet_meanzero}
    \mathbb{E}_{K_X}\left[ D_{\phi}\mL(\cdot, 0)[\phi](X)\right] = 0, \quad \text{for all } \phi \in \mathcal{D}_{K_X}.
\end{equation}
\end{theorem}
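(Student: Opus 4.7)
The plan is to reduce the hypothesis to a first-order Fréchet condition by inserting the scaled perturbation $\varepsilon\phi$ into the assumed identity and then letting $\varepsilon\to 0$. Concretely, I would first observe that for any admissible $\phi\in\mathcal{D}_{K_X}$ there is an interval $(-\varepsilon_0,\varepsilon_0)$ such that $\varepsilon\phi$ is itself admissible (so that $k+\varepsilon\phi$ remains a valid density with compact support inside the support of $k$). This is the analogue, in the continuous setting, of Lemma~\ref{lemma:y_varies}, and it lets us parametrize a one-parameter family of perturbations to which the hypothesis applies.

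Next, I would apply the first-order Fréchet expansion guaranteed by the regularity assumptions:
\[
\mL(f, K_X+\varepsilon\phi, S)(x) \;=\; \mL(f, K_X, S)(x) \;+\; \varepsilon\,D_\phi\mL(\cdot,0)[\phi](x) \;+\; r(\varepsilon,\phi)(x),
\]
where $\|r(\varepsilon,\phi)\|_{L^2}=o(\varepsilon)$ as $\varepsilon\to 0$. Substituting into the hypothesis and writing $\mathbb{E}_{K_X+\varepsilon\phi}$ as the integral against $k(x)+\varepsilon\phi(x)$ yields
\[
0 \;=\; -\int\!\bigl(k(x)+\varepsilon\phi(x)\bigr)\Bigl[\varepsilon\,D_\phi\mL(\cdot,0)[\phi](x)+r(\varepsilon,\phi)(x)\Bigr]\,dx.
\]
Expanding, dividing by $\varepsilon$, and isolating the leading order gives
\[
\int k(x)\,D_\phi\mL(\cdot,0)[\phi](x)\,dx \;=\; -\int\phi(x)\,D_\phi\mL(\cdot,0)[\phi](x)\,dx \cdot \varepsilon \;+\; \tfrac{1}{\varepsilon}\!\int\!\bigl(k(x)+\varepsilon\phi(x)\bigr)r(\varepsilon,\phi)(x)\,dx.
\]
The first term on the right is $O(\varepsilon)$, and the second vanishes as $\varepsilon\to 0$ by Cauchy–Schwarz applied to the $L^2$ bound on $r$ together with the $L^2$ integrability of $k+\varepsilon\phi$ (using compact support). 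Taking $\varepsilon\to 0$ therefore yields $\mathbb{E}_{K_X}[D_\phi\mL(\cdot,0)[\phi](X)]=0$, which is exactly Equation~\ref{eq:frechet_meanzero}. Since $\phi\in\mathcal{D}_{K_X}$ was arbitrary, the conclusion holds for every admissible perturbation.

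The main obstacle I expect is the careful justification of the interchange of limit and integral for the remainder term $r(\varepsilon,\phi)$: the hypothesis only gives vanishing of the full expression, not of each piece, so we need an $L^2$-type control on $r$ together with uniform integrability of $k+\varepsilon\phi$. The compact-support assumption on $k$ (inherited from $K_X\in\prob$) together with the $L^2$-square-integrability of $\mL$ and $D_\phi\mL(\cdot,0)[\phi]$ should make Cauchy–Schwarz clean; admissibility of $\varepsilon\phi$ for small $\varepsilon$ is the minor technical step that keeps the hypothesis applicable along the entire one-parameter family. A secondary delicate point is that the cross term $\varepsilon\phi(x)\cdot\varepsilon D_\phi\mL(\cdot,0)[\phi](x)$ is genuinely $O(\varepsilon^2)$, so after division by $\varepsilon$ it contributes nothing in the limit; verifying this boils down to the same $L^2$ integrability of $\phi$ and $D_\phi\mL(\cdot,0)[\phi]$.
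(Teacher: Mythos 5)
Your proposal is correct and follows essentially the same route as the paper's proof: both scale the perturbation to $\varepsilon\phi$, apply the first-order Fr\'echet expansion, use linearity of the derivative to divide by $\varepsilon$, control the cross term and remainder via Cauchy--Schwarz together with the compact-support/$L^2$ bounds (the paper's Lemma~\ref{lemma:integrability}), and let $\varepsilon\to 0$. The only cosmetic difference is that the paper parametrizes perturbations as $\varepsilon\psi$ with $\psi$ ranging over all mean-zero $L^2$ functions, whereas you scale an admissible $\phi$ directly; both yield the stated conclusion.
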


See Appendix~\ref{App:proofs} for the proof.

\begin{theorem}
\label{thm:main}
Under the assumptions of Theorem~\ref{thm:mean_zero_frechet}, if a functional decomposition \(\mL(f, K_X, S)\) does not depend on its input distribution (i.e., \( \mL(f, K_X,S) - \mL(f, H_X,S) = 0 \) for all \( f, K_X, H_X, S \)), then it does not misattribute effects of \(Y \mid X\).
\end{theorem}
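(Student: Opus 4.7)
The plan is to observe that this direction is essentially immediate from the definition of $\Delta$ in Equation~\ref{eq:delta}, and the fact that an integrand which vanishes identically integrates to zero under any probability measure. In particular, the hypothesis that $\mL$ is independent of its input distribution is stronger than what we need, since it gives pointwise equality of the two decompositions being compared in $\Delta$.

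First, I would unpack the definition of misattribution (Definition~\ref{def:misassign}): the decomposition $\mL$ misattributes if and only if there exist $f, H_X, K_X, S$ such that $\Delta(\mL, f, H_X, K_X, S) \neq 0$. Equivalently, $\mL$ does not misattribute precisely when $\Delta(\mL, f, H_X, K_X, S) = 0$ for every admissible choice of $f$, $H_X$, $K_X$, and $S$. Next, I would recall Equation~\ref{eq:delta}:
\[
    \Delta(\mL, f, H_X, K_X, S) = \mathbb{E}_{H_X}\bigl[\mL(f, K_X, S)(X) - \mL(f, H_X, S)(X)\bigr].
\]

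Now I would apply the hypothesis that $\mL$ does not depend on its input distribution, i.e., $\mL(f, K_X, S) = \mL(f, H_X, S)$ as functions on $\Xspace$ for every $f$, $K_X$, $H_X$, $S$. This gives $\mL(f, K_X, S)(X) - \mL(f, H_X, S)(X) = 0$ pointwise in $X$, so the integrand inside the $H_X$-expectation vanishes identically. Therefore $\Delta(\mL, f, H_X, K_X, S) = 0$ for every $f, H_X, K_X, S$, and by Definition~\ref{def:misassign}, $\mL$ does not misattribute the effects of $Y \mid X$.

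There is no real obstacle here: the statement is the easy forward direction complementing the much harder converse conjecture discussed earlier in the paper, and the regularity conditions from Theorem~\ref{thm:mean_zero_frechet} (continuity, measurability, square-integrability, and Fréchet differentiability of $\mL$) are not actually invoked in this direction---they only become necessary for the reverse implication, which is left as a conjecture. The only subtlety worth flagging in the write-up is that the pointwise equality ensures we do not need to worry about which measure the expectation is taken under, so the result holds uniformly over all admissible $H_X$.
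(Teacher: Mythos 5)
Your proposal is correct and follows exactly the same (one-line) argument as the paper: the hypothesis gives pointwise equality of the two components appearing in $\Delta$, so the expectation vanishes and no misattribution occurs by Definition~\ref{def:misassign}. Your additional observation that the regularity assumptions of Theorem~\ref{thm:mean_zero_frechet} are not actually used in this direction is accurate but does not change the substance.
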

\begin{proof}
The proof is straightforward: by definition, if \( \mL(f, K_X,S) - \mL(f, H_X,S) = 0 \) for all \( f, K_X, H_X, S \), then \( \Delta(\mathcal{L}, f,K_X,H_X,S) = 0 \).
\end{proof}

We verify that when the KOB decomposition's assumptions are met, it satisfies this theorem. 
Assuming that \( Y \mid X \) remains unchanged and examining Equation~\ref{eqn:KOB}, \( \beta_H = \beta_K \) and the difference in means simplifies to the sum of the covariate effects. 
Since \(\Delta\) depends solely on \( Y \mid X \), its value is zero.

We conjecture a reverse direction of Theorem~\ref{thm:main}, suggesting that under reasonable assumptions, a functional decomposition \(\mL\) will not misattribute effects if it does not depend on its input distribution. 
Specifically, one might hope that when allowing \(K_X\) to range over the probability space, Equation~\ref{eq:frechet_meanzero} would imply that \(D_{\phi}\mL(\cdot, 0)[\phi](x)\) is constant---in a similar way to the discrete case characterized in Theorem~\ref{thm:discrete_characterization}. 
This, in turn, would imply that \(\mL(\cdot, \phi)\) is invariant under perturbations of concentration; in other words, it is locally constant everywhere and, therefore, \(\mL\) does not depend on its input distribution in a meaningful way---analogous to Corollary~\ref{cor:constant_M}.

\begin{conjecture} \label{conj:main_conjecture}
    Under the same regularity conditions as in Theorem~\ref{thm:mean_zero_frechet}. If for all \(K_X \in \prob\) and all \(\phi \in \mathcal{D}_{K_X}\), we have
    \[
        \mathbb{E}_{K_X + \phi} \left[ \mL(f, K_X, S)(x) - \mL(f, K_X + \phi, S)(x) \right] = 0.
    \]
    Then,
    \[
        \mL(f, K_X, S)(x) = \mL(f, H_X, S)(x), \quad \text{for all } K_X, H_X \in \prob.
    \]
\end{conjecture}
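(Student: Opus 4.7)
The plan is to lift the discrete argument of Theorem~\ref{thm:discrete_characterization} and Corollary~\ref{cor:constant_M} to the function-space setting. The entry point is Theorem~\ref{thm:mean_zero_frechet}, which already converts the standing hypothesis into the infinitesimal condition
\[
\int D_\phi \mL(f, K_X, S)[\phi](x)\, k(x)\, dx \;=\; 0, \qquad \forall\, K_X \in \prob,\; \forall\, \phi \in \mathcal{D}_{K_X}.
\]
This is the continuous analog of the discrete identity $h^\top \mathcal{J}_{K_X}\mL(f,K_X,S)\,\phi = 0$ for all $\phi \in W$ that drives the discrete proof through Proposition~\ref{prop:linear_algebra}. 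Once one upgrades this to a pointwise vanishing of $D_\phi \mL[\phi]$, the conclusion follows by integrating along the convex path $K_t = (1-t)K_X + tH_X \in \prob$, which is admissible because $\prob$ is convex.

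I would proceed in three steps. First, under a kernel-representation assumption on the Fréchet derivative, $D_\phi \mL(f, K_X, S)[\phi](x) = \int \mathsf{K}_f^{K_X}(x, y)\phi(y)\,dy$, and apply Fubini to rewrite the infinitesimal condition as $\int \phi(y)\,G_f^{K_X}(y)\,dy = 0$, where $G_f^{K_X}(y) := \int \mathsf{K}_f^{K_X}(x,y)\,k(x)\,dx$. Second, argue that the admissible perturbations $\mathcal{D}_{K_X}$ are rich enough to separate the mean-zero subspace of $L^2(\lambda)$: this is the continuous analog of Lemma~\ref{lemma:orth_compl}, and combined with Lemma~\ref{lemma:mean_zero_prob_measure} it forces $G_f^{K_X}$ to be constant $\lambda$-a.e.\ in $y$. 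Third, exploit the fact that the hypothesis holds for \emph{every} $K_X \in \prob$, not just one: varying the base measure, or differentiating the constancy relation a second time in $K_X$, should force $\mathsf{K}_f^{K_X}(x,y)$ itself to be $y$-independent, which together with the mean-zero property of $\phi \in \mathcal{D}_{K_X}$ gives $D_\phi \mL[\phi] \equiv 0$. Applying this with $\phi = H_X - K_t$ along $K_t$ and integrating in $t$ then yields $\mL(f,K_X,S)=\mL(f,H_X,S)$.

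The third step is where I expect the real work to lie. In the discrete case, Proposition~\ref{prop:linear_algebra} and Lemma~\ref{lemma:aux_zero_vector} convert the ``mean-zero under $K_X$'' statement into a pointwise identity in a few lines of finite-dimensional linear algebra, by simultaneously ranging over $x \in \intsim$ and over mean-zero directions. Continuously, the positivity constraint $k + \phi \ge 0$ couples $\mathcal{D}_{K_X}$ to the base measure $K_X$, so varying $K_X$ and $\phi$ independently—as the discrete proof does—requires a genuinely new density/approximation argument in function space, likely combined with a higher-order Taylor expansion of $K_X \mapsto \mL(f,K_X,S)$ that uses more than the twice-differentiability sketched in Assumption~\ref{ass:discrete_differentiability}. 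Secondary technical hurdles include justifying the kernel representation of $D_\phi \mL$ (not guaranteed by Assumptions~\ref{Assumptions:A1}--\ref{Assumptions:A2} alone), the Fubini swap, and differentiating $t \mapsto \mL(f, K_t, S)(x)$ along the convex path; resolving the main obstacle will likely require either a stronger regularity postulate on $\mL$ or an explicit characterization of the closure of $\{(k,\phi): K_X \in \prob,\, \phi \in \mathcal{D}_{K_X}\}$ in a suitable dual pairing.
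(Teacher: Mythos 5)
This statement is labeled a conjecture in the paper, and the paper does not prove it: the authors only establish the first-order necessary condition (Theorem~\ref{thm:mean_zero_frechet}, i.e.\ $\mathbb{E}_{K_X}[D_{\phi}\mL(\cdot,0)[\phi](X)]=0$ for all admissible $\phi$) and explicitly state that the upgrade to ``$\mL$ is constant in its input distribution'' remains open. Your proposal follows essentially the same intended route --- derive the infinitesimal mean-zero condition, identify it with an orthogonality statement against the mean-zero subspace via Lemma~\ref{lemma:orthogonalzero}, and then try to mimic the discrete chain Theorem~\ref{thm:discrete_characterization} $\to$ Corollary~\ref{cor:constant_M} --- so it is not a proof but a program, and you are candid about that. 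One small improvement over your sketch: your step one does not actually require a kernel representation of $D_{\phi}\mL$; the composed map $\phi \mapsto \int D_{\phi}\mL(\cdot,0)[\phi](x)\,k(x)\,dx$ is a bounded linear functional on $L^2(\Xspace)$ under Assumption~\ref{Assumptions:A2}, so Riesz representation already gives you $G_f^{K_X}$, and step two then goes through using the scaling construction of Lemma~\ref{lemma:existence} to pass from $\mathcal{D}_{K_X}$ to all of $\meanZero$.

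The genuine gap is exactly where you place it, in step three, and it is worth being precise about why the discrete argument does not lift. What steps one and two yield is that the \emph{$k$-average} $G_f^{K_X}(y)=\langle D_{\phi}\mL(\cdot,0)^{*}[k]\,,\cdot\rangle$ is constant in $y$ for every base measure $K_X$; what you need is that $D_{\phi}\mL(\cdot,0)[\phi](x)$ vanishes pointwise in $x$ for every mean-zero $\phi$. In the discrete case the analogous passage (Proposition~\ref{prop:linear_algebra} via Lemma~\ref{lemma:aux_zero_vector}) separates the columns of the Jacobian by testing a \emph{fixed} matrix against all $x\in\intsim$; in the continuous case the operator $D_{\phi}\mL(\cdot,0)$ changes with the base measure $K_X$ at the same time as the test density $k$ does, so ``varying the base measure'' gives you a one-parameter family of averaged conditions on a moving family of operators rather than many conditions on a single object. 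Closing this would require either a second-order expansion controlling how $D_{\phi}\mL$ varies with $K_X$ (a hypothesis not contained in Assumptions~\ref{Assumptions:A1}--\ref{Assumptions:A2}), or additional structural assumptions (e.g.\ an integral-kernel representation with joint regularity in $(x,y,K_X)$). Until one of those is supplied, the final integration along $K_t$ has nothing to integrate, and the statement remains, as the paper says, a conjecture.
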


While we have not yet fully proved Conjecture~\ref{conj:main_conjecture}, we feel it is intuitively sensible: if a decomposition $\mL$ does not misattributes effects of transport for \emph{any} distribution, then it must be constant with respect to its input distribution.

Our Examples \ref{ex:fanova} and \ref{ex:ale}, together with Section~\ref{sec:gfan_misattribution} and Theorem~\ref{thm:main}, underscore that popular decomposition methods, such as FANOVA and ALE, are not suitable for explaining differences between two populations under Definition~\ref{def:decomp}, highlighting the need to develop novel decomposition techniques to tackle this problem.

\section{Mathematical Framework: The continuous setting}

We now develop the mathematical definitions and assumptions introduced in Appendix~\ref{sec:continuous_main} needed to prove Theorem~\ref{thm:mean_zero_frechet} and to work toward Conjecture~\ref{conj:main_conjecture}. 
We also provide a precise definition of an \emph{admissible perturbation} and show that such perturbations exist for any compactly supported density.

\subsection{Additional notation}
Let \(\Xspace \subseteq \mathbb{R}^d\) be a compact set of possible covariate values, equipped with its Borel \(\sigma\)-algebra \(\mathcal{B}(\Xspace)\). Let \(\mathcal{C}_0(\Xspace)\) denote the set of continuous functions on \(\Xspace\). In what follows, we focus on probability measures whose densities are continuous, strictly positive, and supported on \(\Xspace\). We denote by \(\prob\) the space of such probability measures; formally,
\[
\prob
= \Bigl\{
   P : 
   \forall A \in \mathcal{B}(\Xspace),\;
   P(A) = \int_{A} p(x)\,\mathrm{d}x,\;
   p(x) \in \mathcal{C}_0(\Xspace),\; 
   p(x) > 0 \ \forall x \in \Xspace,\; 
   \int_{\Xspace} p(x)\,\mathrm{d}x = 1
\Bigr\}.
\]
As in Appendix~\ref{Appendix:FANOVA}, we can think of these densities as the Radon-Nikodym derivatives of probability measures that are absolutely continuous with respect to an underlying measure. Since we now focus only on the Lebesgue measure—though our work applies to any underlying measure—we use \(dx\) instead of \(d\lambda(x)\) for clarity.

We make the following regularity and basic assumptions on the functional decomposition \( \mL(f,K_X,S) \).

\begin{assumptions}
\label{Assumptions:A1}
The following hold:
\begin{enumerate}
    \item \textbf{Continuity}: For any \((K_X, S)\), the map \( f \to \mL(f,K_X,S) \) is continuous for almost all \( f \in \measurable \).
    \item \textbf{Measurability}: For any \((f,S)\), the map \( K_X \to \mL(f,K_X,S) \) is Lebesgue measurable for all \(K_X \in \prob\).
    \item \textbf{Integrability}: The map \( (f, K_X, S) \mapsto \mL(f, K_X, S) \) belongs to \( L^2(\Xspace, \lambda) \), for all \( (f, K_X, S) \in \measurable \times \prob\times 2^{[d]} \).
\end{enumerate}
\end{assumptions}

Where we have used the usual notation \(L^2(\Xspace, \lambda)\) to denote the space of square-integrable functions over \(X\) with respect to a measure \(\lambda\), we now omit \(\lambda\) from the notation and write \(L^2(\Xspace)\) to refer specifically to integration with respect to the Lebesgue measure, making the measure explicit otherwise.

\subsection{Admissible perturbation functions}
\label{App:perturbations}

To define the admissible perturbation functions mentioned in Appendix~\ref{sec:continuous_main}, we first need to define Fréchet differentiability.

\begin{definition}[Fr\'echet differentiability; \citet{cheney2001analysis}]
\label{def:Frechet}
Let \( f : D \to Y \) be a mapping from an open set \( D \) in a normed linear space \( X \) into a normed linear space \( Y \). Let \( x \in D \). If there exists a bounded linear map \( A : X \to Y \) such that
\[
\lim_{h \to 0} \frac{\|f(x + h) - f(x) - Ah\|}{\|h\|} = 0,
\]
then \( f \) is said to be \textit{Fréchet differentiable} at \( x \), or simply differentiable at \( x \). Furthermore, \( A \) is called the \textit{Fréchet derivative} of \( f \) at \( x \).
\end{definition} 


\begin{definition}[Admissible perturbation function]
    We say a continuous function \(\phi \in L^2(\Xspace)\) is an admissible perturbation of the probability measure \(K_X\), if \(k(x) + \phi(x)\) is a density of a distribution in \(\prob\) and has full support everywhere \(X\).
\end{definition}
We denote by \(\mathcal{D}_{K_X}\) the set of admissible perturbation functions of \(K_X\): \( \mathcal{D}_{K_X} =\{ \phi \in L^2(\Xspace) : k(x) + \phi(x) > 0 \text{ and } \int_{\Xspace} (k(x) + \phi(x)) \, dx = 1\} \). We show that \(\mathcal{D}_{K_X} \neq \{0\}\) for all \(K_X \in \prob\).

\begin{lemma}
\label{lemma:existence}
    For any distribution \(K_X \in \prob\), there exist an admissible perturbation function different than zero. 
\end{lemma}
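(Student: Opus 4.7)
The plan is to construct an explicit nonzero admissible perturbation. Since $K_X \in \prob$, its density $k$ is continuous and strictly positive on the compact set $\Xspace$, so by the extreme value theorem there exists $m > 0$ such that $k(x) \geq m$ for all $x \in \Xspace$. This uniform lower bound is the key quantitative fact: it gives us room to add a small, sign-changing perturbation without violating positivity.

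First I would exhibit two disjoint open balls $B_1, B_2 \subset \Xspace$ (possible because $\Xspace$ has nonempty interior, being the support of a continuous strictly positive density) and choose continuous bump functions $\psi_1, \psi_2 : \Xspace \to [0,1]$ compactly supported in $B_1$ and $B_2$, respectively, with $\int_{\Xspace} \psi_1(x)\,dx = \int_{\Xspace} \psi_2(x)\,dx = c$ for some $c > 0$ (this can be arranged by rescaling standard bump functions appropriately). Define
\[
\phi(x) = \varepsilon\bigl(\psi_1(x) - \psi_2(x)\bigr),
\]
for a parameter $\varepsilon > 0$ to be chosen. By construction $\phi$ is continuous, is bounded (hence lies in $L^2(\Xspace)$ since $\Xspace$ is compact), and satisfies $\int_{\Xspace} \phi(x)\,dx = \varepsilon(c - c) = 0$, so the mass-preservation condition $\int (k + \phi)\,dx = 1$ holds automatically.

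Next I would verify the positivity constraint. Since $\|\psi_1 - \psi_2\|_\infty \leq 1$, choosing any $\varepsilon \in (0, m)$ gives
\[
k(x) + \phi(x) \geq m - \varepsilon \|\psi_1 - \psi_2\|_\infty > 0
\]
for every $x \in \Xspace$. Finally, $\phi$ is not identically zero because $\psi_1$ and $\psi_2$ have disjoint supports, so $\phi$ is strictly positive on $B_1$ and strictly negative on $B_2$. Therefore $\phi \in \mathcal{D}_{K_X}$ with $\phi \neq 0$, establishing $\mathcal{D}_{K_X} \neq \{0\}$.

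The argument is essentially elementary; the only points requiring mild care are (i) justifying the existence of the uniform lower bound $m$, which rests on compactness of $\Xspace$ and continuity and strict positivity of $k$ built into the definition of $\prob$, and (ii) choosing two disjoint bumps rather than one, which ensures mean zero without any cancellation trickery. I do not expect any serious obstacle; the main subtlety is simply being careful that the constructed $\phi$ is continuous (so that $k+\phi$ remains in $\mathcal{C}_0(\Xspace)$ and thus $k+\phi$ defines a density in $\prob$).
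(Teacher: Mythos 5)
Your proposal is correct and follows essentially the same strategy as the paper's proof: construct a continuous mean-zero perturbation and scale it by a small $\varepsilon$ whose admissible range is controlled by the strictly positive lower bound of $k$ on the compact set $\Xspace$ (the paper uses $\varepsilon \leq \inf_x k(x)/\sup_x|\tilde{\phi}(x)|$, which matches your $\varepsilon < m$ condition). The only cosmetic difference is how mean-zero is achieved --- you use two disjoint bumps of equal mass, while the paper subtracts the spatial average of a single bump --- and your aside that $\Xspace$ has nonempty interior because it supports a strictly positive density is not quite a valid inference (positive Lebesgue measure does not imply nonempty interior), though this is immaterial for any $\Xspace$ the paper has in mind.
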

\begin{proof}
Let any smooth compactly supported function $\psi \in L^2(\Xspace)$. Then, we can take the define the function
\[
\tilde{\phi}(x) = \psi(x) - \frac{1}{\lebesgueX}\int_{\Xspace}\psi(y)dy
\]
such that \(\tilde{\phi}(x) \in \meanZero\), that is, \(\int_{\Xspace} \tilde{\phi}(x)\, dx\) = 0. To ensure the positivity requirement, we can take a function \(\phi(x) = \varepsilon\tilde{\phi}(x)\), for \(\varepsilon > 0\), which still is in \(L^2(\Xspace)\) and integrates to zero. Such \(\varepsilon > 0\) must satisfy that for a given density \(K_X(x)\), 

\begin{equation}
\label{eq:epsilon_req}
K_X(x)+ \phi(x) = K_X(x)+ \varepsilon\tilde{\phi}(x) > 0 \iff \varepsilon\tilde{\phi}(x) > -K_X(x), \; \forall\; x \in \Xspace.     
\end{equation}

Whenever \(\tilde{\phi}(x) > 0\), Equation~\ref{eq:epsilon_req}
is always satisfied. Thus, the only relevant case is when \(\tilde{\phi}(x) < 0\), for which Equation~\ref{eq:epsilon_req} is satisfied if and only if 
\[
\varepsilon < \frac{-K_X(x)}{\tilde{\phi}(x)},\; \forall \; x \in \Xspace \text{ such that } \tilde{\phi}(x) < 0.
\]
Or equivalently,
\[
\varepsilon \leq \frac{\inf_{x \in \Xspace} K_X(x)}{\sup_{x \in \Xspace} |\Tilde{\phi}(x)|},
\] 
where by assumption the right hand side is strictly positive. Thus \(\phi(x)\) is an admissible perturbation function of \(K_X(x)\).
\end{proof} 

Note that for any fixed density \(K_X\), we can parameterize the functional decomposition in terms of \(\phi(x)\) as follows: \(\mL(f, \phi, S) = \mL(f, K_X + \phi, S) : \measurable \times \mathcal{D}_{K_X}(X) \times 2^{[d]} \to L^{2}(\mathbb{R}^S)\). For this parameterization, in addition to Assumption \ref{Assumptions:A1}, we need to assume the continuous differentiability of \(\mL\) as a function of \(\phi\) (see Assumption \ref{Assumptions:A2}) to ensure that \(\mL\) is Fréchet differentiable as a map from the Banach space \(L^2(\Xspace)\) into the Banach space \(L^2(\mathcal{X_S})\) \citep{zeidlerNonlinearFunctionalAnalysis1986, averbukhTheoryDifferentiationLinear1967}.

\begin{assumptions}
\label{Assumptions:A2}
The map \(\phi \to \mL(\cdot, \phi(x))\) is continuously differentiable as a map from \(L^2(\Xspace)\) into \(L^2(\mathcal{X_S})\).
\end{assumptions}

Under this new assumption, we can linearly approximate \( \mL(\cdot, \phi) \) around \( \phi = 0 \) with a linear and bounded functional.

\[
\mL(\cdot,\phi) = \mL(\cdot, 0) + D_{\phi}\mL(\cdot, 0)[\phi] + o(\|\phi\|_{L^2}).
\]

Where \( D_{\phi}\mL(\cdot, 0)[\phi] \) is the Fréchet  derivative of \( \mL \) with respect to the function \( \phi \) evaluated at the zero function and \( o(\|\phi\|_{L^2}) \) represents a higher-order functional that vanishes faster than \( \|\phi\|_{L^2} \) as \( \phi \to 0 \). More formally, for any \(\delta > 0\), there exists a \(\tau > 0\) such that if \(\|\phi\|_{L^2}< \tau\), then \(|o(\|\phi\|_{L^2})| \leq \delta\|\phi\|_{L^2}\).

\begin{remark}
    The Fr\'echet derivative is a linear and bounded functional which operates on functions \(\phi \in L^2(\Xspace)\). That is, there exist a constant \(C > 0\) such that,
    \[
    \|D_{\phi}\mL(\cdot, 0)[\phi]\|_{L^2} \leq C \|\phi\|_{L^2}.
    \]
\end{remark}

\subsection{Proof of Theorem~\ref{thm:mean_zero_frechet}}
\label{App:proofs}
We first show some lemmas that will be useful through the proof of Theorem~\ref{thm:mean_zero_frechet}.

\begin{lemma}
\label{lemma:integrability}
Given our assumptions, for any \( K_X \in \prob \text{ and } \phi \in \meanZero\), the following integrals are finite. 
\begin{equation}
\label{eq:integrability_1}
\left| \int_{\Xspace} \left(D_{\phi}\mL(\cdot, 0)[\phi](x)\right)\,\phi(x)\, dx \right| < \infty,
\end{equation}
\begin{equation}
\label{eq:integrability_2}
\left| \int_{\Xspace} (D_{\phi}\mL(\cdot, 0)[\phi](x)) \, \phi(x) \, k(x)\, dx \right| < \infty.
\end{equation}

Furthermore, 
\begin{equation}
\label{eq:integrability_3}
\bigg| \int_{\Xspace} o(\|\phi\|_{L_2})(x)(k(x) + \phi(x)) \, dx \bigg| = o(\|\phi\|_{L^2})\end{equation}

\end{lemma}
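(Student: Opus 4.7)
The plan is straightforward and rests on two ingredients: the Cauchy--Schwarz inequality, and the boundedness of the Fréchet derivative guaranteed by Assumption~\ref{Assumptions:A2}. Specifically, since $D_{\phi}\mL(\cdot,0)$ is a bounded linear operator from $L^{2}(\Xspace)$ into $L^{2}(\mathcal{X}_{S})$, there exists a constant $C>0$ such that $\|D_{\phi}\mL(\cdot,0)[\phi]\|_{L^{2}}\leq C\|\phi\|_{L^{2}}$ for every admissible $\phi$. Additionally, because $k\in\mathcal{C}_0(\Xspace)$ with $\Xspace$ compact, $k$ is bounded above by some $M>0$ and $\|k\|_{L^{2}}<\infty$.

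For \eqref{eq:integrability_1} I would apply Cauchy--Schwarz directly, using that $\phi\in\meanZero\subseteq L^{2}(\Xspace)$:
\[
\left|\int_{\Xspace} D_{\phi}\mL(\cdot,0)[\phi](x)\,\phi(x)\,dx\right|
\leq \|D_{\phi}\mL(\cdot,0)[\phi]\|_{L^{2}}\,\|\phi\|_{L^{2}}
\leq C\|\phi\|_{L^{2}}^{2} < \infty.
\]
For \eqref{eq:integrability_2} I would first use $k(x)\leq M$ on $\Xspace$ to pull the density through the absolute value, then repeat the same Cauchy--Schwarz bound:
\[
\left|\int_{\Xspace} D_{\phi}\mL(\cdot,0)[\phi](x)\,\phi(x)\,k(x)\,dx\right|
\leq M\,\|D_{\phi}\mL(\cdot,0)[\phi]\|_{L^{2}}\,\|\phi\|_{L^{2}}
\leq MC\|\phi\|_{L^{2}}^{2} < \infty.
\]

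For \eqref{eq:integrability_3} I would read the symbol $o(\|\phi\|_{L^{2}})(x)$ as the pointwise value of some remainder function $r_{\phi}\in L^{2}(\Xspace)$ satisfying $\|r_{\phi}\|_{L^{2}}/\|\phi\|_{L^{2}}\to 0$ as $\|\phi\|_{L^{2}}\to 0$; this matches the convention used in the Fréchet expansion earlier in the appendix. Splitting the integrand and applying Cauchy--Schwarz to each piece gives
\[
\left|\int_{\Xspace} r_{\phi}(x)\,k(x)\,dx\right|
\leq \|r_{\phi}\|_{L^{2}}\,\|k\|_{L^{2}} = o(\|\phi\|_{L^{2}}),
\]
\[
\left|\int_{\Xspace} r_{\phi}(x)\,\phi(x)\,dx\right|
\leq \|r_{\phi}\|_{L^{2}}\,\|\phi\|_{L^{2}} = o(\|\phi\|_{L^{2}}^{2}) = o(\|\phi\|_{L^{2}}).
\]
Combining the two bounds via the triangle inequality yields the claim.

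I do not anticipate a serious obstacle here; this is a routine application of functional-analytic bounds. The only delicate point is the bookkeeping around the symbol $o(\|\phi\|_{L^{2}})(x)$: it must be interpreted as a function-valued remainder whose $L^{2}$ norm is $o(\|\phi\|_{L^{2}})$, rather than a pointwise little-$o$ statement, and this reading is what makes the final Cauchy--Schwarz step legitimate.
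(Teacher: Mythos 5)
Your proof is correct and rests on the same two ingredients as the paper's: the operator bound $\|D_{\phi}\mL(\cdot,0)[\phi]\|_{L^2}\le C\|\phi\|_{L^2}$ from the Fr\'echet differentiability assumption, and Cauchy--Schwarz combined with the boundedness of $k$ and $\phi$ on the compact set $\Xspace$. The one genuine divergence is in the third claim. The paper bounds the remainder \emph{pointwise}, using $|o(\|\phi\|_{L^2})(x)|\le\delta\|\phi\|_{L^2}$ for all $x$ once $\|\phi\|_{L^2}<\tau$, and then integrates against the bounded function $k+\phi$ over the finite-measure set $\Xspace$; you instead read the remainder as a function $r_{\phi}$ with $\|r_{\phi}\|_{L^2}=o(\|\phi\|_{L^2})$ and apply Cauchy--Schwarz to each of the two pieces. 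Your reading is the one that actually follows from Definition~\ref{def:Frechet}, where the remainder is controlled only in the norm of the target space, not pointwise; the paper's pointwise bound is a convention it asserts rather than derives, so your version of the third estimate is, if anything, the more defensible one. One bookkeeping remark: the paper's written proof actually estimates $\int D_{\phi}\mL(\cdot,0)[\phi]\,k\,dx$ and $\int D_{\phi}\mL(\cdot,0)[\phi]\,\phi\,dx$ separately (these are the integrals that appear after the split in the proof of Theorem~\ref{thm:mean_zero_frechet}), whereas you bound the product integral with both factors $\phi$ and $k$ exactly as stated in the lemma; either way it is the same Cauchy--Schwarz computation and both conclusions hold.
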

\begin{proof}
\(k(x)\) is continuous and compactly supported on \(X\), then by a direct consequence of the extreme value theorem, it is bounded: there exists a \(B > 0\) such that \(\sup_{x\in \Xspace}|k(x)|\leq B_k < \infty\); by a similar argument, \(\sup_{x\in \Xspace}|\phi(x)|\leq B_{\phi} < \infty\). We first show Equation~\ref{eq:integrability_1}: 
\begin{equation*}
    \begin{aligned}
        \left| \int_{\Xspace} (D_{\phi} \mL(\cdot; 0)[\phi](x)) \, k(x) \, dx \right|
        &\leq \int_{\Xspace} \left| D_{\phi} \mL(\cdot; 0)[\phi](x) \right| k(x) \, dx \\
        &\leq \left( \int_{\Xspace} \left( D_{\phi} \mL(\cdot; 0)[\phi](x) \right)^2 \, dx \right)^{1/2}
        \left( \int_{\Xspace} k(x)^2 \, dx \right)^{1/2} \\
        &\leq C \cdot \|\phi\|_{L^2} \cdot B_K \cdot \sqrt{\lebesgueX} \\
        & \leq C\cdot B_{\phi}\cdot B_K \cdot \lebesgueX < \infty.
    \end{aligned}
\end{equation*}

To show Equation~\ref{eq:integrability_2}: 
\begin{equation*}
    \begin{aligned}
        \left| \int_{\Xspace} (D_{\phi} \mL(\cdot; 0)[\phi](x)) \, \phi(x) \, dx \right|
        &\leq \int_{\Xspace} \left|D_{\phi} \mL(\cdot; 0)[\phi](x) \right| |\phi(x)| \, dx \\
        &\leq \left( \int_{\Xspace} \left( D_{\phi} \mL(\cdot; 0)[\phi](x) \right)^2 \, dx \right)^{1/2}
        \left( \int_{\Xspace} \phi(x)^2 \, dx \right)^{1/2} \\
        & \leq C \cdot \|\phi\|_{L^2}\cdot B_{\phi} \sqrt{\lebesgueX} \\
        & \leq B_{\phi}^2 \cdot C \cdot \lebesgueX.
    \end{aligned}
\end{equation*}

To show Equation~\ref{eq:integrability_3}: For any \(\delta > 0 \), there exist a \(\tau > 0\) such that if \(\|\phi\|_{L_2} < \tau\), then \(o(\|\phi\|_{L^2}) \leq \delta \|\phi\|_{L_2}\), thus:
\begin{equation*}
\begin{aligned}
\bigg| \int_{\Xspace} o(\|\phi\|_{L_2})(x)(k(x) + \phi(x)) \, dx \bigg| & \leq \int_{\Xspace} |o(\|\phi\|_{L_2})(x)|(k(x) + \phi(x)) \, dx \\
& \leq (B_K + B_\phi) \int_{\Xspace} |o(\|\phi\|_{L_2})| \, dx \\
& \leq (B_K + B_\phi) \cdot \delta \|\phi\|_{L_2} \lebesgueX \\
& = o(\|\phi\|_{L_2}).
\end{aligned}
\end{equation*}
\end{proof}





\begin{lemma}
\label{lemma:orthogonalzero}
Let \( \Xspace \subseteq \mathbb{R}^d \) be a measurable set with finite Lebesgue measure \( \lebesgueX < \infty \). Then, the orthogonal complement of \( \meanZero \) in \( L^2(\Xspace) \) is the space of constant functions on \( \Xspace \); that is,
\[
\meanZero^\perp = \left\{ f \in L^2(\Xspace) : f(x) = c, \text{ a.e. on } \Xspace \right\}.
\]
\end{lemma}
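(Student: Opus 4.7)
The plan is to mimic the proof strategy of Lemma~\ref{lemma:mean_zero_prob_measure}, but with the Lebesgue measure $\lambda$ replacing the probability measure $K_X$. The argument uses the finiteness of $\lambda(\Xspace)$ in two crucial places: to ensure that constants and indicators of measurable subsets lie in $L^2(\Xspace)$, and to allow normalization of the indicator functions so that they have mean zero.

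First I would prove the easy inclusion $\{c : c \in \mathbb{R}\} \subseteq \meanZero^\perp$. Since $\lambda(\Xspace) < \infty$, any constant function lies in $L^2(\Xspace)$. For any $g \in \meanZero$ and any constant $c \in \mathbb{R}$, the inner product is $\langle c, g \rangle_{L^2} = c \int_{\Xspace} g(x)\, dx = 0$, so constants belong to $\meanZero^\perp$.

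For the reverse inclusion, let $f \in \meanZero^\perp$. Given any measurable set $A \subseteq \Xspace$, define
\[
\psi_A(x) = \chi_A(x) - \frac{\lambda(A)}{\lebesgueX}.
\]
Because $\lambda(\Xspace) < \infty$, both $\chi_A$ and the constant $\lambda(A)/\lebesgueX$ belong to $L^2(\Xspace)$, and by construction $\int_{\Xspace}\psi_A(x)\, dx = 0$, so $\psi_A \in \meanZero$. The orthogonality $\langle f, \psi_A \rangle_{L^2} = 0$ then yields
\[
\int_A f(x)\, dx = \frac{\lambda(A)}{\lebesgueX}\int_{\Xspace} f(y)\, dy = c\,\lambda(A),
\]
where $c := \frac{1}{\lebesgueX}\int_{\Xspace} f(y)\, dy$. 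Since this identity holds for every measurable $A$, the signed measure $\nu(A) = \int_A f\, d\lambda$ equals $c\lambda$, and uniqueness of the Radon--Nikodym derivative gives $f(x) = c$ for Lebesgue-a.e.\ $x \in \Xspace$. Combining both inclusions completes the proof.

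I do not anticipate a significant obstacle: the argument is essentially a direct transcription of the earlier lemma's proof, with the only subtle points being that (i) $\lebesgueX < \infty$ is exactly what makes indicator functions and constants live in $L^2(\Xspace)$, and (ii) the Radon--Nikodym step requires no absolute continuity assumption beyond what is automatic for $\nu(A) = \int_A f\, d\lambda$ against $\lambda$ itself.
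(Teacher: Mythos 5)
Your proof is correct and follows essentially the same route as the paper's own argument: the easy inclusion via constants being orthogonal to mean-zero functions, the test functions $\psi_A = \chi_A - \lambda(A)/\lebesgueX$ for the reverse inclusion, and the Radon--Nikodym uniqueness step to conclude $f = c$ a.e. Your added remarks on where finiteness of $\lebesgueX$ is used are accurate and make the argument slightly more self-contained than the paper's version.
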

\begin{proof}
Let
\[
V = \left\{ f \in L^2(\Xspace) : f(x) = c, \text{ a.e. on } \Xspace \right\}.
\]

We will prove that \( \meanZero^{\perp} = V \) by first showing that \( V \subseteq \meanZero^{\perp} \). Let \( f \in V \); then, for any \( \psi \in \meanZero \), we have
\[
\int_{\Xspace} f(x) \psi(x) \, dx = c \int_{\Xspace} \psi(x) \, dx = 0.
\]

It remains to show that \( \meanZero^{\perp} \subseteq V\). Let \(f \in \meanZero^{\perp} \), then \(\int_{\Xspace} f(x) \psi(x) \, dx = 0 \)
for any \( \psi(x) \in \meanZero \). In particular, we can take an arbitrary measurable set \( A \subseteq \Xspace \) and define
\[
\psi_A(x) = \chi_A(x) - \frac{\lambda(A)}{\lambda(\Xspace)}
\]

where \( \chi_A(x) \) is the indicator function over \( A \) and \( \lambda \) is the Lebesgue measure. Thus,

\[
0 = \int_{\Xspace} f(x) \psi_A(x) \, dx = \int_{\Xspace} f(x) \chi_A(x) \, dx - \int_{\Xspace} f(x) \frac{\lambda(A)}{\lambda({\Xspace)}} \, dx
\]
\begin{equation}
\label{eq:constant_rn}
\Leftrightarrow \int_A f(x) \, dx = \int_{\Xspace} f(x) \frac{\lambda(A)}{\lambda({\Xspace)}} \, dx = \lambda(A) \left(\frac{\int_{\Xspace} f(x) \, dx}{\lambda({\Xspace)}} 
\right)
\end{equation}

Define \(\mu(A) = \int_A f(x) \, dx\), which is a signed measure absolutely continuous with respect to the Lebesgue measure. 
On one hand, by the Radon-Nikodym Theorem for signed measures (\citet{follandRealAnalysis1999}; Theorem \(3.8
\)), \(f(x)\) is the Lebesgue integrable Radon-Nikodym derivative. On the other, by Equation~\ref{eq:constant_rn}:
\begin{equation}
\label{eq:constant}
\mu(A) = \lambda(A) \cdot c, \text{ for any measurable set A \(\subseteq \Xspace\)},
\end{equation}

where \(c = \left(\frac{\int_{\Xspace} f(x) \, dx}{\lebesgueX} \right)\). 
By the Lebesgue almost everywhere uniqueness of the Radon-Nikodym derivative, we have form Equation~\ref{eq:constant} and definition of \(\mu\) that 
\[
f(x) = c, \text{ a.e. } x \in \Xspace.
\]

Therefore, \(f \in V\) and \(\meanZero^{\perp} \subseteq V\).
\end{proof}

We can now proceed to prove Theorem~\ref{thm:mean_zero_frechet}, which we hope to use in proving our main Conjecture~\ref{conj:main_conjecture} in future work.

\begin{proof}[Proof of Theorem~\ref{thm:mean_zero_frechet}]
By assumption \(\mathbb{E}_{K_X + \phi} \left[ \mL(f, K_X,S)(X) - \mL(f, K_X + \phi, S)(X) \right] = 0, \text{ for all } \phi \in \mathcal{D}_{K_X}.\) i.e.,
\begin{align*}
    0 & = \int_{\Xspace} \left( \mL(f,K_X,S)(x) - \mL(f, K_X + \phi, S)(x) \right) (k(x) + \phi(x)) \, dx,\\
    & = \int_{\Xspace} \left(\mL(\cdot,0)(x) - \mL(\cdot, \phi)(x)\right) (k(x) + \phi(x)) \, dx, \\
    & = -\int_{\Xspace}\left[ D_{\phi}\mL(\cdot, 0)[\phi](x) + o(\|\phi\|_{L^2}(x))\right](k(x) + \phi(x)) \, dx,
\end{align*}
\begin{equation}
\label{eq:exp_H}
    \iff 0 = \int_{\Xspace}\left[ D_{\phi}\mL(\cdot, 0)[\phi](x) + o(\|\phi\|_{L^2})(x)\right] (k(x) + \phi(x))\, dx.
\end{equation}

Then, by Lemma~\ref{lemma:integrability}, we can split the integrals, and rewrite Equation~\ref{eq:exp_H} as:

\begin{equation*}
    \int_{\Xspace}\left(D_{\phi}\mL(\cdot, 0)[\phi](x)\right) k(x) \, dx + \int_{\Xspace}(D_{\phi}\mL(\cdot, 0)[\phi](x)) \phi(x) \, dx = - \int_{\Xspace} o(\|\phi\|_{L^2})(x)(k(x) + \phi(x))\, dx.
\end{equation*}

Since this equation must hold for all \( \phi \in \mathcal{D}_{K_X} \), we can proceed as in the proof of Lemma~\ref{lemma:existence}. 
Specifically, let \( \phi(x) = \varepsilon \psi(x) \) for sufficiently small \( \varepsilon > 0 \) and \( \psi(x) \in \meanZero \). 
Furthermore, by Lemma~\ref{lemma:integrability}, we know the following:
\( \int_{\Xspace} o(\|\phi\|_{L^2})(x)(k(x) + \phi(x)) \, dx = o(\|\phi\|_{L^2})\). 
Note also that \( o(\|\varepsilon \psi\|_{L^2}) = o(\varepsilon \|\psi\|_{L^2}) = o(\varepsilon) \) since \( \|\psi\|_{L^2} < \infty \), then the above equation simplifies to:

\begin{equation*}
\int_{\Xspace}\left(D_{\phi}\mL(\cdot, 0)[\varepsilon\psi](x)\right) k(x) \, dx + \int_{\Xspace}(D_{\phi}\mL(\cdot, 0)[\varepsilon\psi](x)) \varepsilon\psi(x) \, dx = o(\varepsilon).
\end{equation*}

Where by \(o(\varepsilon)\), we mean a constant that goes to zero faster than \(\varepsilon\). 
By the linearity of the Fr\'echet derivative, we can take \( \varepsilon \) out of the operator, divide by it, and since \( \frac{o(\varepsilon)}{\varepsilon} = o(1) \), we obtain:

\begin{equation*}
    \int_{\Xspace}\left(D_{\phi}\mL(\cdot, 0)[\psi](x)\right) k(x) \, dx + \varepsilon\int_{\Xspace}(D_{\phi}\mL(\cdot, 0)[\psi](x)) \psi(x) \, dx = o(1).
\end{equation*}

Taking \(\varepsilon \to 0\), we get that the first integral is equal to zero:
\begin{equation}
\label{eq:cond_1}
    \int_{\Xspace}\left(D_{\phi}\mL(\cdot, 0)[\psi](x)\right) k(x) \, dx = \E_{K_X}[D_{\phi}\mL(\cdot, 0)[\psi](x)] = 0, \quad \text{ for all } \psi \in \meanZero.
\end{equation}

\end{proof}

\end{document}